\newtheorem{theorem}{Theorem}[section]
\newtheorem{proposition}[theorem]{Proposition}
\newtheorem{lemma}[theorem]{Lemma}
\newtheorem{corollary}[theorem]{Corollary}
\newtheorem{definition}[theorem]{Definition}
\newtheorem{claim}[theorem]{Claim}
\newtheorem{fact}[theorem]{Fact}
\newtheorem{remark}[theorem]{Remark}
\newtheorem{notation}[theorem]{Notation}
\newcommand\rv{\boldsymbol }
\newcommand\set{\mathcal }
\newcommand\prob{\mathbb{P}}
\newcommand\reals{\mathbb R}
\newcommand\naturals{\mathbb N}
\let\vare\varepsilon
\let\implies\Rightarrow
\def \cav{ {\rm cav\, } }
\def\freq{ {\rm freq }\, }
\def \argmin{ {\rm argmin\,} }
\def \arg{ {\rm arg} }
\def \argmax{ \mathrm{ argmax\,} }
\def \qed{\hfill $\Box$}%
\def \proof{\noindent {\em Proof.}\,}%
\newcommand{\N}{\mathbb{N}}
\newcommand{\E}{\mathbb{E}}
\font\dsrom=dsrom10 scaled 1200 \def \indic{\textrm{\dsrom{1}}}
\newcommand{\UN}{\indic}
\definecolor{vert}{rgb}{0,0.6,0}
\begin{document}

\title{Persuasion with limited communication capacity\footnote{The authors thank James Best, Olivier Gossner, Frédéric Koessler, Marie Laclau, Daniel Martin, Ludovic Renou, Thomas Rivera, Jakub Steiner, Colin Stewart for stimulating discussions and comments. We thank the editor Alessandro Pavan, an anonymous associate editor and anonymous referees  for helpful comments and suggestions. We also thank participants of the 6th workshop  on Stochastic Methods in Game Theory, Erice May 2017; the 13th European Meeting on Game Theory (SING13), Paris July 2017; the XXVI Colloque Gretsi, Juan-Les-Pins, September 2017; the 10th Transatlantic Theory Workshop, Paris September 2017; the 55th Allerton Conference, Monticello, Illinois, October 2017. We  thank the Institute Henri Poincaré for hosting numerous research meetings.}}

\author{Maël Le Treust\footnote{ETIS UMR 8051, Université Paris Seine, Université Cergy-Pontoise, ENSEA, CNRS, F-95000, Cergy, France; \textsf{mael.le-treust@ensea.fr}; \textsf{sites.google.com/site/maelletreust/}. This research has been conducted as part of the project Labex MME-DII (ANR11-LBX-0023-01). Maël Le Treust gratefully acknowledges financial support from INS2I CNRS, DIM-RFSI, SRV ENSEA, The Paris Seine Initiative and IEA Cergy-Pontoise.}\,  and Tristan Tomala\footnote{HEC Paris and GREGHEC, 1 rue de la Libération, 78351 Jouy-en-Josas, France; \textsf{tomala@hec.fr}; \textsf{sites.google.com/site/tristantomala2}. 
Tristan Tomala gratefully acknowledges the support the HEC foundation and ANR/Investissements d'Avenir under grant ANR-11-IDEX-0003/Labex Ecodec/ANR-11-LABX-0047.}}

\date{Accepted in JET, September 1, 2019}

\maketitle
\begin{abstract}
We consider a Bayesian persuasion problem where the persuader and the decision maker communicate through an imperfect channel that has a fixed and limited number of messages and is subject to exogenous noise. We provide an upper bound on the payoffs the persuader can secure by communicating through the channel. We also show that the bound is tight, i.e., if the persuasion problem consists of a large number of independent copies of the same base problem, then the persuader can achieve this bound arbitrarily closely by using strategies that tie all the problems together. We characterize this optimal payoff as a function of the information-theoretic capacity of the communication channel.

\end{abstract}

\noindent\textbf{Keywords}: Bayesian persuasion, communication channel, mutual information.

\noindent \textbf{JEL Classification Numbers}: C72, D82, D83.

\newpage
\section{Introduction}

In modern internet societies, pieces of information are repeatedly and continuously disclosed to decision makers by informed agents. Information transmission is affected by at least two sources of friction. First, the sender and the receiver of a given message may have nonaligned incentives, in which case the sender might be unwilling to transmit truthful information. Second, communication between agents is often imperfect. The sender and the receiver may have time constraints to write or read messages, forcing the sender to summarize his arguments and making him unable to convey all the details. Further, there might be  discrepancies between the informational content of a message that is intended by the sender and the one understood by the receiver. For instance, if the mother tongue of the sender and of the receiver are different, there are possible translation errors \citep[See][]{Blume}. Additionally, messages travelling in a network of computers might be subject to random shocks, internal errors or protocol failures. Studying the effect of noise in communication channels is the starting point of information theory \citep{Shannon48}. 

Our paper aims to study the following questions. How does imperfect communication reduce the possibilities of persuasion in a sender-receiver interaction? When the sender communicates many pieces of information, to what extent does tying the pieces together help in overcoming the communication limitations?\\

We consider a sender and a receiver who communicate over an imperfect channel and are engaged in a series of $n\ge1$ persuasion problems. The sender observes $n$ independent and identically distributed pieces of information and sends $k\ge1$ messages to the receiver. Messages are sent through a channel that consists of two finite sets $X,Y$ of respectively inputs and outputs messages and of a transition probability $Q$ from $X$ to $Y$ such that when the sender chooses input message $x$, the receiver receives output message $y$ with probability $Q(y|x)$. Upon receiving $k$ output messages from the channel, the receiver  chooses $n$ actions, one for each problem. Payoffs are additively separable across persuasion problems. We assume that the sender is able to commit to a disclosure strategy that maps sequences of pieces of information to distributions of sequences of input messages.

We study the optimal average payoff secured by the sender by committing to a strategy. We give an upper bound on this optimal payoff and show that this bound is achieved asymptotically when the numbers $n$ and $k$ grow large.  To prove this latter statement, we borrow techniques from information theory, namely, the coding and decoding schemes of \cite{Shannon48, Shannon59}. This machinery allows to transmit a sequence of messages over a noisy channel with the property that the receiver recovers almost all messages correctly. The information theoretic literature typically considers an obedient receiver who calculates the decoded messages and takes them at face value. In the persuasion game framework, the   receiver is strategic and may not follow any prescribed scheme. Rather, the receiver takes into account the strategy of the sender and the received outputs, calculates its Bayesian belief about the sequence of states, and chooses a sequence of actions that maximizes its payoff. Our technical contribution is to construct a strategy of the sender for which we are able to estimate and to control those Bayesian beliefs in order to ensure that the strategic receiver  chooses a desired sequence of actions.

Our upper bound is the value of an {\em optimal splitting problem with information constraint}, which represents the best payoff that the sender can achieve by sending a message, subject to the constraint that the mutual information between the state and the message is no more than the capacity of the channel. We show that this value is given by the concave closure of the payoff function of the sender, subject to a constraint on the entropy of posterior beliefs. This is also given by the concave closure of a modified payoff function, where the sender pays a cost proportional to the mutual information between the state and the message.\\

%%%%%%%%%%%%%

\subsection{Motivating example.}
There are relevant situations where a sender discloses information about a large number of independent state parameters. For instance, one can think of testing product quality: a firm has many items to sell, which are ex-ante identical, and the authorities (e.g., the FDA for drugs)  design quality tests \footnote{See e.g., \citealp{perez}.}. One can also think about designing and grading exams to assess the quality of a large number of students\footnote{See \citealp{BoleslavskyCotton} for a model of grading standards through Bayesian persuasion.}.
%Quality Control Editor: Please be consistent in whether or not a space is used before an in-text citation.

As an example, consider an innovating firm that has several projects to be financed by investors. The board of investors audits the firm, which is given a limited amount of time to present all the projects.  How to   best structure arguments in order to get the maximum number of projects approved?

To be specific, let us assume that all projects are ex-ante identical and equally likely to be of   good or bad quality. When a project is approved, it yields a positive return of $+1$ to the investors if it is good, and a negative return of $-7$ if it is bad; rejecting a project yields a  payoff of $0$. The objective of the firm is to get a maximum number of projects approved.    

Suppose that the firm commits to an information disclosure mechanism,  i.e.,  distributions of messages conditional on states   \citep[as in][]{KamenicaGentzkow11} and faces no restriction on the number of messages. To invest, the board of investors must be persuaded that the project is good with probability at least $7/8$. Thus, for each project, the firm would optimally draw a good message $g$ or a bad message $b$ with the following probabilities:
$$\prob(g \mid \text{project is good})=1,\hskip0.5cm \prob(g \mid  \text{project is bad})=1/7.$$
This way, the belief that the project is good upon receiving the good message is as follows: $$\prob( \text{project is good}\mid g)=7/8,$$ and the project is accepted with probability $4/7$ (see Section \ref{mainex}).

Now, suppose that the auditing board gives the firm only half the time it would require to talk about all projects. Namely, there is an even number $n$ of projects, but the firm has only $n/2$ messages available. 

A simple strategy the firm can adopt would be to select half of the projects, focus on them, and  communicate optimally for each of them. With this strategy, half of the projects are accepted with probability $4/7$ each, so  in expectation, the average number of accepted projects is $2/7$. This is not optimal, and a better strategy would be to pair projects by two and to draw one message $g,b$ for each pair in the following way:
$$\prob(g \mid \text{both projects  are good})=1,\hskip0.5cm \prob(g \mid  \text{both projects are bad})=0,$$
$$\prob(g \mid  \text{only one project is good})=1/6.$$
The total probability of $g$ is $1/3$ and upon observing this message, the beliefs about quality are as follows:
$$\prob( \text{both projects are good}\mid g)=6/8,$$
$$ \prob( \text{only project 1 is good}\mid g)=\prob( \text{only project 2 is good}\mid g)=1/8.$$
Therefore, each project is believed to be good with probability $7/8$ and both projects are accepted when $g$ is received. Thus, the expected average number of accepted projects is $1/3>2/7$. 

We thus see that tying projects together improves upon communication about each project separately. Suppose that the number of projects is large. Is it possible to find a more complex strategy that further improves the payoff?

Our main result, Theorem \ref{mainthm}, gives an upper bound on the expected average number of accepted projects when the number of messages is half the number of projects. The upper bound is tight: the optimal value approaches it as the number of project increases. In this example, the upper bound is $\lambda^*$ where $(\lambda^*, p^*)$ is the unique solution in $[0,1]\times [0,\frac12]$ of the system of equations:
$$\frac12=\lambda^*\frac78+(1-\lambda^*)p^*, \hskip0.5cm \frac12=\lambda^* H\bigg(\frac78\bigg)+(1-\lambda^*)H(p^*),$$
where $H(p)=-p\log(p)-(1-p)\log(1-p)$ is the entropy function.  The first equation is Bayes plausibility \citep{KamenicaGentzkow11} coming from Bayes' rule, saying that the expected posterior belief is the prior belief. The second equation requires the expected entropy of the posterior to be $\frac12$, which means that the {\em mutual information} between the quality of the project and the message sent to the receiver is equal to the number of messages per project that the firm is able to transmit.

Numerically $\lambda^*\approx 0.519<\frac47\approx 0.571$. Thus, for large $n$, the sender can achieve a payoff better than $1/3$ but bounded away from the payoff obtained with unrestricted communication.

%%%%%%%%%%%%%%%%%

\subsection{Related literature}\label{sec:literature}
%Quality Control Editor: Please ensure that a consistent punctuation, capitalization, and format is used for headings throughout the manuscript.

We now describe the relationships between our contribution and the literature. This paper is at the junction of Bayesian persuasion and information theory. 

The traditional game theoretic approach to strategic information disclosure assumes perfect communication and analyzes in isolation the problem of sending a single message. These are the well-known sender-receiver games where an informed player, the sender, communicates once with a receiver who takes an action. In the {\em cheap talk} version of this game, the message sent by the sender is costless and unverifiable; see for instance the seminal paper of  \cite{CrawfordSobel1982StrategicInformation}. In the {\em Bayesian persuasion} game \citep{KamenicaGentzkow11}, the sender chooses verifiably an information disclosure device prior to learning his information.  That is, the sender is an {\em information designer}  \citep{Taneva16, BergemannMorris16, BergemannMorris17} who chooses, without knowledge of the state, the information or signaling structure which releases information to the decision maker. 

In parallel, information theory considers agents with perfectly aligned interests and analyzes the {\em rate} of information transmission.  The sender observes an information {\em flow}, which is a stochastic process, and sends messages to the receiver over an imperfect channel represented by a transition probability from input to output messages. Truthful information transmission is the common goal of the sender and the receiver. The {\em rate of information transmission} is the average number of correct guesses made by the receiver. 
Shannon's theory \citep{Shannon48, Shannon59} determines whether a source of information can be transmitted over the channel with arbitrarily small probability of error and shows that the rate of the source of information has to be smaller than {\em the capacity of the channel} defined as the maximal {\em mutual information} between input and output messages.\\

Our model of persuasion has two essential features. The sender and the receiver are engaged in a large number of identical copies of the same game and communication is restricted to an imperfect channel. As \cite{KamenicaGentzkow11}, we consider the payoff obtained by the sender as a function of the belief of the receiver, when the receiver takes optimal actions. 
With unrestricted communication, that is on a perfect channel with large set of inputs, the optimal payoff for the sender is given by the concave closure of this function. Then, solving any number of identical games amounts to solving each copy separately.  With a single copy, the game of persuasion with a noisy channel is studied by \cite{TsakasTsakas2017} who prove the existence of optimal solutions and show monotonicity of the sender's payoff with respect to the noise of the channel.
Considering many copies of the base game {\em and\,} restricted communication, we show that {\em linking independent problems together} yields a better payoff to the sender: the optimal strategy correlates all messages with the state parameters of all problems. In this respect, our work bears some similarity with \cite{JacksonSonnenschein07}, who showed that a mechanism designer can achieve more outcomes in an incentive compatible manner by linking many identical problems together. \\

The optimal payoff that we characterize is related to models where the cost of information is measured by mutual information. Such information costs have been introduced in the literature on rational inattention by \cite{Sims03}, \cite[See also][]{MatejkaMcKay, Martin17, MatejkaSteinerStewart}. The use of mutual information has been axiomatized in \cite{MorrisStrack} and \cite{HebertWoodford}. In the context of persuasion, \cite{GentzkowKamenica14} consider a model where the sender gets his payoff from the game, minus a cost that is proportional to the mutual information between the state and the message; see also \cite{Matyskova}. With Lagrangian methods, we find that the value of our optimal splitting problem with information constraint is the concave closure of the payoff function, net of such an information cost, a similar concavification problem is found in \cite{CaplinDean}

Different from those papers, the mutual information is not a primitive of our model. Our finding is that the noise and limitations in communication induce a {\em shadow cost} measured by the mutual information.\\

Entropy and mutual information appear endogenously in several papers on repeated games 
where players have bounded rationality \citep{NeymanOkada99, NeymanOkada00}, are not able to freely randomize their actions \citep{GossnerVieille02}, or observe actions imperfectly \citep{GossnerTomala06, GossnerTomala07}.  A related paper is \cite{GossnerHernandezNeyman06}, henceforth GHN, who also consider a sender-receiver game. In GHN, the sender and the receiver play an infinitely repeated game with common interests: both the sender and the receiver want to choose the action that matches the state. The sender knows the infinite sequence of states and can communicate with the receiver only through his actions. GHN characterize the best average payoff that the sender (and the receiver) can achieve. Their solution resembles ours: the optimal value is the payoff obtained when the sender can send a direct message to the receiver, subject to an information constraint. 

There are important differences with our work. First, GHN study a cheap talk game with common interests. By contrast, we do not assume common interests and we assume commitment power for the sender. Second, GHN is a truly repeated game model: at any given time $t$, both players choose actions and the information of the receiver at this time consists of past actions. In our case, the sender knows a finite sequence of states and chooses a finite sequence of input messages, the receiver observes a finite sequence of output messages and chooses a sequence of actions. This is why, rather than seeing our model as a repeated game of persuasion, we view it as a {\em spatial} model with identical copies of the same problem coexisting at the same time.  This also explains why the number of copies $n$ need not be equal to the number of times $k$ the channel is used by the sender. Our result characterizes the optimal payoff as a function of the ratio of the number $n$ of pieces of information to the number $k$ of channel uses. In particular, this allows us to analyze cases where the channel is perfect (i.e. not subject to random noise) but with limited input size: there are fewer messages than states or actions.

Cheap talk with a noisy channel has been studied by \cite{Blume} who show that the presence of noise is possibly welfare improving. Such a phenomenon cannot happen in the persuasion context as the sender could commit to replicate the noise. Relatedly, \cite{HernandezVonStengel14} consider a sender-receiver game with common interests over an imperfect channel. In that paper, there is only one state known by the sender and one action taken by the receiver, while the channel can be used a fixed number of times.  \cite{HernandezVonStengel14} characterize all the Nash equilibria of this game and study the differences with Shannon's coding methods.  Again, we do not assume common interests and   assume commitment power for the sender.  More importantly, our focus is different and more in line with GHN: we do not treat a single persuasion problem but a large sequence of them and use information theory to study the asymptotics of the problem. \\

Our work is also related to some information theoretic literature. Following GHN, a line of papers study {\em empirical coordination} between a sender and a receiver \citep{CuffPermuterCover10, Cuff(ImplicitCoordination)11, LeTreust(EmpiricalCoordination)17}. Assuming common interest between the sender and the receiver, those papers characterize the asymptotic empirical distributions of (states, messages, actions) which are achievable, given the information structure and the noisy channel. 
The closest paper in this literature is \cite{LeTreustTomala(Allerton)16} where we have
studied empirical coordination between a persuader and a decision maker induced by approximate equilibria as the number of repetitions tends to infinity. Recently, \cite{AkyolLangbortBasarIEEE17} have considered the problem of Bayesian persuasion in a model with Gaussian states and channel and quadratic functions as in \cite{CrawfordSobel1982StrategicInformation}.
\\

The remainder of this paper is organized as follows. The model is described in Section \ref{sec:model} and we state our main results in Section \ref{sec:main}. In Section \ref{mainex}, we illustrate our results with a detailed example.
We provide an extension in Section \ref{sec:extension} and concluding comments in Section \ref{sec:conclusion}. Proofs are in the Appendix.

\section{Model}\label{sec:model}

\subsection{The persuasion problem} 
In this model, we consider a sender ($S$) and a receiver ($R$) engaged in a series of identical persuasion problems and where the communication technology is fixed exogenously.

There is a finite state space $\Omega$ endowed with a prior probability distribution $\mu$, a finite action set $A$ for the receiver, and each player $i=S,R$ has a payoff function $u_i:\Omega\times A\to \reals$. There is also a fixed communication channel $(X,Y,Q)$, where $X,Y$ are finite sets of messages and $Q:X\to \Delta(Y)$ is a transition probability from $X$ to $Y$ (henceforth $\Delta(S)$ denotes the set of probability distributions over the finite set $S$).

Given two integers $n,k$, we define a repeated persuasion problem where the uncertainty is about a sequence  $\omega^n=(\omega_1,\dots,\omega_n)$ drawn i.i.d. from $(\Omega,\mu)$. The receiver chooses a sequence of actions $a^n=(a_1,\dots, a_n)$ and the payoff for player $i=S,R$ is as follows:
$$\bar u_i(\omega^n,a^n)=\frac1n\sum_{t=1}^nu_i(\omega_t,a_t).$$
To disclose information, the sender can use the channel $k$ times by choosing a sequence of input messages $x^k=(x_1,\dots, x_k)$. The channel then draws a sequence of output messages $y^k$ with probability $Q^k(y^k|x^k)=\prod_{t=1}^kQ(y_t|x_t)$ and sends it to the receiver.

This defines the following  persuasion game $\Gamma(n,k)$:
\begin{enumerate}
\item  The sender chooses a strategy $\sigma:\Omega^n\to \Delta(X^k)$ which is announced to the receiver.
\item A sequence of states $\omega^n$ is drawn  i.i.d.~from the prior $\mu$, a sequence of  input messages $x^k$ is drawn with probability $\sigma(x^k|\omega^n)$, a sequence of  output messages $y^k$ is drawn with probability $Q^k(y^k|x^k)$  and is observed by the receiver.
\item The receiver chooses a sequence of actions $a^n$.
\end{enumerate}
Then, player $i=S,R$ gets the average payoff $\bar u_i(\omega^n,a^n)$.\\

Notice that for $n=k=1$, this is the model of \cite{TsakasTsakas2017} of a single persuasion problem with noisy communication. An interesting particular case is given by {\em perfect} channels where $X=Y$ and $Q(y|x)=\indic_{\{y=x\}}$. In such a case, the only limitation is given by the number of messages. If we let $n= k=1$ and choose a perfect channel with sufficiently many messages $|X|=|Y|\geq |\Omega|$, the model encompasses the standard persuasion game of \cite{KamenicaGentzkow11}.

\subsection{Optimal robust payoff} 
As a solution concept, we study the best payoff the sender can secure, regardless of which best reply is chosen by the receiver.  A strategy of the receiver is a mapping $\tau:Y^k\to A^n$. Knowing $\sigma$, the receiver chooses a best reply $\tau$, which maximizes the expected payoff. That is, for each $y^k$: 
$$\tau(y^k)\in\underset{a^n\in A^n}\argmax\sum_{\omega^n,x^k}\mu^n(\omega^n)\sigma(x^k|\omega^n)Q(y^k|x^k)\bar{u}_R(\omega^n,a^n).$$
Denote $BR(\sigma)$ the set of best replies of the receiver to the strategy $\sigma$.

\begin{definition}
The {\em optimal robust payoff} of the sender in this problem is as follows:
$$U^*_S(\mu^n,Q^k)=\sup_\sigma\min_{\tau\in BR(\sigma)}\sum_{\omega^n,x^k,y^k}\mu^n(\omega^n)\sigma(x^k|\omega^n)Q^k(y^k|x^k)\bar u_S(\omega^n,\tau(y^k)).$$
\end{definition}

This definition differs from the conventional solution to Bayesian persuasion of \cite{KamenicaGentzkow11} where the receiver takes the best reply which is preferred by the sender. Our choice is motivated by robustness; we ask the solution to be robust to the way the receiver breaks ties\footnote{A similar approach is followed by \cite{pavan} and \cite{mathevet}.}. We stress that this choice does not matter for generic problems. Indeed, with slight perturbations of the payoff function of the receiver, we can make sure that indifferences occur only at interior beliefs. When this is the case, the sender can slightly change his strategy in order to avoid the indifference region.

The goal of this paper is to give an upper bound for the optimal robust payoff and to characterize its limit when $n$ and $k$ tend to infinity.

\subsection{Optimal splitting problem with information constraint} 
To state our main results, we introduce some definitions.
\begin{definition} 
A splitting of $\mu\in \Delta(\Omega)$ is a finite family $(\lambda_m ,\nu_m)_m$, where for each $m$, $\nu_m\in \Delta(\Omega)$, $\lambda_m\in[0,1]$, $\sum_m\lambda_m = 1$ such that:
\begin{align}
\mu=&\sum_m\lambda_m \nu_m.
\end{align}
\end{definition}

A splitting of $\mu$ is a distribution of posterior beliefs whose average equals the prior. An ``information structure'' which draws a message $m$ with probability $\prob(m|\omega)$ in state $\omega$, induces a splitting $(\lambda_m,\nu_m)_m$ with $\lambda_m=\sum_{\omega'}\mu(\omega')\prob(m|\omega')$  and $\nu_m(\omega)=\frac{\mu(\omega)\prob(m|\omega)}{\sum_{\omega'}\mu(\omega')\prob(m|\omega')}$. From the splitting lemma 
%Quality Control Editor: Please ensure that the term "lemma" is consistently capitalized or not throughout the manuscript.
\citep{AM95} or Bayes plausibility \citep{KamenicaGentzkow11}, for each decomposition of the prior belief into a convex combination of posterior  $\mu=\sum_m\lambda_m \nu_m$, the splitting $(\lambda_m,\nu_m)_m$ is induced by some information structure, for example, $\prob(m|\omega)=\lambda_m\nu_m(\omega)/\mu(\omega)$.

For each posterior belief $\nu\in\Delta(\Omega)$, let the set of optimal actions of the receiver be:
$$A^*(\nu)=\underset{a\in A}\argmax\sum_\omega \nu(\omega)u_{R}(\omega,a).$$
We denote by $u^*_S(\nu)=\min_{a\in A^*(\nu)}\sum_\omega \nu(\omega)u_{S}(\omega,a)$  the {\em robust payoff} of the sender at the belief $\nu$, i.e., the payoff of the sender when the receiver chooses the optimal action, which is {\em worst} for $S$.

We now introduce tools borrowed from information theory; the reader is referred to \cite{cover-book-2006}. 
\begin{definition} 
\begin{enumerate}
\item The (Shannon) entropy of a probability distribution $q\in\Delta(S)$ over a finite set $S$ is as follows:
$$H(q) = -\sum_{q} q(s)\log q(s),$$
where the logarithm has basis $2$ and $0\log0=0$. 
\item The mutual information between two random variables $(\rv x,\rv y)$, drawn from the joint probability distribution $p(x) Q(y|x)$ is as follows:
 $$ I_{p,Q}(\rv x;\rv y)=H\Big(\sum_x p(x)Q(\cdot |x)\Big)-\sum_x p(x)H(Q(\cdot |x))$$

\item The capacity of the channel $(X,Y,Q)$ is as follows:
$$C(Q)=\max_{p\in\Delta(X)}   I_{p,Q}(\rv x;\rv y).$$

\end{enumerate}
\end{definition}

The channel capacity $C(Q)$ is the maximal mutual information between two random variables $(\rv x,\rv y)$, respectively the input and output of the channel, drawn from the joint probability distribution $p(x) Q(y|x)$, where the maximum is over the marginal distribution $p(x)$. 
Intuitively, this is the maximal number of bits of information that can be transmitted reliably through the channel (see \citealp{cover-book-2006}).

Equipped with these tools, our main definition is the following.
\begin{definition} For any $c\ge0$, the optimal splitting problem with information constraint is: \begin{eqnarray*}
V(\mu,c)=&\sup & \sum_m\lambda_m u^*_S(\nu_m)\\
&\mathrm{ s.t.} &  \sum_m\lambda_m\nu_m=\mu,\\
&\mathrm{ and } &  H(\mu)-\sum_m\lambda_m H(\nu_m)\leq c.
\end{eqnarray*}
\end{definition}
This is the best payoff that the sender can secure by choosing a splitting of the prior belief (i.e., an information structure) under the constraint that the expected reduction of entropy does not exceed the capacity $c$ of the channel. The entropy reduction $ H(\mu)-\sum_m\lambda_m H(\nu_m)$ is nonnegative and is the mutual information between a random state $\rv \omega$ and a random message $\rv m$, drawn from the joint distribution  $\big(\lambda_m\nu_m(\omega)\big)_{(\omega,m)}$. The interpretation is thus that the sender optimizes over a set of information structures that convey bounded information about the state.

Notice that $V(\mu,c)$ is less than or equal to the concave closure (or concavification) of $u^*_S$ at $\mu$ which is the unconstrained supremum $ \cav u^*_S(\mu):=\sup\big\{\sum_m\lambda_m u^*_S(\nu_m) :  \sum_m\lambda_m\nu_m=\mu\big\}.$

\section{Results}\label{sec:main}
\subsection{The main result}
The main result of this paper shows that the value of the optimal splitting problem with information constraint provides an upper bound to the optimal robust payoff and that the bound is achieved asymptotically.
\begin{theorem}\label{mainthm}
\begin{enumerate}
\item The optimal robust payoff of the sender is no more than the value of the optimal splitting problem with information constraint. For each pair of integers $n,k$:
$$U^*_S(\mu^n,Q^k)\leq V\Big(\mu,\frac{k}{n}C(Q)\Big).$$

\item  The optimal robust payoff of the sender converges to the value of the optimal splitting problem with information constraint in the following sense. For each $r\in[0,+\infty]$, for each pair of sequences of integers $(k_j,n_j)_{j\in\naturals}$ such that $\lim\limits_{j\to\infty}\max(n_j,k_j)= \infty$ and 
 $\lim\limits_{j\to\infty}\frac{k_j}{n_j}=r$, we have:
$$\lim_{j\to\infty}U^*_S(\mu^{n_j},Q^{k_j}) = V\big(\mu,r C(Q)\big).$$

\end{enumerate}
\end{theorem}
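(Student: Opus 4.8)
The plan is to prove part~1 first --- it is short --- since it gives the ``$\limsup$'' half of part~2 almost for free, and then to concentrate on the achievability (``$\liminf$'') half, which carries the real difficulty.

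\emph{Part~1 and the $\limsup$ bound.} Fix a sender strategy $\sigma:\Omega^n\to\Delta(X^k)$ and let $\nu^{y^k}_t\in\Delta(\Omega)$ be the receiver's Bayesian posterior on $\omega_t$ after seeing $y^k$. Since $\bar u_R$ and $\bar u_S$ are additively separable, the set of best replies to $\sigma$ is $\prod_{t}A^*(\nu^{y^k}_t)$ for each $y^k$, so the worst one for the sender yields exactly $\tfrac1n\sum_t\mathbb{E}\big[u^*_S(\nu^{y^k}_t)\big]$. For every $t$, $(\mathbb{P}(y^k),\nu^{y^k}_t)_{y^k}$ is a splitting of $\mu$ by Bayes plausibility. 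Introduce an auxiliary uniform index $\rv T$ on $\{1,\dots,n\}$ independent of everything else and regard $\rv m=(\rv T,\rv y^k)$ as a message about $\rv\omega_{\rv T}$: this is again a splitting of $\mu$, its value is $\tfrac1n\sum_t\mathbb{E}[u^*_S(\nu^{y^k}_t)]$, and its entropy reduction equals $I(\rv\omega_{\rv T};\rv T,\rv y^k)=\tfrac1n\sum_t I(\rv\omega_t;\rv y^k)$ (independence of $\rv T$). Because $\rv\omega_1,\dots,\rv\omega_n$ are independent, $\sum_t I(\rv\omega_t;\rv y^k)\le I(\rv\omega^n;\rv y^k)$; the Markov chain $\rv\omega^n\to\rv x^k\to\rv y^k$ and the memorylessness of the channel give $I(\rv\omega^n;\rv y^k)\le\sum_t I(\rv x_t;\rv y_t)\le kC(Q)$. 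Hence this splitting is feasible for $V(\mu,\tfrac kn C(Q))$, which is part~1. For the $\limsup$ half, one checks that $c\mapsto V(\mu,c)$ is nondecreasing and concave (a convex combination of splittings feasible for $c_1$ and for $c_2$ is feasible for the corresponding combination of $c_1,c_2$), hence continuous on $(0,\infty)$; therefore $U^*_S(\mu^{n_j},Q^{k_j})\le V\!\big(\mu,\tfrac{k_j}{n_j}C(Q)\big)\to V(\mu,rC(Q))$ whenever $rC(Q)>0$. The degenerate cases $rC(Q)=0$ (either $C(Q)=0$, where the output is independent of the state and $U^*_S\equiv u^*_S(\mu)=V(\mu,0)$, or $r=0$) and $r=\infty$ (the information constraint becoming vacuous, $V(\mu,\infty)=\cav u^*_S(\mu)$) are treated separately and more elementarily.

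\emph{The $\liminf$ bound: the construction.} Fix $\varepsilon>0$ and assume $0<rC(Q)<\infty$. By continuity of $V(\mu,\cdot)$ pick a finite splitting $(\lambda_m,\nu_m)_{m\in M}$ of $\mu$ with $\sum_m\lambda_m u^*_S(\nu_m)\ge V(\mu,rC(Q))-\varepsilon$ and with strict slack in the information constraint: $I_0:=\sum_m\lambda_m D(\nu_m\|\mu)$ (the entropy reduction) satisfies $I_0<rC(Q)$. Fix a rate $R_1$ with $I_0<R_1<rC(Q)$; then for all large $j$ the number $\tfrac{n_j}{k_j}R_1$ lies below $C(Q)$. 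Write $P(\omega,m)=\lambda_m\nu_m(\omega)$. The sender's strategy $\sigma_j$ is built in two layers. \emph{Coordination.} Draw once and fix a codebook $\mathbf m^{n}(1),\dots,\mathbf m^{n}(2^{nR_1})\in M^{n}$ of i.i.d.\ $\lambda^{\otimes n}$ sequences; on observing $\omega^{n}$, the sender lists all indices $\ell$ with $(\omega^{n},\mathbf m^{n}(\ell))$ jointly $P$-typical and picks one, $\ell(\omega^{n})$, uniformly at random among them. Since $R_1>I_0$, the covering lemma makes this list non-empty with probability $\to1$, and a second-moment estimate shows its cardinality concentrates around $2^{n(R_1-I_0)}$. \emph{Transmission.} Encode $\ell(\omega^{n})$ by a block code of rate $\tfrac{n_j}{k_j}R_1<C(Q)$ \citep{Shannon48}; the receiver's joint-typicality decoder $\psi(y^{k})$ then recovers $\ell(\omega^{n})$ with probability $\to1$. (Both codebooks, drawn at random, are fixed at a realization meeting the expected guarantees.)

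\emph{The $\liminf$ bound: controlling the receiver's belief (the main obstacle).} The receiver is strategic, so the crux is to show that its Bayesian posterior --- not some prescribed decoded message --- is, on an event of probability $\to1$ over $(\omega^{n},y^{k})$, such that the marginal on each $\omega_t$ lies within a preassigned $\eta$ of $\nu_{m_t}$, where $m^{n}=\mathbf m^{n}(\psi(y^{k}))$. The argument chains two estimates. (i)~Because the prior over indices is near-uniform and the channel code has vanishing error, for all but a vanishing fraction of $y^{k}$ the posterior over indices puts mass $1-o(1)$ on $\psi(y^{k})$; since $\rv\omega^{n}\to\ell\to\rv y^{k}$ is a Markov chain, for such $y^{k}$ the posterior on $\omega^{n}$ is within $o(1)$ in total variation of $\mathbb{P}\big(\,\cdot\mid\ell(\rv\omega^{n})=\psi(y^{k})\big)$. (ii)~Using the concentration of the list cardinality, this last law is within $o(1)$ of $\mu^{\otimes n}$ conditioned on joint $P$-typicality with the fixed sequence $\mathbf m^{n}(\psi(y^{k}))$, and conditional typicality places its marginal on $\omega_t$ within $\eta$ of $\nu_{m_t}$. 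Granting this, upper hemicontinuity of $\nu\mapsto A^*(\nu)$ (valid since $A$ is finite) furnishes an $\eta$, uniform over the finitely many $m\in M$, such that any belief within $\eta$ of $\nu_m$ has best-reply set $\subseteq A^*(\nu_m)$; hence every best reply $\tau$ satisfies $\tau(y^{k})_t\in A^*(\nu_{m_t})$, so the worst-case coordinate-$t$ payoff is at least $u^*_S(\nu_{m_t})-\eta\|u_S\|_\infty$. Averaging over $t$, and using that $P$-typicality forces the empirical frequency of $m^{n}$ close to $\lambda$, the average payoff on the good event is at least $\sum_m\lambda_m u^*_S(\nu_m)-O(\eta)-O(\varepsilon)\ge V(\mu,rC(Q))-O(\eta)-O(\varepsilon)$, while the complementary event has vanishing probability and costs at most $\|u_S\|_\infty$. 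Letting $j\to\infty$ and then $\eta,\varepsilon\to0$ gives $\liminf_j U^*_S(\mu^{n_j},Q^{k_j})\ge V(\mu,rC(Q))$, and combined with the $\limsup$ bound this is the theorem. The genuinely delicate step is (ii): estimating $\mathbb{P}(\,\cdot\mid\ell(\rv\omega^{n})=\ell)$ precisely enough to read off the receiver's posterior marginals is exactly what forces the random-codebook construction together with the uniform choice among matching indices, rather than a Shannon-style deterministic scheme.
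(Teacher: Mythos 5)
Your plan reproduces the paper's architecture almost exactly for Part~1 (the auxiliary uniform index $\rv T$, the splitting $(\rv T,\rv y^k)$, and the chain $nI(\rv\omega_{\rv T};\rv T,\rv y^k)\le I(\rv\omega^n;\rv y^k)\le I(\rv x^k;\rv y^k)\le kC(Q)$ are precisely the paper's steps), and for the skeleton of Part~2 (random codebook, covering lemma for the coordination layer, packing lemma for transmission, then conversion of posterior control into payoff control using that posteriors close to $\nu_m$ force best replies inside $A^*(\nu_m)$). You have also correctly isolated the genuine crux: it is not the standard decoding error but the control of the receiver's \emph{Bayesian} posterior $\prob(\omega_t\mid y^k)$.

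Where you genuinely depart from the paper is in how that posterior is controlled. The paper does not pass through total variation at all; it directly estimates the averaged Kullback--Leibler divergence, $\E_\sigma\bigl[\tfrac1n\sum_{t}D(\nu^\sigma_{t,y^k}\|\nu_{m_t})\mid E_\delta=0\bigr]$, via a chain of entropy identities ending in the ``state-masking'' bound $I(\rv\omega^n;\rv m^n\mid E_\delta=0)\le H(\rv m^n)\le\log_2|J|=nR$, and then invokes Markov's inequality (Lemma~\ref{coro:ProbaB} and Lemma~\ref{coro:CodingScheme}). Your route instead shows (i) that the posterior over codeword indices concentrates on the decoded index, and (ii) that the law of $\omega^n$ given the index is, via a second-moment concentration of the list cardinality, close in total variation to $\mu^{\otimes n}$ conditioned on joint typicality with the decoded codeword, whose coordinate marginals are near $\nu_{m_t}$. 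Both routes should reach the same endpoint, but they trade different technical burdens. The paper's K-L argument is shorter and sidesteps any statement about the list-size distribution; your argument is closer in spirit to distributed channel synthesis and makes the probabilistic picture more transparent, but (ii) is heavier than you acknowledge: the list-size concentration must be shown to hold \emph{conditionally} on membership in the decoding cell $\{\omega^n : (\omega^n,\mathbf m^n(\ell^*))\in A_\delta\}$, a set of exponentially small prior mass, and the passage from ``for most $\omega^n$, $|L(\omega^n)|\approx 2^{n(R_1-I_0)}$'' to ``the two conditional laws are close in TV'' needs the concentration to hold under the reweighted measure $\mu^{\otimes n}(\cdot)/|L(\cdot)|$ on that cell --- you have the right ingredients (independence of the other codewords from $m^n(\ell^*)$, uniform random choice of the matching index), but the bookkeeping is genuinely more delicate than the paper's route. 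A smaller, and arguably cleaner, departure: you replace the paper's Lemma~\ref{lemma:DenseSplittingSubset} (dense restriction to splittings with singleton $\widetilde A(\nu_m)$) with upper hemicontinuity of $\nu\mapsto A^*(\nu)$, which yields the one-sided payoff bound that achievability actually needs and dispenses with the paper's density and lower-semicontinuity discussion.
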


On the one hand, this result shows communication restrictions limits the payoff that can be achieved through Bayesian persuasion. On the other hand, it quantifies the extent to which repeating the same problem and linking the copies together helps in overcoming those restrictions.

\subsubsection{Sketch of proof} We give an intuition for the main arguments of the  proof; the technical details are in the appendix.
 
\paragraph{First point, upper bound.} The argument is that regardless of which strategies are used, the mutual information between the states and the messages to the receiver cannot exceed the capacity of the channel. 

For simplicity, consider the case $n=k=1$ where the result says  $U^*_S(\mu, Q)\leq V(\mu, C(Q))$. Take any strategy $\sigma$ of the sender. This induces the splitting
$\mu=\sum_y\prob_\sigma(y)\nu_y$ where $\prob_\sigma(y)=\sum_{\omega,x}\mu(\omega)\sigma(x|\omega)Q(y|x)$ is the probability of the message $y$ and $$\nu_y(\omega)=\prob_\sigma(\omega|y)=\frac{\sum_x\mu(\omega)\sigma(x|\omega)Q(y|x)}{\prob_\sigma(y)}$$
is the posterior belief conditional on $y$.  The mutual information of this splitting is:
$$H(\mu)-\sum_y\prob_\sigma(y) H(\nu_y):=I(\rv\omega;\rv y)$$
where $(\rv \omega,\rv x,\rv y)$ denotes a random triple of state, input and output messages drawn from the joint distribution $\mu(\omega)\sigma(x|\omega)Q(y|x)$. With an abuse of notation, we denote $I(\rv\omega;\rv y)$ the mutual information between $\rv \omega$ and $\rv y$ without explicit reference to the distribution.

Since $\rv x$ is a sufficient statistic for $\rv y$,  $\rv x$ is more informative\footnote{\citealp[See][Theorem 2.8.1, p. 34]{cover-book-2006}.} about $\rv y$ than $\rv \omega$, that is $I(\rv\omega;\rv y)\leq I(\rv x;\rv y)$. Then, the mutual information between the input and the output is no more than $C(Q)$ from the definition of the channel capacity.

The proof for general $n$ and $k$ is an elaboration of this argument. Since states are i.i.d., we can prove that for any strategy, the average payoff is the one induced by some splitting whose mutual information is no more than $\frac{k}{n}C(Q)$. 
The trick is to introduce an auxiliary random variable $\rv t$ uniformly distributed over $\{1,\dots, n\}$ and known by the receiver. Then, we regard the average payoff over stages $1,\dots, n$ as the expected payoff for the randomly selected stage.

\paragraph{Second point, asymptotic construction.}  To make the intuition simple, let us consider a sequence of pairs of integers $(k_j,n_j)_{j\in \naturals}$ such that $k_j=n_j$ and let $k=n$ be a large term of this sequence.
Take a splitting $(\lambda_m,\nu_m)_m$ of the prior $\mu$ which satisfies the information constraint. We want to show that for large $n$, there is a strategy $\sigma$ of the sender such that for any best reply $\tau\in BR(\sigma)$ of the receiver, the payoff of the sender is at least about $\sum_m\lambda_m u^*_S(\nu_m)$. Let also $a^*_m\in A^*(\nu_m)$ such that $u^*_S(\nu_m)=\sum_\omega\nu_m(\omega)u_{S}(\omega,a^*_m)$.

A first intuition for the construction is as follows. From Shannon's coding Theorem\footnote{\citealp[See][Theorem 10.4.1, p. 318]{cover-book-2006}.}, if $I(\rv \omega;\rv m)< C(Q)$,  then for large $n$, there exists  functions $f_1:\Omega^n\to M^n$, $f_2:M^n\to X^n$ and $g:Y^n\to M^n$, altogether a coding/decoding scheme, with the following properties. Given a sequence of states $\omega^n$, the sender calculates a sequence of messages $m^n=f_1(\omega^n)$ such that with probability close to one, the empirical frequency of the  $(\omega_t,m_t)$'s is approximately the theoretical one $\lambda_m\nu_m(\omega)$. The sender then calculates a sequence of inputs $x^n=f_2(m^n)$ and sends them into the channel. If the receiver calculates $\hat m^n=g(y^n)$, then the messages are recovered with probability close to one: $\prob(m^n= \hat m^n)\approx 1$.

This argument is standard in information theory but is not sufficient for proving our result. The proof is actually more complicated because the strategic receiver actually calculates the Bayesian posterior $\prob(\omega^n|y^n)$ and chooses at stage $t$ an action $a_t\in A^*(\prob(\omega_t|y^n))$. Thus, the main task is to refine the construction in such a way that for any best reply of the receiver, with probability close to one, the optimal action $a_t\in A^*(\prob(\omega_t|y^n))$ is equal to the recommended action $a^*_{\hat m_t}$ at most stages, that is, for a set of stages whose proportion is close to one. This implies that the payoff is approximately the target one. 

The proof consists of three main steps. In the first step, we show that for each $\vare>0$, we can find a splitting $\vare$-optimal for $V(\mu,c)$, which satisfies the information constraint with strict inequality and such that for each posterior $\nu_m$, the action $a^*_m$ which minimizes the sender payoff over $A^*(\nu_m)$ is \emph{unique} in a neighborhood of $\nu_m$. This latter property ensures that the receiver plays $a^*_m$ whenever its belief is close to  $\nu_m$. We deduce that the difference between the realized payoff and the target payoff is bounded by the number of times $t$ where the Bayesian posterior $\prob(\omega_t|y^n)$ is far away from $\nu_{\hat m_t}$. The goal is then to show that this number is small with probability close to one.

The second step consists in defining Shannon's strategy for this splitting. There, we adapt known construction from information theory to our setting.  
 
At the third step, we prove that, under our construction, with probability close to one, the Bayesian posteriors $\prob(\omega_t|y^n)$ are close enough to the target posteriors $\nu_{m_t}$ at most stages. This allows us to conclude that with probability close to one, the receiver plays the recommended actions at most stages and that the expected payoff is close to the target one. This step, where we estimate the realized Bayesian beliefs, is new compared to the information theoretic literature, which typically focuses on the average number of mistakes in decoding. Summing up, our construction is similar to the ones found in this literature but is adapted to the context where the receiver is maximizing its payoff.
\hfill $\Box$

\subsubsection{Implications}\label{implications}

We now provide some direct implications of the theorem.
\paragraph{Large capacity.}  Reordering the  information constraint as $\sum_m\lambda_mH(\nu_m)\geq H(\mu)-c$, we see that if $c\geq H(\mu)$, the constraint is satisfied by all splittings. The value of the problem is thus the unconstrained concavification of $u^*_S$: 
$$c\geq H(\mu)\implies V(\mu,c)=\cav u^*_S(\mu).$$ 
As a consequence, if we fix $n$ and $Q$ and choose $k$ large enough such that $\frac{k}{n}C(Q)\geq H(\mu)$, then the sender can achieve approximately the unconstrained maximum $\cav u^*_S(\mu)$.

The intuition is simple: for fixed size of the state space, if the imperfect channel can be used a large number of times, then the sender is able to convey any message with arbitrarily high probability. More precisely, suppose $C(Q)>0$ that is to say, $Q(\cdot|x)$ is not constant with respect to $x$. There exist distributions of inputs $p_m\in\Delta(X)$ that statistically identify the message:
$$m\neq m'\implies \sum_x p_m(x)Q(\cdot|x)\neq \sum_x p_{m'}(x)Q(\cdot|x).$$
For each message $m$, the sender can draw an i.i.d. sequence of messages $x_1,\dots, x_k$ from $p_m$ and sends them through the channel. The posterior belief of the receiver conditional on $y_1,\dots, y_k$ then converges to the truth (the Dirac mass on $m$). Thus, asymptotically, the distributions of actions of the receiver will be close to the one under perfect communication.

\paragraph{Small capacity.} When $c$ is close to $0$, the information constraint
$H(\mu)-\sum_m\lambda_m H(\nu_m)\leq c$ implies that the splitting is almost nonrevealing since:\footnote{See \citealp[Lemma 11.6.1, p. 370]{cover-book-2006}.}
$$\sum_m\lambda_m\|\nu_m-\mu\|_1\leq \sqrt{2\ln 2\, \big(H(\mu)-\sum_m\lambda_m H(\nu_m)\big)}.$$ 
It follows that $V(\mu,c)$ is approximately $u^*_S(\mu)$, the payoff obtained without any information transmission.

As a consequence, if we fix $Q$ and $k$, then for large $n$, the sender cannot get substantially more than $u^*_S(\mu)$.

\paragraph{Perfect channels.}  Our result applies to communication channels without noise. A communication channel has two sources of imperfection: the noise  and the number of available messages, which is given exogenously. One insight of our work is that all that matters for the analysis is the capacity of the channel. 

A channel $(X,Y,Q)$ is called {\em perfect} if $X=Y$ and $Q(y|x)=\indic_{\{x=y\}}$. For each integer $m\geq 2$, we denote $Q^*_m$ the perfect communication channel with $m$ messages where $m=|X|=|Y|$.  Its capacity is\footnote{See \citealp[p. 184]{cover-book-2006}.}  $C(Q^*_m)=\log m.$ We apply our results to the optimal robust payoff $U^*_S(\mu^n,Q^*_{m})$ of the game where the persuasion problem is repeated $n$ times and where the sender can send {\em one} message from a set with cardinality $m$. Our method applies since for large $m$, the channel $Q^*_m$ can be seen as having the use of a binary perfect channel $k$ times, with $k=\log_2 m$.

There are two simple extreme cases. First, if $m=1$, the capacity of the channel is 0 and the sender cannot convey any information. Thus, $U^*_S(\mu^n,Q^*_{m})=V(\mu,0)=u^*_S(\mu).$
Second, if $m\ge|\Omega|^n$,  then the sender can secure the unconstrained persuasion payoff $U^*_S(\mu^n,Q^*_{m})= V(\mu,\log|\Omega|)=\cav u^*_S(\mu)$  by treating each of the $n$ problems separately and getting the payoff $\cav u^*_S(\mu)$ for each instance. The first point of Theorem \ref{mainthm} shows that this is the best possible payoff.

More generally, Theorem \ref{mainthm} implies the following.

\begin{corollary} Consider a persuasion problem repeated $n$ times, where the sender sends one message from a set of cardinality $m$. Then:
\begin{enumerate}
\item $U^*_S(\mu^n,Q^*_{m})\leq V(\mu,\frac{\log m}{n})$.
\item  For any pair of sequences of integers $(m_j,n_j)_{j\in\naturals}$ such that $\lim\limits_{j \to \infty}\max(m_j,n_j)=\infty$ and $\lim\limits_{j \to \infty}\frac{\log m_j}{n_j}= c$,  we have $\lim\limits_{j \to \infty}U^*_S(\mu^{n_j},Q^*_{m_j}) = V(\mu,c)$.
\end{enumerate}
\end{corollary}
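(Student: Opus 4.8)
The plan is to deduce both parts directly from Theorem~\ref{mainthm}, using two structural features of noiseless channels: the capacity identity $C(Q^*_m)=\log m$, and the fact that sending one symbol out of $2^k$ through a noiseless channel is the same thing as sending $k$ bits through $k$ independent uses of the binary noiseless channel $Q^*_2$. Throughout, recall that $\log=\log_2$ in this paper.

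For Part~1 there is nothing to do beyond quoting the first point of Theorem~\ref{mainthm} with $Q=Q^*_m$ and $k=1$: since $C(Q^*_m)=\log m$, it gives $U^*_S(\mu^n,Q^*_m)=U^*_S(\mu^n,(Q^*_m)^1)\le V(\mu,\tfrac1nC(Q^*_m))=V(\mu,\tfrac{\log m}{n})$.

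For Part~2 I would first establish two elementary facts. \emph{(i) Monotonicity in the alphabet size:} $m\le m'$ implies $U^*_S(\mu^n,Q^*_m)\le U^*_S(\mu^n,Q^*_{m'})$, because a sender strategy into $Q^*_m$ is also a sender strategy into $Q^*_{m'}$ that never uses the surplus symbols, and, those symbols having probability zero under such a strategy, the receiver's decision problem and hence its set of best replies are unchanged, so the robust payoff is the same. \emph{(ii) Binary representation:} fixing a bijection $\{1,\dots,2^k\}\cong\{0,1\}^k$ identifies the sender strategies of $\Gamma(n,1)$ over $Q^*_{2^k}$ with those of $\Gamma(n,k)$ over $Q^*_2$ in a way that preserves, for every receiver strategy, the induced joint distribution of $(\omega^n,\text{received message},a^n)$; consequently $U^*_S(\mu^n,Q^*_{2^k})=U^*_S(\mu^n,(Q^*_2)^k)$. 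Then, given a sequence $(m_j,n_j)_j$ with $\max(m_j,n_j)\to\infty$ and $\tfrac{\log m_j}{n_j}\to c\in[0,+\infty]$, I would discard the indices with $m_j=1$ (for which $U^*_S(\mu^{n_j},Q^*_1)=u^*_S(\mu)=V(\mu,0)$ and necessarily $c=0$), set $\underline{k}_j=\lfloor\log m_j\rfloor\ge1$ and $\overline{k}_j=\lceil\log m_j\rceil$, and use (i)+(ii) together with $2^{\underline{k}_j}\le m_j\le 2^{\overline{k}_j}$ to sandwich
\[
U^*_S(\mu^{n_j},(Q^*_2)^{\underline{k}_j})\ \le\ U^*_S(\mu^{n_j},Q^*_{m_j})\ \le\ U^*_S(\mu^{n_j},(Q^*_2)^{\overline{k}_j}).
\]
A short argument then shows $\max(\underline{k}_j,n_j)\to\infty$, $\max(\overline{k}_j,n_j)\to\infty$, and $\tfrac{\underline{k}_j}{n_j}\to c$, $\tfrac{\overline{k}_j}{n_j}\to c$: when $c<\infty$ one first notes $n_j\to\infty$ (a bounded subsequence of $n_j$ would force $m_j$ bounded, contradicting $\max(m_j,n_j)\to\infty$), so $\underline{k}_j,\overline{k}_j$ differ from $\log m_j$ by at most $1=o(n_j)$; when $c=\infty$ one notes $\log m_j\to\infty$. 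Since $C(Q^*_2)=\log 2=1$, applying the second point of Theorem~\ref{mainthm} to the channel $Q^*_2$ along $(\underline{k}_j,n_j)_j$ and along $(\overline{k}_j,n_j)_j$ shows that both ends of the display converge to $V(\mu,c\cdot1)=V(\mu,c)$, whence $U^*_S(\mu^{n_j},Q^*_{m_j})\to V(\mu,c)$ by the squeeze theorem.

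The substantive work is entirely carried out in Theorem~\ref{mainthm}; the only real points to check here are fact (ii) — namely that the noiseless $2^k$-symbol game and the $k$-fold binary game literally coincide as extensive-form games, so that their optimal robust payoffs are equal — and the reduction of an arbitrary alphabet size $m_j$ to powers of two via fact (i), together with the routine verification that the hypotheses of Theorem~\ref{mainthm}(2), in particular $\max(\cdot,\cdot)\to\infty$ and the convergence of the ratio, are inherited by the auxiliary sequences $(\underline{k}_j,n_j)_j$ and $(\overline{k}_j,n_j)_j$. None of this is hard, but it is where all the care goes.
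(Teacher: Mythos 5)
Your proof is correct and follows essentially the same route as the paper: Part~1 is a direct instance of Theorem~\ref{mainthm}(1) with $k=1$ and $C(Q^*_m)=\log m$; Part~2 uses monotonicity of $U^*_S$ in the alphabet size, the identification of $Q^*_{2^k}$ with $(Q^*_2)^k$, a sandwich between $(Q^*_2)^{\lfloor\log m_j\rfloor}$ and a channel with one more bit, and Theorem~\ref{mainthm}(2) applied to the binary perfect channel. The only differences are cosmetic refinements that the paper glosses over — spelling out why monotonicity holds for the \emph{robust} payoff (off-path symbols carry probability zero and so don't change the value of any best reply), using $\lceil\log m_j\rceil$ in place of $\lfloor\log m_j\rfloor+1$, discarding the degenerate indices with $m_j=1$, and explicitly verifying that the auxiliary sequences inherit the hypotheses of Theorem~\ref{mainthm}(2).
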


\begin{proof} The first point follows directly from Theorem \ref{mainthm}. To see the second point, it is enough to remark that a perfect channel $ Q^*_m$ is ``close'' to $k$ copies of a perfect binary channel with  $k$  such that $2^{k}\leq m<2^{k+1}$, 
that is $k=\lfloor \log m\rfloor$. Having more messages at disposal is beneficial for the sender and thus $U^*_S(\mu^n,Q^*_{m})$ is weakly increasing with $m$. It follows that:
$$U^*_S(\mu^n, (Q^*_{2})^k)\leq U^*_S(\mu^n, Q^*_{m})\leq U^*_S(\mu^n, (Q^*_{2})^{k+1}).$$
 Take a sequence $(m_j,n_j)_{j\in\naturals}$ such that $\lim\limits_{j \to \infty}\max(m_j,n_j)=\infty$ and $\lim\limits_{j \to \infty}\frac{\log m_j}{n_j}= c$, and define $k_j=\lfloor \log m_j\rfloor$. We have $\lim\limits_{j \to \infty}\max(k_j,n_j)=\infty$, $\lim\limits_{j \to \infty}\frac{k_j}{n_j}= c$ and the conclusion follows from Theorem  \ref{mainthm}.
\end{proof}

\subsection{Concavification with information constraint}\label{sec:cav}
In this section, we give some properties of the optimal splitting problem under information constraint.
The motivation for this part of the results is two-fold. First, it is known than in a concavification problem, the number of posteriors (or of messages) can be chosen less than or equal to the number of states. One might wonder whether this remains true when there is a constraint on the feasible splittings. Second, models with costly information often use the mutual information as information cost  \citep[See e.g.][]{Sims03}. We will see that in our case, this is derived by writing a Lagrangian for $V(\mu,c)$.

Consider the optimal splitting under information constraint:
$$\sup\Big\{\sum_m\lambda_m u^*_S(\nu_m) : \sum_m\lambda_m \nu_m=\mu, \sum_m\lambda_mH(\nu_m)\geq H(\mu)-c\Big\}.$$
This is a special instance of the following optimization problem. Let $f,g:X\to\reals\cup\{-\infty\}$ be two functions defined on a convex set $X\subseteq \reals^d$, where $X$ represents an abstract set of posteriors, $f$  is a payoff function and $g$ is a constraint capturing the feasible splittings. For $x\in X$ and $\gamma\in \reals$ consider the  problem:
$$F^g(x,\gamma):=\sup\Big\{\sum_m\lambda_m f(x_m) : \sum_m\lambda_m x_m=x, 
\sum_m\lambda_mg(x_m)\geq \gamma\Big\}.$$

Let $f^g:X\times \reals\to \reals\cup\{-\infty\}$ defined by:
\[ f^g(x,\gamma)=
\begin{cases}
f(x) & \text{ if } \gamma\leq g(x), \\
-\infty & \text{ otherwise. } \\
\end{cases}
\]

\begin{theorem}\label{thm:cavcontraint}
Then, for each $(x,\gamma)\in X\times \reals$,
\begin{enumerate}
\item
$F^g(x,\gamma)=\cav f^g(x,\gamma).\label{eq:cav}$

\item $F^g(x,\gamma)=\inf_{t\geq 0}\Big\{\cav(f+tg)(x)-t\gamma\Big\}.$

\end{enumerate}
\end{theorem}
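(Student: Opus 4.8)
The plan is to prove both statements by lifting the splitting problem to the augmented space $X\times\reals$ of ``posterior–threshold'' pairs, and then invoking only elementary facts about concave hulls and one–dimensional convex duality.

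\textbf{Part 1.} By definition $\cav f^g(x,\gamma)$ is the supremum of $\sum_m\lambda_m f^g(x_m,\gamma_m)$ over finite families $(\lambda_m,(x_m,\gamma_m))_m$ with $\lambda_m\ge 0$, $\sum_m\lambda_m=1$, $\sum_m\lambda_m x_m=x$, $\sum_m\lambda_m\gamma_m=\gamma$. Such a sum exceeds $-\infty$ only when $\gamma_m\le g(x_m)$ for every $m$, in which case it equals $\sum_m\lambda_m f(x_m)$ and the barycenter constraints force $\sum_m\lambda_m g(x_m)\ge\sum_m\lambda_m\gamma_m=\gamma$; hence $\cav f^g(x,\gamma)\le F^g(x,\gamma)$. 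Conversely, given a splitting $(\lambda_m,x_m)_m$ feasible for $F^g(x,\gamma)$ (with $\gamma$ finite), each $g(x_m)$ is finite and, putting $\gamma_m:=g(x_m)-\bigl(\sum_{m'}\lambda_{m'}g(x_{m'})-\gamma\bigr)$, one gets $\gamma_m\le g(x_m)$ and $\sum_m\lambda_m\gamma_m=\gamma$, so $(\lambda_m,(x_m,\gamma_m))_m$ is admissible for $\cav f^g(x,\gamma)$ with the same value; hence $\cav f^g(x,\gamma)\ge F^g(x,\gamma)$. The degenerate cases $\gamma=\pm\infty$ are immediate, proving $F^g=\cav f^g$.

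\textbf{Part 2, easy inequality.} For any $t\ge 0$ and any splitting $(\lambda_m,x_m)_m$ feasible for $F^g(x,\gamma)$,
\[ \sum_m\lambda_m f(x_m)=\sum_m\lambda_m (f+tg)(x_m)-t\sum_m\lambda_m g(x_m)\le \cav(f+tg)(x)-t\gamma, \]
using $\sum_m\lambda_m x_m=x$ and $\sum_m\lambda_m g(x_m)\ge\gamma$. Taking the supremum over splittings, then the infimum over $t\ge 0$, gives $F^g(x,\gamma)\le\inf_{t\ge0}\{\cav(f+tg)(x)-t\gamma\}$.

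\textbf{Part 2, reverse inequality.} Fix $x$ and set $\psi(\gamma):=F^g(x,\gamma)$. By Part 1 $\psi$ is the restriction to the line $\{x\}\times\reals$ of the concave function $\cav f^g$, hence concave; it is nonincreasing in $\gamma$ (tightening the constraint shrinks the feasible set) and bounded above by $\cav f(x)$. I will use that $\psi$ is upper semicontinuous, which holds whenever $X$ is compact, $f$ is u.s.c.\ and $g$ is continuous --- the situation in the persuasion application, where $X=\Delta(\Omega)$, $f=u^*_S$ and $g$ is the entropy constraint: by Carath\'eodory one restricts to splittings with at most $d+2$ atoms, and a compactness argument passes to the limit in both the constraint and the objective. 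A concave, nonincreasing, u.s.c.\ function on $\reals$ equals the infimum of its affine majorants of nonpositive slope, i.e.\ $\psi(\gamma)=\inf_{t\ge0}\{\psi^\sharp(t)-t\gamma\}$ with $\psi^\sharp(t):=\sup_{\gamma'}\{\psi(\gamma')+t\gamma'\}$. Exchanging the two suprema and using $t\ge0$ to make $\gamma'=\sum_m\lambda_m g(x_m)$ optimal,
\[ \psi^\sharp(t)=\sup_{\sum_m\lambda_m x_m=x}\ \sup_{\gamma'\le \sum_m\lambda_m g(x_m)}\Bigl\{\sum_m\lambda_m f(x_m)+t\gamma'\Bigr\}=\sup_{\sum_m\lambda_m x_m=x}\sum_m\lambda_m(f+tg)(x_m)=\cav(f+tg)(x). \]
Substituting yields $F^g(x,\gamma)=\psi(\gamma)=\inf_{t\ge0}\{\cav(f+tg)(x)-t\gamma\}$, which is the claim.

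\textbf{Expected main obstacle.} The only delicate point is upper semicontinuity of $\gamma\mapsto F^g(x,\gamma)$ at the right endpoint of its effective domain: without it the argument above only recovers the u.s.c.\ regularization of $F^g$. Securing it requires compactness of $X$ together with semicontinuity of $f$ and $g$ (equivalently, attainment of the suprema defining $F^g$ and $\cav g$) and the Carath\'eodory reduction to boundedly many atoms, so that one can extract convergent subsequences of near-optimal splittings; everything else is bookkeeping and one-dimensional duality.
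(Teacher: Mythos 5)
Your Part 1 is essentially identical to the paper's: both directions are certified by translating the auxiliary variable, $\gamma_m := g(x_m)+\gamma-\sum_{m'}\lambda_{m'}g(x_{m'})$, which is exactly the paper's device.

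Your Part 2 takes a genuinely different route. The paper applies Fenchel biconjugacy directly to the bivariate function $f^g$ on $X\times\reals$, writes $\cav f^g(x,\gamma)=\inf_{p,z}\{p\cdot x+z\gamma+\sup_{y,\eta\le g(y)}\{f(y)-p\cdot y-z\eta\}\}$, argues the dual variable $z$ paired with $\gamma$ may be taken $\le 0$, sets $t=-z$ and reorganizes the double infimum to recognize $\cav(f+tg)(x)-t\gamma$. You instead fix $x$, use Part 1 to see that $\psi(\gamma):=F^g(x,\gamma)$ is the restriction to a line of a concave function, hence concave and (trivially) nonincreasing, and apply one-dimensional concave duality to $\psi$ alone; the computation $\psi^\sharp(t)=\cav(f+tg)(x)$ by exchanging the two suprema replaces the paper's inner infimum over $p$. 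Both arguments lean on the same regularity: writing a concave function as the infimum of its affine majorants (equivalently, the paper's identification of $\cav$ with the biconjugate via Rockafellar's Corollary 12.1.1, which in fact only yields the \emph{closure} of the concave hull) requires $\cav f^g(x,\cdot)$ to be u.s.c.\ at $\gamma$. You flag this explicitly as the delicate point and indicate when it holds (compact $X$, semicontinuous $f$, continuous $g$, Carath\'eodory reduction); the paper elides it. What your route buys is a clean split into an unconditional weak-duality inequality and a strong-duality converse, a more elementary one-dimensional conjugacy argument, and an honest account of the regularity being invoked; what the paper's route buys is brevity, by doing the dualization once in the full product space rather than first slicing in $\gamma$.
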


Applying this result to the optimal splitting under information constraint, we get:

\begin{corollary}\label{coro:Cav} For each $\mu\in\Delta(\Omega)$ and $c\ge0$, 
\begin{enumerate}
\item $V(\mu, c)$ is the concavification of the function $u^H_S:\Delta(\Omega)\times\reals\to\reals$ defined as:
\[ u^H_S(\nu,\eta)=
\begin{cases}
u^*_S(\nu) & \text{ if } \eta\leq H(\nu), \\
-\infty & \text{ otherwise, } \\
\end{cases}
\]
calculated at $(\nu,\eta)=(\mu, H(\mu)-c)$.

\item  
$V(\mu,c)=\inf_{t\ge0}\Big\{\cav(u^*_S+tH)(\mu)-t(H(\mu)-c)\Big\}.$\\

\end{enumerate}
\end{corollary}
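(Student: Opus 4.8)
Corollary~\ref{coro:Cav} is the specialization of Theorem~\ref{thm:cavcontraint} to $X=\Delta(\Omega)$, $f=u^*_S$, $g=H$, read at $x=\mu$ and $\gamma=H(\mu)-c$: rewriting the entropy constraint $H(\mu)-\sum_m\lambda_mH(\nu_m)\le c$ as $\sum_m\lambda_mH(\nu_m)\ge H(\mu)-c$ gives $V(\mu,c)=F^H(\mu,H(\mu)-c)$, and with these data $f^g$ is precisely $u^H_S$; so the two items of the corollary are the two items of the theorem, and the task is to prove the theorem. Throughout I use that in the application $u^*_S$ and $H$ are bounded (finiteness of $\Omega$ and $A$), which controls degenerate multipliers.

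\textbf{Item 1: $F^g=\cav f^g$.} ``$\ge$'': pick a representation $(x,\gamma)=\sum_i\lambda_i(x_i,\gamma_i)$ nearly attaining $\cav f^g(x,\gamma)$; discarding terms with value $-\infty$ we may take $\gamma_i\le g(x_i)$ and $f^g(x_i,\gamma_i)=f(x_i)$, so $\sum_i\lambda_ix_i=x$ and $\sum_i\lambda_ig(x_i)\ge\sum_i\lambda_i\gamma_i=\gamma$, i.e.\ $(\lambda_i,x_i)_i$ is feasible for $F^g(x,\gamma)$ and witnesses $F^g(x,\gamma)\ge\sum_i\lambda_if(x_i)$. ``$\le$'': the point is that $f^g(x,\cdot)$ is nonincreasing, hence so is $\gamma\mapsto\cav f^g(x,\gamma)$ (translating every $\gamma_i$ in a representation of $(x,\gamma)$ down by $\gamma-\gamma'\ge0$ yields a representation of $(x,\gamma')$ along which $f^g$ only rises). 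Then for any feasible splitting $(\lambda_m,x_m)_m$ of $F^g(x,\gamma)$ the points $(x_m,g(x_m))$ carry value $f(x_m)$ and average to $(x,\bar\gamma)$ with $\bar\gamma:=\sum_m\lambda_mg(x_m)\ge\gamma$, so $\cav f^g(x,\gamma)\ge\cav f^g(x,\bar\gamma)\ge\sum_m\lambda_mf(x_m)$; supremizing over splittings gives ``$\le$''.

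\textbf{Item 2: the Lagrangian formula.} Weak duality is immediate: for $t\ge0$ and any splitting feasible for $F^g(x,\gamma)$, $\sum_m\lambda_mf(x_m)\le\sum_m\lambda_m(f+tg)(x_m)-t\gamma\le\cav(f+tg)(x)-t\gamma$, whence $F^g(x,\gamma)\le\inf_{t\ge0}\{\cav(f+tg)(x)-t\gamma\}$. For the reverse, pass to the planar set
\[
\mathcal{C}=\Big\{(\delta,\beta)\in\reals^2:\ \exists\,(\lambda_m,x_m)_m,\ \textstyle\sum_m\lambda_mx_m=x,\ \delta\le\sum_m\lambda_mg(x_m),\ \beta\le\sum_m\lambda_mf(x_m)\Big\},
\]
which is convex, downward closed in each coordinate, bounded above in $\beta$ (as $f$ is), and whose upper boundary is the concave nonincreasing function $\psi(\delta):=F^g(x,\delta)$ (concavity, and $\psi=F^g(x,\cdot)=\cav f^g(x,\cdot)$, come from item~1 together with joint concavity of $\cav f^g$). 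When $\gamma$ is interior to the effective domain of $\psi$ --- which in the corollary holds whenever $c>0$, since the non-revealing splitting gives $\sum_m\lambda_mH(\nu_m)=H(\mu)$, the largest such value by concavity of $H$, so that $\gamma=H(\mu)-c<H(\mu)$ --- the concave function $\psi$ has a finite supporting line at $\gamma$, and, $\psi$ being nonincreasing, of slope $-t$ with $t\ge0$: $\psi(\delta)\le\psi(\gamma)+t(\gamma-\delta)$ for all $\delta$ in the domain. Reading this at $\delta=\sum_m\lambda_mg(x_m)$ for an arbitrary splitting $(\lambda_m,x_m)_m$ of $x$, and using $\psi(\sum_m\lambda_mg(x_m))\ge\sum_m\lambda_mf(x_m)$, gives $\sum_m\lambda_m(f+tg)(x_m)\le F^g(x,\gamma)+t\gamma$; supremizing over splittings yields $\cav(f+tg)(x)-t\gamma\le F^g(x,\gamma)$, hence $\inf_{t\ge0}\{\cav(f+tg)(x)-t\gamma\}\le F^g(x,\gamma)$, and with weak duality this proves item~2.

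\textbf{Main obstacle.} Item~1 is bookkeeping once the monotonicity of $f^g$ in its second argument is noticed. The substantive step is the strong-duality argument in item~2: one must ensure the supporting hyperplane of $\mathcal{C}$ at $(\gamma,F^g(x,\gamma))$ is non-vertical (no infinite multiplier) and has the correct sign --- boundedness of $f$ handles the former, downward-closedness the latter --- and the boundary case $c=0$ (where $\gamma=H(\mu)$, the largest feasible value of $\sum_m\lambda_mH(\nu_m)$; both sides then collapse to $u^*_S(\mu)$ by strict concavity of $H$ on $\Delta(\Omega)$), together with the infeasible case $\gamma>H(\mu)$ (both sides $-\infty$), must be disposed of directly.
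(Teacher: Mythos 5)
Your reduction of the corollary to Theorem~\ref{thm:cavcontraint}, and your argument for item~1, are in substance the paper's own: both directions hinge on the freedom to re-assign the $\gamma_m$'s below $g(x_m)$ while preserving the average, which you phrase as monotonicity of $\gamma\mapsto\cav f^g(x,\gamma)$ and the paper phrases by explicitly setting $\gamma_m=g(x_m)+\gamma-\bar\gamma$. That part is the same idea, correctly executed.

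For item~2 you take a genuinely different route. The paper computes $\cav f^g$ via the Fenchel biconjugate formula $\cav h(x)=\inf_p\{x\cdot p+h^*(p)\}$ applied to $f^g$: the dual variable $z$ conjugate to the $\gamma$-coordinate is forced nonpositive (else the inner supremum is $+\infty$), and setting $t=-z\ge0$ collapses the inner sup over $\eta\le g(y)$ to $\eta=g(y)$, giving $\inf_{t\ge0}\{\cav(f+tg)(x)-t\gamma\}$ in one algebraic pass, uniformly in $(x,\gamma)$. You instead prove weak duality by hand and then obtain strong duality from a supporting line to the concave, nonincreasing frontier $\psi(\delta)=F^g(x,\delta)$ of a planar convex set $\mathcal C$, reading the multiplier $t$ off the (nonpositive) slope. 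This is a valid and perhaps more elementary Lagrangian-duality argument, and it makes the sign of the multiplier and the role of the constraint transparent. The price is that the supporting-line argument only applies for $\gamma$ in the relative interior of $\mathrm{dom}\,\psi$; you correctly flag the boundary case $\gamma=H(\mu)$ (that is, $c=0$) and the infeasible case $\gamma>H(\mu)$ as needing direct disposal. Your treatment of $c=0$ is terse --- that $\inf_{t\ge0}\{\cav(u^*_S+tH)(\mu)-tH(\mu)\}=u^*_S(\mu)$ needs the observation that as $t\to\infty$ any nontrivial splitting incurs a strictly negative $t\cdot(\sum_m\lambda_m H(\nu_m)-H(\mu))$ term by strict concavity of $H$, so the near-optimal splittings degenerate to the trivial one --- but the idea is there and correct. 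In short: same proof of item~1, a different but correct duality argument for item~2, trading the paper's one-shot conjugate computation for a geometric argument plus a small amount of boundary case analysis.
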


Since it might be useful in other contexts, Theorem \ref{thm:cavcontraint} is stated for general functions rather than specifically for the entropy function. This result has recently been generalized by \cite{DovalSkreta} to splitting problems with several constraints.
The first point of the theorem states that the concavification with constraint, is the concavification of a bivariate function where an additional variable is added for the constraint (many variables when there are many constraints, see \citealp{DovalSkreta}). 
The second point states that a Lagrangian function can be introduced and that the concavification under constraint is the concavification of the Lagrangian for some multiplier. 
The proof is in the Appendix (\ref{A21}).

A direct implication of the second point of Corollary \ref{coro:Cav} is that there exists $t^*=t^*(\mu,c)$ such that:
 $$V(\mu,c)=\cav(u^*_S+t^*H)(\mu)-t^*(H(\mu)-c).$$ 
To see the existence of $t^*$, notice that $\cav(u^*_S+tH)(\mu)-t(H(\mu)-c)\geq (u^*_S+tH)(\mu)-t(H(\mu)-c)=u^*_S(\mu)+tc,$ which tends to $+\infty$ as $t\to+\infty$. Therefore, $t\mapsto\cav(u^*_S+tH)(\mu)-t(H(\mu)-c)$ reaches a minimum at some $t^*$.

If $(\lambda^*_m,\nu^*_m)_m$ is an optimal splitting, let $\set I^*= H(\mu)-\sum_m\lambda^*_mH(\nu_m^*)$ be its mutual information. We have the following:
\begin{align}\label{eq:lagrange}V(\mu,c)=\sum_m\lambda^*_mu^*_S(\nu_m^*)-t^*(\set I^*-c).\end{align}
We then find the usual Kuhn-Tucker slackness conditions. If $\set I^*<c$, then $t^*=0$ and the unconstrained optimum  is feasible. 
If $t^*>0$, the constraint is binding.  The Lagrange multiplier $t^*$ can be interpreted as the  {\em shadow price of capacity}, that is, the marginal value of an extra unit of communication capacity. 

This characterization can be related with the {\em cost of information} considered in the literature on rational inattention \citep[See][]{Sims03} where the agent pays a cost proportional to the mutual information between the state and the signal he observes. In particular, \cite{CaplinDean} consider the concavification of a utility function net of such an information cost. For persuasion games, \cite{GentzkowKamenica14} assume that the sender pays a cost for choosing a disclosure strategy which is also related to the mutual information and also take the concavification of the net utility function.

Equation (\ref{eq:lagrange}) can be seen as a microfoundation of the use of mutual information as the information cost: the  limit optimal value of persuasion for a large number of copies of problems with communication over an imperfect channel, has the same value as a problem of persuasion with an information cost. 
There are some differences, however. First, the information cost is not the mutual information, but the difference between the mutual information and the capacity of the channel. That is, a cost reduces the payoff only when the sender would like to send more information bits than the capacity. Second, the unit price of capacity is endogenous and given by the Lagrange multiplier of the information constraint.\\

A direct implication is an upper bound of the number of posteriors needed to achieve the concavification.

\begin{corollary}\label{lem:actions} In the optimization problem,
$$V(\mu,c)=\sup\Big\{\sum_m\lambda_m u^*_S(\nu_m) : \sum_m\lambda_m\nu_m=\mu,\sum_m\lambda_mH(\nu_m)\geq H(\mu)-c \Big\},$$
the number of posteriors can be chosen to be at most $\min\{|A|, |\Omega|+1\}$.
\end{corollary}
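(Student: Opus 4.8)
The plan is to combine the two characterizations in Corollary~\ref{coro:Cav} to bound the number of posteriors. The key observation is that once the optimal Lagrange multiplier $t^*=t^*(\mu,c)$ is fixed, we have
$$V(\mu,c)=\cav(u^*_S+t^*H)(\mu)-t^*(H(\mu)-c),$$
so it suffices to bound the number of posteriors needed to achieve the \emph{unconstrained} concavification $\cav(u^*_S+t^*H)(\mu)$. For that, the first natural bound is $|\Omega|+1$: this is the standard Carath\'eodory-type argument for concavification, since the concave closure of any function on $\Delta(\Omega)$ (a $(|\Omega|-1)$-dimensional simplex) is attained as a convex combination of at most $|\Omega|+1$ points, by applying Carath\'eodory's theorem to the epigraph in $\reals^{|\Omega|-1}\times\reals$, or equivalently by noting that a supporting hyperplane touches the graph at a point whose barycentric representation can be trimmed to $\dim+1$ extreme points.

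The harder part is the bound $|A|$, and here the idea is to reduce to posteriors that induce distinct optimal actions. First I would note that $u^*_S$ is piecewise affine in a specific sense: partitioning $\Delta(\Omega)$ according to which action $a$ is the sender-worst optimal action $a^*(\nu)$, on each cell the function $\nu\mapsto u^*_S(\nu)$ agrees with the affine map $\nu\mapsto\sum_\omega\nu(\omega)u_S(\omega,a)$. Consequently, in an optimal splitting $(\lambda^*_m,\nu^*_m)_m$, if two posteriors $\nu^*_m,\nu^*_{m'}$ lie in the closure of the same cell (i.e., share a common sender-worst optimal action $a$ that is also receiver-optimal at both), one can merge them: replace the pair by their barycenter $\bar\nu=\frac{\lambda^*_m\nu^*_m+\lambda^*_{m'}\nu^*_{m'}}{\lambda^*_m+\lambda^*_{m'}}$ with weight $\lambda^*_m+\lambda^*_{m'}$. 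The payoff $\sum\lambda u^*_S$ is unchanged because $u^*_S$ is affine and equal to $\langle\nu,u_S(\cdot,a)\rangle$ on that segment; the Bayes-plausibility constraint is preserved; and the entropy constraint is preserved \emph{or relaxed} because $H$ is concave, so $H(\bar\nu)\ge$ the corresponding convex combination of $H(\nu^*_m),H(\nu^*_{m'})$, which only helps the inequality $\sum\lambda_m H(\nu_m)\ge H(\mu)-c$. Iterating, we may assume all posteriors have pairwise distinct associated actions, giving at most $|A|$ of them.

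The main obstacle I anticipate is the merging step: one must be careful that the common action $a$ is genuinely receiver-optimal at the barycenter $\bar\nu$ (so that $u^*_S(\bar\nu)\le\langle\bar\nu,u_S(\cdot,a)\rangle$ holds with the right direction, and in fact that $a\in A^*(\bar\nu)$), and that $u^*_S(\bar\nu)$ is not strictly smaller than the affine value in a way that would \emph{decrease} the objective — actually a decrease is fine for an upper-bound argument, but we want to preserve optimality, so we must argue $u^*_S(\bar\nu)$ equals the affine value. Since $A^*$ is defined by maximizing a linear functional, $A^*(\nu)\cap A^*(\nu')\subseteq A^*(\bar\nu)$ for any $\bar\nu$ on the segment, so $a\in A^*(\bar\nu)$, hence $u^*_S(\bar\nu)\le\langle\bar\nu,u_S(\cdot,a)\rangle$; combined with the opposite inequality from $u^*_S$ being a minimum and the fact that the split $(\lambda^*_m,\nu^*_m),(\lambda^*_{m'},\nu^*_{m'})$ of $\bar\nu$ is feasible for $V(\bar\nu,\cdot)$, one gets equality by optimality of the original splitting. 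Once this bookkeeping is done, combining with the $|\Omega|+1$ bound from the Carath\'eodory argument on $\cav(u^*_S+t^*H)$ yields the stated $\min\{|A|,|\Omega|+1\}$.
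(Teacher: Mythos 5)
Your treatment of the $|A|$ bound is essentially the paper's own argument (merge posteriors sharing the same sender-worst optimal action, using that the contact set $\{\nu : a \in \widetilde{A}(\nu)\}$ is convex and that concavity of $H$ relaxes the information constraint); the bookkeeping concerns you raise are handled exactly by that convexity claim.

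The $|\Omega|+1$ bound, however, has a genuine gap. You reduce to bounding the number of posteriors for the \emph{unconstrained} concavification $\cav(u^*_S+t^*H)(\mu)$ on $\Delta(\Omega)$, via the dual representation $V(\mu,c)=\cav(u^*_S+t^*H)(\mu)-t^*(H(\mu)-c)$. But a splitting that attains this unconstrained cav is not, in general, primal feasible: it need not satisfy $\sum_m\lambda_m H(\nu_m)\geq H(\mu)-c$. The identity above only guarantees that \emph{some} Lagrangian optimizer is primal feasible (by complementary slackness), not that the low-cardinality one produced by Carath\'eodory is. Moreover, your route would actually yield the bound $|\Omega|$, not $|\Omega|+1$: the refined Carath\'eodory statement (Rockafellar, Corollary 17.1.5, which is Fact~\ref{caratheodory} in the paper) gives $\dim+1$ points for a function on a set of dimension $\dim$, and $\Delta(\Omega)$ has dimension $|\Omega|-1$, so $\cav(u^*_S+t^*H)$ needs at most $|\Omega|$ posteriors. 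This is falsified by the paper's own example in Section~\ref{mainex}, where $|\Omega|=2$ and the optimal constrained splitting genuinely requires $3=|\Omega|+1$ posteriors (at $\frac12$, $\frac18$, $\frac78$); the two-posterior Lagrangian optimizer $\frac12(\frac18,\frac78)+\frac12(\frac78,\frac18)$ achieves the Lagrangian cav but violates the entropy constraint. The paper's approach instead applies Carath\'eodory to the \emph{bivariate} function $f^g$ on $\Delta(\Omega)\times\reals$ (point~1 of Corollary~\ref{coro:Cav} and Corollary~\ref{d+2points}): the constraint is encoded as an extra coordinate, so the domain has dimension $|\Omega|$, and Rockafellar's bound correctly yields $|\Omega|+1$ posteriors that remain feasible for the constrained problem.
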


Without the information constraint, the usual bound is $\min\{|A|, |\Omega|\}$: the number of posteriors or of messages can be upper bounded by the number of actions and the number of states. For the number of actions, the argument is that two messages for which the receiver chooses the same action can be merged into one and the corresponding two posteriors replaced by the average. The argument still holds due to the concavity of the entropy function: replacing two posteriors by their average increases the expected entropy and thus helps in satisfying the information constraint.

For the bound given by the number of states, the usual technical argument is that any point in the convex hull of the hypograph of a function on $\Delta(\Omega)$ is a convex combination involving $|\Omega|$ points. From Corollary \ref{coro:Cav}, we consider the concavification of a function defined on $\Delta(\Omega)\times \reals$ a domain with one extra dimension; thus, an extra posterior might be needed. A similar observation is made in \cite{BoleslavskyKim},
where due to an incentive constraint, an extra posterior is needed. In Section \ref{mainex}, we provide an example where $|\Omega|+1$ posteriors are used at the optimum.

\section{Illustrating example}\label{mainex}

\subsection{Unrestricted communication}
In this example, the sender is a firm that persuades the receiver to invest in a risky project. If the receiver does not invest (action $a_0$), the payoff is 0 for both players. If the receiver invests (action $a_1$), the project has return $-7$ in the bad state $\omega_0$ and  $+1$ in the good state $\omega_1$. Both states are equally likely. The sender receives a fee of $+1$ only if the receiver invests. The payoff table is as follows, the entries are pairs of payoffs for the players $i=S,R$ depending on the state and action.

\begin{center}
\begin{tabular}{cccc}
                            & $a_0$                   & $a_1$                      &  $\mu$   \\ \cline{2-3}
\multicolumn{1}{l|}{$\omega_0$} & \multicolumn{1}{l|}{$\;\,0,0\,\;$} & \multicolumn{1}{l|}{$\,1,-7$} &  $\frac12$  \\ \cline{2-3}
\multicolumn{1}{l|}{$\omega_1$}   & \multicolumn{1}{l|}{$\;\,0,0\,\;$} & \multicolumn{1}{l|}{$\;\,1,1$} &  $\frac12$  \\ \cline{2-3}
                            &                          &                           &  
\end{tabular}
\end{center}

The receiver invests for sure only when he holds a belief $\nu$ such that $\nu(\omega_1)>7/8$. If $\nu(\omega_1)=7/8$ he is indifferent. Assuming that in case of indifference he does not invest, the robust payoff of the sender $u^*_S(\nu)$ is $1$ if $\nu(\omega_1)>7/8$ and 0 otherwise.

\begin{figure}[!ht]
\begin{center}
\psset{xunit=5cm,yunit=5cm}
\begin{pspicture}(0,-0.2)(1,1.3)
\psline{->}(0,-0.1)(0,1.2)
\psline{->}(-0.1,0)(1.2,0)
\psline{-}(1,0)(1,1.2)
\rput[u](-0.05,-0.05){$0$}
\rput[u](-0.12,1.15){$u^*_S(\nu)$}
\rput[u](-0.05,1){$1$}
\rput[u](1,-0.05){$1$}
\rput[u](1.3,-0.04){$\nu(\omega_1)$}
\rput[u](0.875,-0.09){$\frac78$}
\rput[u](0.5,-0.09){$\frac12$}
\rput[u](-0.3,0.571){$\cav u^*_S(\frac12)=\frac47$}
\psline[linestyle=dotted](0,1)(0.875,1)
\psline[linecolor=blue,linewidth=3.5pt](0.875,1)(1,1)
\psline[linecolor=blue,linewidth=3.5pt](0,0)(0.875,0)
\psline[linestyle=dotted](0.875,0)(0.875,1)
\psline[linestyle=dashed, linecolor=blue](0,0)(0.875,1)
%\rput[u](0.875,0){{\blue $\bullet$}}
%\psline[linestyle=dotted](0.5,0)(0.5,0.571)
\psline[linestyle=dashed](0.5,0)(0.5,0.571)
\psline[linestyle=dotted](0.5,0.571)(0,0.571)
\psdots[linecolor = red,dotscale=2 2](0,0)(0.875,1)
\psdots[linecolor=blue](0.5,0.571)
\psdots[linecolor=blue,dotscale=2 2](0.875,0)
\end{pspicture}
\caption{Concavification.}
\label{fig:example01}
\end{center}
\end{figure}
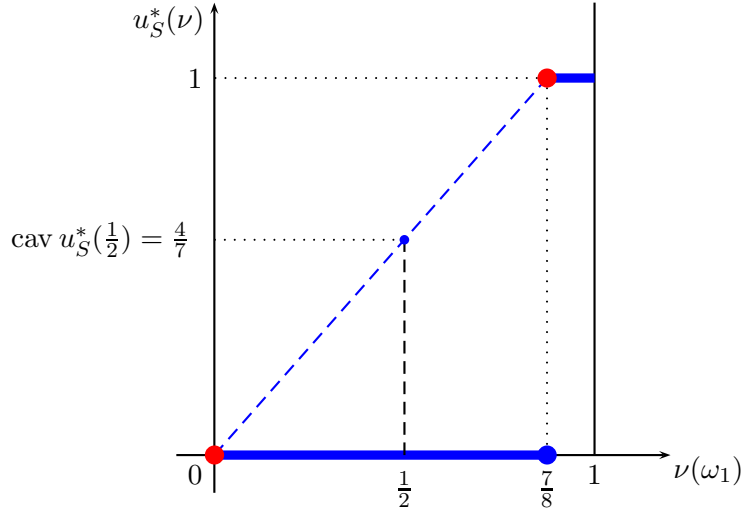

The concavification function $\cav u^*_S(\nu)$ is continuous and equal to $\frac87 \nu(\omega_1)$ for $\nu(\omega_1)\leq \frac78$ and 1 otherwise. It is easy to see that it does not depend on the action chosen by the receiver at $\nu(\omega_1)=\frac78$, see Figure \ref{fig:example01}.
If the receiver were to choose $a_1$ at the point of indifference, then the optimal splitting for the sender would be as follows:
$$\bigg(\frac12,\frac12\bigg)=\frac37\big(1,0\big)+\frac47\bigg(\frac18,\frac78\bigg),$$
where a belief is denoted $\nu=(\nu(\omega_0),\nu(\omega_1))$. This yields a payoff of $\frac47$ which is the highest that the sender can achieve given the uniform prior. For any small $\vare>0$, we can perturb the previous splitting and get the following:
$$\bigg(\frac12,\frac12\bigg)=\frac{3+8\vare}{7+8\vare}\big(1,0\big)+\frac{4}{7+8\vare}\bigg(\frac18-\vare,\frac78+\vare\bigg),$$
which achieves the payoff $\frac{4}{7+8\vare}$ irrespective of the tie-breaking rule. Letting $\vare$ tend 0, we see that the sender achieves a payoff arbitrarily close to $\frac47$, which is the optimal robust payoff.

\subsection{Restricted and noisy communication} 
We consider binary sets of messages  $X=\{x_0,x_1\}$, $Y=\{y_0,y_1\}$ and we assume that the channel has a {\em noise level} $\vare\in[0,\frac12]$, that is $Q(y_j|x_i)=\vare$ for $j\neq i$, see Figure \ref{fig:binary}. The generic case is $\vare\in(0,\frac12)$ where the label of the message ($0$ or $1$) is changed with positive probability but observing a label $1$ is still more likely when the input label is $1$. When $\vare=\frac12$, the distribution of the output message is independent from the input message, so the channel completely disrupts the communication.

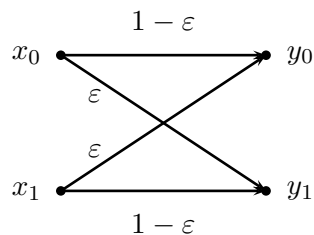
\begin{figure}[!ht]
\begin{center}
\psset{xunit=0.9cm,yunit=0.9cm}
\begin{pspicture}(-4.5,0)(6,3.5)
\rput(-1,0.5){$x_1$}
\rput(-1,2.5){$x_0$}
%\rput(-2.5,0.5){$(1-p)$}
%\rput(-2.5,2.5){$p$}
%\rput[l](-3.3,-0.5){\textcolor[rgb]{1,0,0}{$p$: Prior Distribution}}
\psdots(-0.5, 0.5)(-0.5, 2.5)(2.5,0.5)(2.5,2.5)
%\psline[linewidth=1pt]{->}(-3.5,1.5)(-1.5,2.5)
%\psline[linewidth=1pt]{->}(-3.5,1.5)(-1.5,0.5)
\psline[linewidth=1pt]{->}(-0.5,0.5)(2.5,0.5)
\psline[linewidth=1pt]{->}(-0.5,2.5)(2.5,2.5)
\psline[linewidth=1pt]{->}(-0.5,0.5)(2.5,2.5)
\psline[linewidth=1pt]{->}(-0.5,2.5)(2.5,0.5)
\rput(3,0.5){$y_1$}
\rput(3,2.5){$y_0$}
\rput(1,3){$1- \vare$}
\rput(1,0){$1- \vare$}
\rput(0,1.9){$ \vare$}
\rput(0,1.1){$\vare$}
\end{pspicture}
\caption{Binary symmetric channel.}
\label{fig:binary}

\end{center}
\end{figure}

A special case is the {\em binary perfect channel} when $\vare=0$: identifying together the sets $X$ and $Y$, an input message $x$ is received with certainty. Communication is then restricted only by the number of available messages, i.e. the cardinality of $X$. 

The capacity of the binary symmetric channel\footnote{\citealp[Example 2.1.1, p. 15]{cover-book-2006}.} is  $1-H(\varepsilon)$ where with some abuse of notation, $H(\varepsilon)$ denotes the entropy of the binary probability distribution $(\varepsilon,1- \varepsilon)$.

\subsection{One-shot scenario $k = n = 1$} 

Let a strategy $\sigma$ of the sender be parametrized by $\sigma(x_0|\omega_0)=1-\alpha$ and $\sigma(x_1|\omega_1)=1-\beta$; see Figure \ref{fig:strategy}. 

\begin{figure}[!ht]
\begin{center}
\psset{xunit=0.75cm,yunit=0.7cm}
\begin{pspicture}(-2,-1)(7,3.3)
\rput(-1,0.5){$\omega_1$}
\rput(-1,2.5){$\omega_0$}
\rput(-3,0.5){$\mu(\omega_1)$}
\rput(-3,2.5){$\mu(\omega_0)$}
\psdots(-3.5,1.5)(-0.5, 0.5)(-0.5, 2.5)(2.5,0.5)(2.5,2.5)
\psline[linewidth=1pt]{->}(-3.5,1.5)(-1.5,2.5)
\psline[linewidth=1pt]{->}(-3.5,1.5)(-1.5,0.5)
\psline[linewidth=1pt]{->}(-0.5,0.5)(2.5,0.5)
\psline[linewidth=1pt]{->}(-0.5,2.5)(2.5,2.5)
\psline[linewidth=1pt]{->}(-0.5,0.5)(2.5,2.5)
\psline[linewidth=1pt]{->}(-0.5,2.5)(2.5,0.5)
\rput(3,0.5){$x_1$}
\rput(3,2.5){$x_0$}
\psdots(3.5, 0.5)(3.5, 2.5)(6.5,0.5)(6.5,2.5)
\rput(7,0.5){$y_1$}
\rput(7,2.5){$y_0$}
\psline[linewidth=1pt]{->}(3.5,0.5)(6.5,0.5)
\psline[linewidth=1pt]{->}(3.5,2.5)(6.5,2.5)
\psline[linewidth=1pt]{->}(3.5,0.5)(6.5,2.5)
\psline[linewidth=1pt]{->}(3.5,2.5)(6.5,0.5)
\rput(5,3){$1 - \varepsilon$}
\rput(5,0){$1 - \varepsilon$}
\rput(3.8,1.9){$ \varepsilon$}
\rput(3.8,1.1){$ \varepsilon$}
\rput(1,3){$1 - \alpha$}
\rput(1,0){$1 - \beta$}
\rput(-0.2,1.9){$ \alpha$}
\rput(-0.2,1.1){$ \beta$}
\end{pspicture}
\caption{Strategy on the binary symmetric channel.}
\label{fig:strategy}
\end{center}
\end{figure}
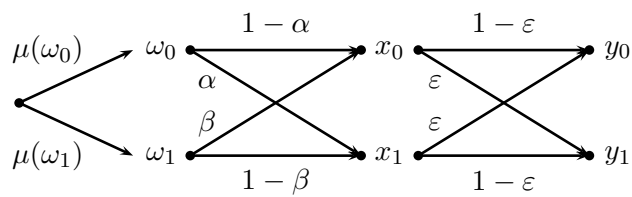

Then,
$\prob_\sigma(y_1|\omega_0)=\alpha(1-\vare)+(1-\alpha)\vare,$
$\prob_\sigma(y_0|\omega_1)=\beta(1-\vare)+(1-\beta)\vare$ and from Bayes' rule,
$$\prob_\sigma(\omega_1|y_1)=\frac{\mu(\omega_1)(1-\prob_\sigma(y_0|\omega_1))}{\mu(\omega_0)\prob_\sigma(y_1|\omega_0)+\mu(\omega_1)(1-\prob_\sigma(y_0|\omega_1))},$$
$$\prob_\sigma(\omega_1|y_0)=\frac{\mu(\omega_1)\prob_\sigma(y_0|\omega_1)}{\mu(\omega_0)(1-\prob_\sigma(y_1|\omega_0))+\mu(\omega_1)\prob_\sigma(y_0|\omega_1)}.$$ 
It is easy to see that the numbers $\prob_\sigma(y_1|\omega_0)$, $\prob_\sigma(y_0|\omega_1)$, 
$\prob_\sigma(\omega_1|y_1)$, $\prob_\sigma(\omega_1|y_0)$ all belong to the interval $[\vare,1-\vare]$. 

A pair of posteriors $(\nu_0,\nu_1)$ is said to be {\em feasible} in the one-shot scenario if there exists a number $\lambda\in[0,1]$ such that: $$(\mu(\omega_0),\mu(\omega_1))=\lambda(\nu_0(\omega_0),\nu_0(\omega_1))+(1-\lambda)(\nu_1(\omega_0),\nu_1(\omega_1)).$$
The feasible splittings can be characterized as follows.
\begin{lemma}\label{lem:feasible}
We consider the one-shot problem where $n=k=1$. A pair of posteriors $(\nu_0,\nu_1)$ is feasible if and only if $\nu_1=\nu_0=\mu$ or,
$$\vare\leq \frac{\nu_0(\omega_1)(\nu_1(\omega_1)-\mu(\omega_1))}{\mu(\omega_1)(\nu_1(\omega_1)-\nu_0(\omega_1))}\leq 1-\vare$$
and
$$\vare\leq \frac{(1-\nu_0(\omega_1))(\mu(\omega_1)-\nu_0(\omega_1))}{(1-\mu(\omega_1))(\nu_1(\omega_1)-\nu_0(\omega_1))}\leq 1-\vare.$$
\end{lemma}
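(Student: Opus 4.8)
The plan is to characterize when a pair $(\nu_0,\nu_1)$ arises as the posteriors induced by some strategy $\sigma$ on the binary symmetric channel, by inverting the Bayes formulas already displayed before the lemma. First I would dispose of the degenerate case: if the strategy transmits no information (that is, $\prob_\sigma(y_1|\omega_0)=\prob_\sigma(y_0|\omega_1)$, equivalently $\alpha=\beta=\tfrac12$ when $\vare\in(0,\tfrac12)$, or more precisely whenever the two conditional output distributions coincide), then $\nu_0=\nu_1=\mu$, and conversely the nonrevealing splitting is always feasible. So assume from now on that $\nu_0\neq\nu_1$; since both are on the segment through $\mu$, we have a well-defined $\lambda\in(0,1)$ with $\mu=\lambda\nu_0+(1-\lambda)\nu_1$, and I will identify $\nu_j$ with the posterior conditional on output $y_j$, i.e. $\lambda=\prob_\sigma(y_0)$.

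The core computation is to express the two ``channel parameters'' $\prob_\sigma(y_1|\omega_0)$ and $\prob_\sigma(y_0|\omega_1)$ in terms of $\mu$ and the posteriors, and then impose that each of them lies in $[\vare,1-\vare]$, which (for $\vare\in[0,\tfrac12)$) is exactly the range of $\alpha(1-\vare)+(1-\alpha)\vare$ as $\alpha$ ranges over $[0,1]$, and likewise for $\beta$. Concretely, Bayes' rule gives $\prob_\sigma(\omega_0|y_1)=\dfrac{\mu(\omega_0)\prob_\sigma(y_1|\omega_0)}{\prob_\sigma(y_1)}$ and $\prob_\sigma(\omega_1|y_0)=\dfrac{\mu(\omega_1)\prob_\sigma(y_0|\omega_1)}{\prob_\sigma(y_0)}$, so
\[
\prob_\sigma(y_1|\omega_0)=\frac{\prob_\sigma(y_1)\,\nu_1(\omega_0)}{\mu(\omega_0)},\qquad
\prob_\sigma(y_0|\omega_1)=\frac{\prob_\sigma(y_0)\,\nu_0(\omega_1)}{\mu(\omega_1)}.
\]
Now I eliminate $\prob_\sigma(y_0)=\lambda$ and $\prob_\sigma(y_1)=1-\lambda$ using the splitting identity in the $\omega_1$-coordinate, $\mu(\omega_1)=\lambda\nu_0(\omega_1)+(1-\lambda)\nu_1(\omega_1)$, which yields $\lambda=\dfrac{\nu_1(\omega_1)-\mu(\omega_1)}{\nu_1(\omega_1)-\nu_0(\omega_1)}$ and $1-\lambda=\dfrac{\mu(\omega_1)-\nu_0(\omega_1)}{\nu_1(\omega_1)-\nu_0(\omega_1)}$. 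Substituting and writing $\nu_1(\omega_0)=1-\nu_1(\omega_1)$, $\mu(\omega_0)=1-\mu(\omega_1)$ gives precisely the two displayed fractions in the statement — one is $\prob_\sigma(y_0|\omega_1)$ (the first display, which equals $\tfrac{\nu_0(\omega_1)(\nu_1(\omega_1)-\mu(\omega_1))}{\mu(\omega_1)(\nu_1(\omega_1)-\nu_0(\omega_1))}$) and the other is $\prob_\sigma(y_1|\omega_0)$ (the second display). Imposing membership in $[\vare,1-\vare]$ then gives the two inequality chains.

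For the converse, given $(\nu_0,\nu_1)\neq(\mu,\mu)$ satisfying both inequality chains, I would set $\lambda$ by the formula above (checking $\lambda\in[0,1]$, which follows because $\mu$ lies between $\nu_0$ and $\nu_1$ once we know a valid splitting exists — actually this needs a small argument: the inequalities force $\mu(\omega_1)$ to lie between $\nu_0(\omega_1)$ and $\nu_1(\omega_1)$, so $\lambda\in[0,1]$ automatically), define $\alpha,\beta\in[0,1]$ by solving $\alpha(1-\vare)+(1-\alpha)\vare=\prob_\sigma(y_1|\omega_0)$ and similarly for $\beta$ (solvable precisely because the right-hand sides are in $[\vare,1-\vare]$), and verify that the resulting strategy $\sigma$ induces exactly $(\nu_0,\nu_1)$ by recomputing the Bayesian posteriors. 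The main obstacle — more bookkeeping than conceptual — is handling the edge cases cleanly: the boundary $\vare=0$ (perfect channel), the orientation question of which posterior is ``$\nu_0$'' versus ``$\nu_1$'' (the formulas are not symmetric in $0\leftrightarrow1$, so one must be careful that the labeling in the lemma matches the convention $\nu_j=\prob_\sigma(\cdot\,|y_j)$), and making sure division by $\nu_1(\omega_1)-\nu_0(\omega_1)$ is legitimate, which is exactly why the case $\nu_0=\nu_1$ is split off first. I would also double-check the sign conventions so that the two inequality chains really do translate to $\prob_\sigma(y_1|\omega_0),\prob_\sigma(y_0|\omega_1)\in[\vare,1-\vare]$ and not to their complements.
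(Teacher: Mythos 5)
Your approach is the right one and is essentially the paper's: the paper also reduces the problem to expressing the channel conditionals $a=\prob_\sigma(y_1|\omega_0)$ and $b=\prob_\sigma(y_0|\omega_1)$ in terms of $\mu,\nu_0,\nu_1$ (the paper does it by solving the two-equation Bayes system, you do it by directly inverting Bayes' rule and eliminating $\lambda$; these are the same computation) and then imposing $a,b\in[\vare,1-\vare]$. Your treatment of the degenerate case and of the converse is a bit more explicit than the paper's, which is fine.

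However, there is a real gap: you assert that the substitution ``gives precisely the two displayed fractions in the statement'' without carrying it out. It does not. Writing $\prob_\sigma(y_1|\omega_0)=\dfrac{(1-\lambda)\,\nu_1(\omega_0)}{\mu(\omega_0)}$ with $1-\lambda=\dfrac{\mu(\omega_1)-\nu_0(\omega_1)}{\nu_1(\omega_1)-\nu_0(\omega_1)}$ and $\nu_1(\omega_0)=1-\nu_1(\omega_1)$ yields
\[
\prob_\sigma(y_1|\omega_0)=\frac{\bigl(1-\nu_1(\omega_1)\bigr)\bigl(\mu(\omega_1)-\nu_0(\omega_1)\bigr)}{\bigl(1-\mu(\omega_1)\bigr)\bigl(\nu_1(\omega_1)-\nu_0(\omega_1)\bigr)},
\]
whose numerator carries $1-\nu_1(\omega_1)$, whereas the lemma's second display has $1-\nu_0(\omega_1)$. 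The two are not equal (check $\mu(\omega_1)=\tfrac12$, $\nu_0(\omega_1)=0$, $\nu_1(\omega_1)=1$, $\vare=0$: the correct expression gives $a=0$, the lemma's display gives $a=1$). This is in fact a typo in the lemma, reproduced in the paper's own Appendix A.1 formula for $a$ — plugging the paper's $a$ back into the first equation of its own system yields $\nu_1(\omega_1)/(1-\nu_0(\omega_1)+\nu_1(\omega_1))$ rather than $\nu_1(\omega_1)$, so it fails whenever $\nu_0\neq\nu_1$. The plotted boundaries of the green regions in Figure~\ref{fig:PosteriorsAsymptotic0} are consistent with the corrected formula, not with the one in the lemma's display. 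So your method is correct, but the unverified claim that it matches the stated inequality is false; carrying out the algebra would both complete your proof and expose the typo.
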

The proof is in Appendix \ref{A11}. As an illustration, take the uniform prior $(\frac12,\frac12)$ and a level of noise $\vare=\frac14$. The feasible posteriors are shown by the colored green regions on Figure \ref{fig:PosteriorsAsymptotic0}.

From the previous discussion, it is impossible to induce beliefs with $\nu(\omega_1)>\frac34$. 
Therefore, the receiver will never be confident enough to invest and the payoff is 0 for the sender.

\subsection{Asymptotic scenario with $k = n \to \infty$}\label{mainex2}
We consider the case where $k=n$ tends to infinity with a noise level of $\vare=\frac14$ and compute the value of the  optimal splitting problem with information constraint. The capacity of the channel is $1-H(\frac14)$, the entropy of the uniform prior is 1; therefore, the information constraint is $\sum_m\lambda_mH(\mu_m)\geq H(\frac14)$. 
Under this constraint the optimal splitting for the sender satisfies:
$$\bigg(\frac12,\frac12\bigg)=\lambda\bigg(\frac18,\frac78\bigg)+(1-\lambda)(\nu_0(\omega_0),\nu_0(\omega_1))$$
and
$$H\bigg(\frac14\bigg)=\lambda H\bigg(\frac78\bigg)+(1-\lambda)H(\nu_0(\omega_1)).$$
To see why it is optimal, first observe that the sender has to bring on some  posterior, denoted by $\nu_1$, with $\nu_1(\omega_1)>\frac78$ in order to get some payoff. To get it with the highest probability, he should aim for $\nu_1(\omega_1)=\frac78$. Among the posteriors that induce investment, this is also the one with highest entropy. Second, to maximize expected payoffs, the remaining posteriors must be as far away as possible from the prior; that is, the information constraint should bind. Additionally, note that only one posterior, denoted by $\nu_0$, will be optimally generated in the region $\nu_0(\omega_1)<\frac78$. Since the entropy is strictly concave, replacing two posteriors on this region by their average does not change the payoff and increases the entropy.

Solving these two equations numerically we get, $\nu_0(\omega_1)\approx 0.340$ and $V(\mu,Q)=\lambda\approx0.298$ instead of the zero value for the  one-shot scenario and about $52.1\%$ of the unconstrained optimum $\frac47$.

This is shown in Figure \ref{fig:optimalentropy01} which plots the payoff function and the entropy function. The splitting of $\mu$ into $\nu_0,\nu_1$ is shown by the three points on the horizontal axis. On the vertical line $\mu(\omega_1)=\frac12$,  we can read the average payoff with the red line and the average entropy with the green line. To see optimality on the picture, if we move $\nu_0(\omega_1)$ to the right, then the average payoff decrease, and if we move it to the left, the average entropy will fall below $H(\frac14)$ and the information constraint will be violated.

\begin{figure}[!ht]
\begin{center}
\psset{xunit=5cm,yunit=5cm}
\begin{pspicture}(0,-0.4)(1,1.3)
\psplot[plotpoints=100]{0.001}{0.999}{x ln 2 ln div x mul neg 1 x neg add ln 2 ln div 1 x neg add mul neg add}
\psline{->}(0,-0.1)(0,1.2)
\psline{->}(-0.1,0)(1.2,0)
\psline{->}(1,-0.1)(1,1.2)
\rput[u](-0.03,-0.04){$0$}
\rput[u](-0.03,1){$1$}
\rput[u](1.03,1){$1$}
\rput[u](1.03,-0.04){$1$}
\rput[u](1.3,-0.03){$\nu(\omega_1)$}
\rput[u](1.12,1.15){$H(\nu)$}
\rput[u](-0.1,1.15){$u_S^*(\nu)$}
\psline[linestyle=dotted](0,1)(1,1)
%\rput[u](0.5,-0.07){$\frac12$}
\rput[bl]{-45}(0.5,-0.08){$\mu(\omega_1) = \frac12$}
\psline[linecolor=blue,linewidth=3.5pt](0,0)(0.875,0)
\psline[linecolor=blue,linewidth=3.5pt](0.875,1)(1,1)
\psline[linestyle=dotted](0.875,0)(0.875,1)
\psline[linestyle=dotted](0.3404,0)(0.3404,0.9252)
\psline[linestyle=dashed](0.5,0)(0.5,1)
\psline[linestyle=dashed](0,0.81)(1,0.81)
\rput[u](1.12,0.81){$H(\frac14)$}
 \psline[linestyle=dotted, linecolor=black](0.5,0.298)(0,0.298)
\rput[r](-0.015,0.298){$V(\mu,C(Q))\approx0.298$}
\psdots[linecolor = red,dotscale=2 2](0.3404,0)(0.875,1)
\psline[linestyle=solid, linecolor=red](0.34,0)(0.875,1)
\psdots[linecolor = red](0.5,0.298)
\psdots[linecolor = green](0.875,0.5436)(0.3404,0.9252)(0.5,0.81)
\psline[linestyle=solid, linecolor=green](0.875,0.5436)(0.3404,0.9252)
\rput[bl]{-45}(0.845,-0.08){$\nu_1(\omega_1) = \frac78$}
\rput[bl]{-45}(0.32,-0.08){$\nu_0(\omega_1) \approx 0.340$}
\end{pspicture}
\caption{For a noise parameter $\vare=\frac14$, the optimal splitting is given by $\nu_0(\omega_1)\approx 0.340$ and $\nu_1(\omega_1) = \frac78$.}
 \label{fig:optimalentropy01}
\end{center}
\end{figure}
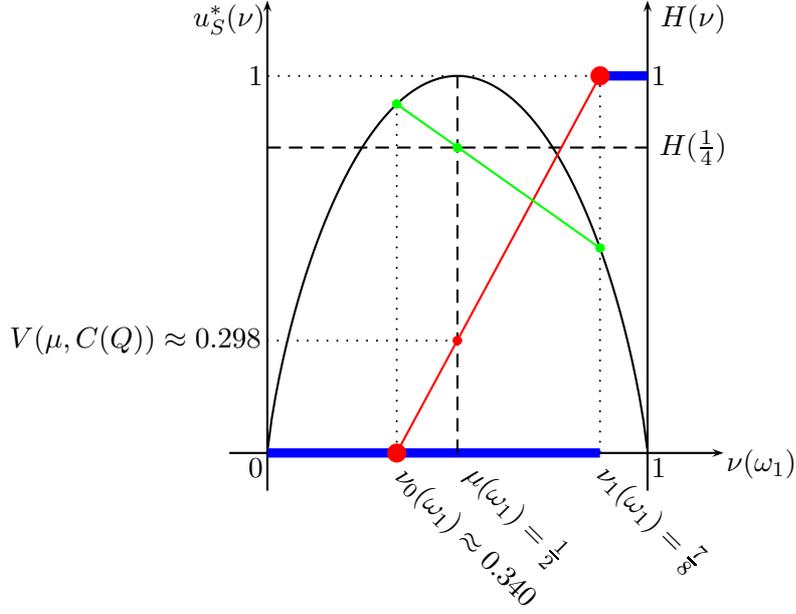

The optimal splitting is also marked on Figure \ref{fig:PosteriorsAsymptotic0} which shows the set of pairs of posteriors for the splittings that satisfy the information constraint (union of green and blue regions).

\begin{figure}[!ht]\label{fig:feasible}
\begin{center}
\psset{xunit=6cm,yunit=6cm}
\begin{pspicture}(0,-0.3)(1,1.2)
\pscustom[linewidth=0pt,linestyle=none,fillstyle=solid,fillcolor=blue]{
\fileplot[]{DataIC/Data1IC0.25.dat}
\psline{}(0.5,1)(0.5,0.5)(0,0.5)
}
\fileplot[]{DataIC/Data1IC0.25.dat}
\pscustom[linewidth=0pt,linestyle=none,fillstyle=solid,fillcolor=blue]{
\fileplot[]{DataIC/Data2IC0.25.dat}
\psline{}(1,0.5)(0.5,0.5)(0.5,0)
}
\fileplot[]{DataIC/Data2IC0.25.dat}
%%
%\pspolygon*[linecolor=lightgray](0,0)(0.5,0)(0.5,0.5)(0,0.5)
%\pspolygon*[linecolor=lightgray](1,1)(0.5,1)(0.5,0.5)(1,0.5)
\pspolygon[fillstyle=vlines](0,0)(0.5,0)(0.5,0.5)(0,0.5)
\pspolygon[fillstyle=vlines](1,1)(0.5,1)(0.5,0.5)(1,0.5)
\pscustom[linewidth=0pt,linestyle=none,fillstyle=vlines,]{
\fileplot[]{DataIC/Data2IC0.25.dat}
\psline{}(1,0)
}
\pscustom[linewidth=0pt,linestyle=none,fillstyle=vlines,]{
\fileplot[]{DataIC/Data1IC0.25.dat}
\psline{}(0,1)
}
\pscustom[linewidth=0pt,linestyle=none,fillstyle=solid,fillcolor=green]{
\psplot[plotpoints=100,linecolor=brown]{0.25}{0.5}{0.875 x mul  0.5 neg add 0.625 neg x add div}
\psplot[plotpoints=100,linecolor=blue]{0.5}{0.25}{0.375 x mul  0.125 neg x add div}
}
\pscustom[linewidth=0pt,linestyle=none,fillstyle=solid,fillcolor=green]{
\psplot[plotpoints=100,linecolor=red]{0.5}{0.75}{0.125 x mul  0.375 neg x add div}
\psplot[plotpoints=100,linecolor=green]{0.75}{0.5}{0.625 x mul  0.5 neg add 0.875 neg x add div}
}
%\psplot[plotpoints=100,linecolor=red]{0.44}{0.9}{0.125 x mul  0.375 neg x add div}
%\psplot[plotpoints=100,linecolor=blue]{0.22}{0.9}{0.375 x mul  0.125 neg x add div}
%\psplot[plotpoints=100,linecolor=yellow]{0.1}{0.77}{0.625 x mul  0.5 neg add 0.875 neg x add div}
%\psplot[plotpoints=100,linecolor=brown]{0.1}{0.57}{0.875 x mul  0.5 neg add 0.625 neg x add div}
\psline{->}(0,-0.1)(0,1.2)
\psline{->}(-0.1,0)(1.2,0)
\psline{-}(1,-0.1)(1,1.2)
\psline{-}(-0.1,1)(1.2,1)
\psline{-}(0.5,0)(0.5,1)
\psline{-}(0,0.5)(1,0.5)
\psline[linestyle=dotted](0.25,0)(0.25,1)
\psline[linestyle=dotted](0.75,0)(0.75,1)
\psline[linestyle=dotted](0,0.25)(1,0.25)
\psline[linestyle=dotted](0,0.75)(1,0.75)
\rput[u](-0.03,-0.03){$0$}
\rput[u](-0.03,1.03){$1$}
\rput[u](-0.05,0.5){$\frac12$}
\rput[u](-0.05,0.25){$\frac14$}
\rput[u](-0.05,0.75){$\frac34$}
\rput[u](1.03,-0.03){$1$}
\rput[u](0.5,-0.06){$\frac12$}
\rput[u](0.25,-0.06){$\frac14$}
\rput[u](0.75,-0.06){$\frac34$}
\rput[u](-0.1,1.15){$\nu_0(\omega_1)$}
\rput[u](1.25,-0.03){$\nu_1(\omega_1)$}
\psdots[linecolor = red,dotscale=2 2](0.875,0.3404)
\psline[linestyle=solid, linecolor=red](0,0.3404)(0.875,0.3404)(0.875,0)
\rput[r](-0.02,0.3404){\textcolor[rgb]{1,0,0}{$\nu_0(\omega_1)  \approx 0.340$}}
%\rput[u](0.875,-0.06){\textcolor[rgb]{1,0,0}{$\frac78$}}
\rput[bl]{-45}(0.845,-0.06){\textcolor[rgb]{1,0,0}{$\nu_1(\omega_1) = \frac78$}}
\end{pspicture}
\caption{For a noise parameter $\vare=\frac14$, the green lenses correspond to the feasible posteriors $(\nu_0,\nu_1)$ characterized in Lemma \ref{lem:feasible} for the one-shot scenario $k=n=1$. The blue and green regions correspond to the feasible posteriors $(\nu_0,\nu_1)$ in the asymptotic scenario where $k = n \to \infty$. The red point corresponds to the optimal splitting, also depicted in Figure \ref{fig:optimalentropy01}. The hatched areas correspond to the nonfeasible posteriors $(\nu_0,\nu_1)$. }
\label{fig:PosteriorsAsymptotic0}
\end{center}
\end{figure}
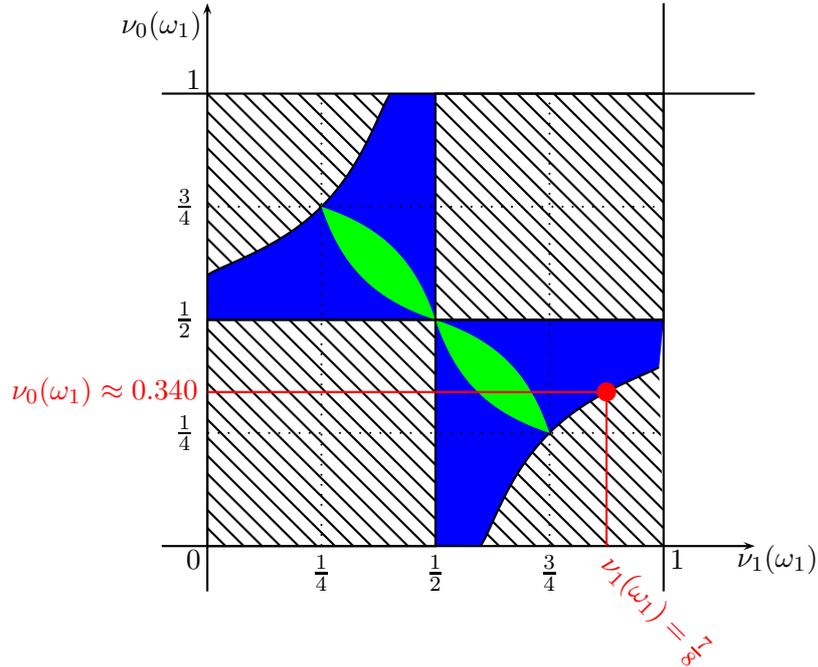

\newpage

On Figure \ref{fig:ValueNoise}, we represent the value $V(\mu,C(Q))$ of the optimal splitting problem as a function of the prior $\mu$, for different values for the noise parameter $\vare\in \Big\{\frac{1}{20},\frac{3}{20}, \frac{1}{4},\frac{7}{20},\frac{9}{20},\frac{99}{200}\Big\}$. It is found by solving the following system for $\nu_0$:
$$(\mu(\omega_0),\mu(\omega_1)) =\lambda\bigg(\frac18,\frac78\bigg)+(1-\lambda)(\nu_0(\omega_0),\nu_0(\omega_1))$$
and
$$
H(\mu(\omega_1)) - 1 + H(\varepsilon)=\lambda H\bigg(\frac78\bigg)+(1-\lambda)H(\nu_0(\omega_1)).$$
When $\mu(\omega_1) = \frac12$ and $\varepsilon =  \frac14$, we recover the value $V(\mu,C(Q)) \approx0.298$ as in Figure \ref{fig:optimalentropy01}.

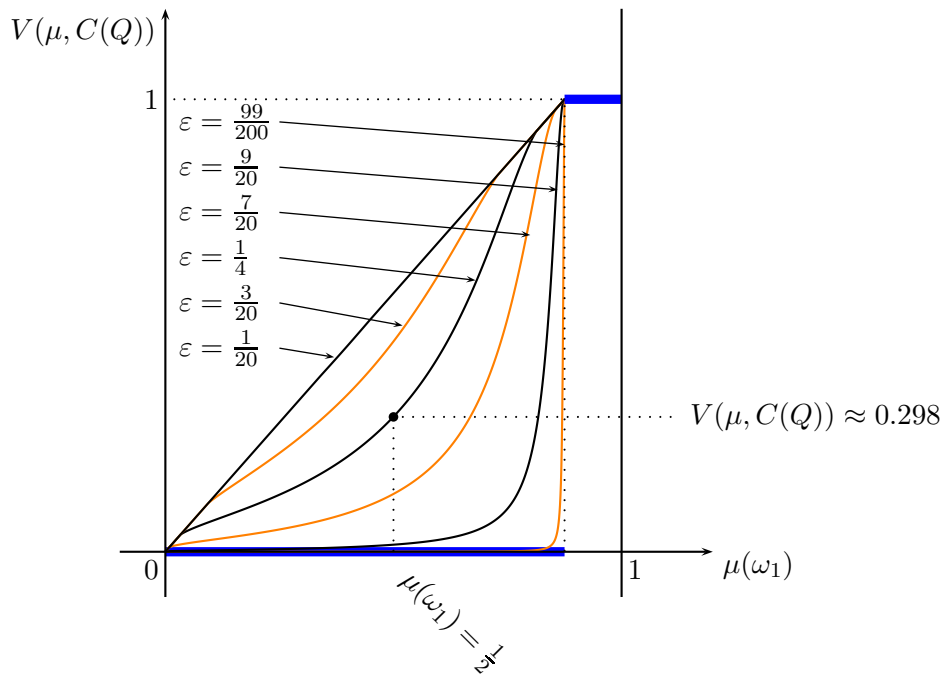
\begin{figure}[!ht]\label{fig:feasible}
\begin{center}
\psset{xunit=6cm,yunit=6cm}
\begin{pspicture}(0,-0.3)(1,1.2)
\rput[u](-0.03,-0.04){$0$}
\rput[u](-0.03,1){$1$}
\rput[u](1.03,-0.04){$1$}
\rput[u](1.3,-0.03){$\mu(\omega_1)$}
\rput[u](-0.18,1.15){$V(\mu,C(Q))$}

\psline[linecolor=blue,linewidth=3.5pt](0,0)(0.875,0)
\fileplot[linecolor=orange]{DataV/DataV_0.495.dat}
\fileplot[linecolor=black]{DataV/DataV_0.45.dat}
\fileplot[linecolor=orange]{DataV/DataV_0.35.dat}
\fileplot[linecolor=black]{DataV/DataV_0.25.dat}
\fileplot[linecolor=orange]{DataV/DataV_0.15.dat}
\fileplot[linecolor=black]{DataV/DataV_0.05.dat}
\psline[linestyle=dotted](0.875,0)(0.875,1)(0,1)
\psline[linecolor=blue,linewidth=3.5pt](0.875,1)(1,1)
\psline{->}(0,-0.1)(0,1.2)
\psline{->}(-0.1,0)(1.2,0)
\psline{-}(1,-0.1)(1,1.2)
\psdot(0.5,0.298)
%\rput[l](0.5,0.4){$\varepsilon = \frac14$}
\rput[l](0.03,0.95){$\varepsilon = \frac{99}{200}$}
\rput[l](0.03,0.85){$\varepsilon = \frac{9}{20}$}
\rput[l](0.03,0.75){$\varepsilon = \frac{7}{20}$}
\rput[l](0.03,0.65){$\varepsilon = \frac14$}
\rput[l](0.03,0.55){$\varepsilon = \frac{3}{20}$}
\rput[l](0.03,0.45){$\varepsilon= \frac{1}{20}$}
\psline[linewidth=0.5pt]{->}(0.25,0.95)(0.875,0.9)
\psline[linewidth=0.5pt]{->}(0.25,0.85)(0.86,0.8)
\psline[linewidth=0.5pt]{->}(0.25,0.75)(0.8,0.7)
\psline[linewidth=0.5pt]{->}(0.25,0.65)(0.68,0.6)
\psline[linewidth=0.5pt]{->}(0.25,0.55)(0.52,0.5)
\psline[linewidth=0.5pt]{->}(0.25,0.45)(0.37,0.43)

\psline[linestyle=dotted, linecolor=black](0.5,0)(0.5,0.298)(1.13,0.298)
\rput[l](1.15,0.298){$V(\mu,C(Q)) \approx0.298$}
\rput[bl]{-45}(0.5,-0.08){$\mu(\omega_1) = \frac12$}
\end{pspicture}
\caption{Value of the optimal splitting problem as a function of the prior $\mu$, for different noise parameters $\varepsilon \in \Big\{\frac{1}{20},\frac{3}{20}, \frac{1}{4},\frac{7}{20},\frac{9}{20},\frac{99}{200}\Big\}$.}
\label{fig:ValueNoise}
\end{center}
\end{figure}

Observe that the function $V(\mu,C(Q))$ is {\em not} concave with respect to the prior $\mu$. From Corollary \ref{coro:Cav}, $V(\mu, c)$ is the concavification of the function $u^H_S$ calculated at $(\mu, H(\mu)-c)$, so this composed function need not be concave.

\subsection{Perfect binary channel with $k = \frac{n}{2} \to \infty$}
We consider the same example as before, repeated $n$ times with the uniform prior $\mu=(\frac12,\frac12)$. In line with the motivating example from the introduction, we consider a perfect channel and assume that the sender has at its disposal half as many messages as needed to communicate perfectly, that is $k = \frac{n}{2}$. Since the capacity of the binary perfect channel is one, $\frac{k}{n}=\frac 12$ and $H(\mu)=1$, the information constraint is:
$$H(\mu) - \sum_m\lambda_m H(\mu_m) \leq \frac12\Longleftrightarrow \sum_m\lambda_m H(\mu_m) \geq \frac12.$$
Observe that this constraint is identical to the one obtained with a binary symmetric channel with noise $\varepsilon$ such that $H(\varepsilon) = \frac12$ (i.e., $\vare\approx 0.110$).
Therefore, the optimal splitting is given by the following system:
$$(\mu(\omega_0),\mu(\omega_1)) =\lambda\bigg(\frac18,\frac78\bigg)+(1-\lambda)(\nu_0(\omega_0),\nu_0(\omega_1))$$
and
$$\frac12=\lambda H\bigg(\frac78\bigg)+(1-\lambda)H(\nu_0(\omega_1)).$$
Solving numerically, we find $V(\mu,\frac12) \approx 0.519$.

\subsection{On the number of posteriors}
We give now an example showing the tightness of the bound $\min\{|A|, |\Omega|+1\}$ on the number of posteriors, given in Corollary \ref{lem:actions}. The payoff table is as follows:

\begin{center}
\begin{tabular}{ccccc}
                            & $a_0$ & $a_1$& $a_2$& \\ 
 \cline{2-4}
\multicolumn{1}{l|}{$\omega_0$} & \multicolumn{1}{l|}{$\,\,0,0\,\,$} & \multicolumn{1}{l|}{$1,-7$} &  \multicolumn{1}{l|}{$\,\,1,1$}& $\frac12$  \\ \cline{2-4}
\multicolumn{1}{l|}{$\omega_1$}   & \multicolumn{1}{l|}{$\,\,0,0\,\,$} & \multicolumn{1}{l|}{$\,\,1,1$} & \multicolumn{1}{l|}{$1,-7$} & $\frac12$  \\ \cline{2-4}
&&&&
\end{tabular}
\end{center}

\vskip0.5cm
\noindent There are two risky projects ($a_1$ and $a_2$) and the sender wants to persuade the receiver to invest in any of them. The receiver invests only if $\nu(\omega_1)> 7/8$ or $\nu(\omega_1)<1/8$.

With unrestricted communication, the solution is clear: the sender fully discloses the state and gets a payoff of 1. However, with a binary symmetric channel with noise $\vare=1/4$, the sender gets 0 in the one-shot scenario. Consider now the case where $k=n\to \infty$.

The ``one-sided'' solution of Section \ref{mainex2} is feasible. Recall that this is the splitting such that:
$$\bigg(\frac12,\frac12\bigg)=\lambda\bigg(\frac18,\frac78\bigg)+(1-\lambda)(\nu_0(\omega_0),\nu(\omega_1))$$
and
$$H\bigg(\frac14\bigg)=\lambda H\bigg(\frac78\bigg)+(1-\lambda)H(\nu_0(\omega_0),\nu_0(\omega_1)).$$
with  $\nu_0(\omega_1)\approx 0.340$ and $\lambda\approx0.298$. It is easy to see that this is optimal among the splittings with two posteriors. Indeed, it is not possible that the two posteriors induce investment while satisfying the information constraint.

However, this is not optimal. The optimal splitting has three posteriors and is as follows:
$$\bigg(\frac12,\frac12\bigg)=(1-\lambda)\bigg(\frac12,\frac12\bigg)+\frac{\lambda}{2}\bigg(\frac18,\frac78\bigg)+ \frac{\lambda}{2}\bigg(\frac78,\frac18\bigg)$$
with
$$H\bigg(\frac14\bigg)=(1-\lambda)H\bigg(\frac12\bigg)+\frac{\lambda}{2} \, H\bigg(\frac18\bigg)+ \frac{\lambda}{2}\,  H\bigg(\frac78\bigg).$$
This pins down a unique $\lambda$ and solving numerically yields $\lambda\approx 0.413.$ Since $\lambda$ is the probability of investment, we get $V(\mu,Q)\approx 0.413$ which is about $38\%$ better than what is achieved with a splitting with two points.

To see that this is optimal, first since there are two states, we know that three posteriors are sufficient. Second, it is not possible to have all posteriors in the investment region and to satisfy the information constraint. If there is only one posterior in the investment region, then the splitting achieves no more than the ``one-sided'' solution. Therefore, it is optimal to have two posteriors in the investment region and one outside of it. However, then, it is optimal to choose the point in the middle region to be $(\frac12,\frac12)$, since this is the one with the highest entropy.

\section{Beyond identical problems}\label{sec:extension}
The main result can  be extended to series of persuasion problems which are not all identical, but such that each type of problem is repeated many times.
Suppose that we have a family of persuasion problems indexed by a type parameter $z$ in a finite set $Z$. That is, for every $z\in Z$, there is a prior probability distribution $\mu(\cdot |z) \in \Delta(\Omega)$ and  payoff functions $u_i(\cdot,z):\Omega\times A\to \reals$ for each player $i=S,R$. The series of persuasion problems is given by a sequence $z^n=(z_1,\dots,z_n)$ which is commonly known by both players. The distribution of states is as follows:
$$\mu^n(\omega^n|z^n):=\prod_{t=1}^n\mu(\omega_t|z_t).$$
If the sequence of states and actions are respectively $\omega^n, a^n$, the payoff for player $i$ is $\frac1n\sum_{t=1}^n u_i(\omega_t,a_t,z_t)$.
The communication technology is still given by a channel $Q:X\to\Delta(Y)$ used $k$ times, so that the strategy sets are the same as before for both players. The optimal robust payoff of the sender is defined as before and is denoted by $U^*_S(\mu^n,Q^k,z^n)$.

For each posterior belief $\nu\in\Delta(\Omega)$ and type $z\in Z$, the set of optimal actions of the receiver is
$A^{*z}(\nu)=\argmax_{a\in A}\sum_\omega \nu(\omega)u_{R}(\omega,a,z)$ and 
we denote  by $u^{*z}_S(\nu)=\min_{a\in A^{*z}(\nu)}\sum_\omega \nu(\omega)u_{S}(\omega,a,z)$ the robust payoff of the sender at the belief $\nu$.

\begin{definition} For $\pi\in\Delta(Z)$ and $c\ge0$, the optimal splitting problem with information constraint is as follows: \begin{eqnarray*}
V^Z(\mu,c,\pi)=&\sup & \sum_{z} \pi(z)  \sum_{m}  \lambda^z_m u^{*z}_S(\nu^z_m)\\
&\mathrm{ s.t.} &  \sum_m\lambda^z_m\nu^z_m=\mu(\cdot | z), \quad \forall z\in Z,\\
&\mathrm{ and } &  \sum_z \pi(z) \Big(H(\mu(\cdot | z))-\sum_m\lambda^z_m H(\nu^z_m)\Big) \leq c.
\end{eqnarray*}
\end{definition} 
The interpretation is as follows. Suppose that $\pi(z)$ represents the probability, or frequency, of occurrence of $z$. Conditional on $z$ which is known by both players, the sender performs a spitting of
$\mu(\cdot | z)$, $\sum_m\lambda^z_m\nu^z_m=\mu(\cdot |z)$,
and gets the payoff $\sum_{m}  \lambda^z_m u^{*z}_S(\nu^z_m)$. The information constraint imposes the {\em average} mutual information to be less than or equal to the capacity.

Given a sequence $z^n\in Z^n$, let  $\pi_{n}\in\Delta(Z)$ be  the empirical frequency induced by the sequence: for each $z\in Z$, $\pi_{n}(z)=\frac{1}{n}|\{t : z_t=z\}|$.

\begin{theorem}\label{mainthmZ}
\begin{enumerate}
\item The optimal robust payoff of the sender is no more than the value of the optimal splitting problem with information constraint. For each pair of integers $n,k$:

$$U^*_S(\mu^{n},Q^k,z^n) \leq V^Z(\mu,\frac{k}{n}C(Q),\pi_{n}).$$

\item  The optimal robust payoff of the sender converges to the value of the optimal splitting problem with information constraint in the following sense. For each $\pi\in\Delta(Z)$ and $r\in[0,+\infty]$, for each pair of sequences of integers $(k_j,n_j)_{j\in\naturals}$ such that $\lim\limits_{j \to \infty}\max(n_j,k_j)= \infty$, $\lim\limits_{j \to \infty}\frac{k_j}{n_j}=r$ and $\lim\limits_{j \to \infty}\pi_{n_j}=\pi$, we have:

$$\lim_{j\to \infty}U^*_S(\mu^{n_j},Q^{k_j},z^{n_j}) = V^Z(\mu,rC(Q),\pi).$$

\end{enumerate}
\end{theorem}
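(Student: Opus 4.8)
The plan is to reduce the type-dependent problem to the single-type case already handled in Theorem \ref{mainthm}, by regarding the common knowledge of $z^n$ as a relabelling of coordinates. First I would prove the upper bound (part 1). Fix a strategy $\sigma:\Omega^n\to\Delta(X^k)$ and a best reply $\tau$. As in the sketch for Theorem \ref{mainthm}, introduce an auxiliary uniform index $\rv t$ on $\{1,\dots,n\}$, known by the receiver, and write the average payoff as $\E[\bar u_S]=\sum_{t}\tfrac1n\,\E[u_S(\rv\omega_{t},\tau(\rv y^k)_t,z_t)]$. Grouping the indices $t$ according to their type $z_t=z$ and writing $n_z=|\{t:z_t=z\}|$ (so $\pi_n(z)=n_z/n$), the conditional law of $\rv\omega_t$ given $\rv t=t$ is $\mu(\cdot\mid z)$, and the posterior $\nu^z_{t}:=\prob_\sigma(\rv\omega_t\in\cdot\mid \rv y^k)$ defines, for each $z$, a splitting of $\mu(\cdot\mid z)$ whose weights are $\lambda^z_{t}=1/n_z$ over the indices $t$ with $z_t=z$. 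The realized payoff at such a $t$ is at least $u^{*z}_S(\nu^z_t)$ since $\tau$ is a best reply. Summing, $\E[\bar u_S]\ge$ is replaced by $\le$: the realized payoff is at most $\sum_z\pi_n(z)\sum_{t:z_t=z}\tfrac{1}{n_z}u^{*z}_S(\nu^z_t)$, which is feasible for $V^Z$ provided the average information constraint holds. For the latter, the key identity is the chain rule: the total mutual information between the state sequence and the output, divided by $n$, is $\tfrac1n I(\rv\omega^n;\rv y^k)=\sum_z\pi_n(z)\big(H(\mu(\cdot\mid z))-\sum_{t:z_t=z}\tfrac1{n_z}H(\nu^z_t)\big)$ after conditioning on $z^n$ (which is deterministic), and by the data-processing / sufficient-statistic argument of the sketch, $I(\rv\omega^n;\rv y^k)\le I(\rv x^k;\rv y^k)\le kC(Q)$. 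Dividing by $n$ gives the constraint with bound $\tfrac{k}{n}C(Q)$, so the realized payoff is $\le V^Z(\mu,\tfrac kn C(Q),\pi_n)$; taking $\sup_\sigma\min_\tau$ yields part 1.

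For part 2 (asymptotic achievability), the plan is to mimic the three-step construction sketched for Theorem \ref{mainthm}, now type by type. Fix $\pi$, $r$, and sequences with $\pi_{n_j}\to\pi$, $k_j/n_j\to r$, $\max(n_j,k_j)\to\infty$. By continuity of $V^Z$ in $(\pi,c)$ (which follows from the concavification characterization, exactly as $V$ is continuous in $c$), it suffices to $\vare$-approximate $V^Z(\mu,rC(Q),\pi)$. Take an $\vare$-optimal family of splittings $(\lambda^z_m,\nu^z_m)_{m}$, one for each $z\in Z$, satisfying the average constraint with strict inequality; as in Step 1 of the sketch, perturb each so that the sender-worst optimal action $a^{*z}_m$ is locally unique near $\nu^z_m$. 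Because $z^{n_j}$ is common knowledge, the sender can partition the $n_j$ coordinates into the $|Z|$ blocks of size $n_z^{(j)}=|\{t:z_t=z\}|\sim\pi(z)n_j$, and on block $z$ run Shannon's coding scheme for the source $\mu(\cdot\mid z)$ with target splitting $(\lambda^z_m,\nu^z_m)_m$. The channel uses must be apportioned across blocks: allocate $k^{(j)}_z$ channel uses to block $z$ with $k^{(j)}_z/n^{(j)}_z \to$ a rate that leaves slack against $C(Q)$ in every block while $\sum_z k^{(j)}_z\le k_j$; feasibility of such an allocation is exactly the strict-inequality form of the average information constraint (one distributes the aggregate slack $k_j - \sum_z n^{(j)}_z\cdot\big(H(\mu(\cdot\mid z))-\sum_m\lambda^z_mH(\nu^z_m)\big)/C(Q)$ proportionally). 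Then, as in Steps 2–3 of the sketch, within each block the Bayesian posteriors $\prob(\omega_t\mid y^{k})$ are, with probability close to one, close to $\nu^z_{m_t}$ at all but a vanishing fraction of coordinates, so the strategic receiver plays $a^{*z}_{m_t}$ there, and the block-$z$ average payoff is within $\vare$ of $\sum_m\lambda^z_m u^{*z}_S(\nu^z_m)$; averaging over blocks with weights $\pi_{n_j}(z)\to\pi(z)$ gives the claim. The decoupling across blocks is legitimate because the source is independent across coordinates and the channel is memoryless, so errors in one block do not contaminate the posteriors in another.

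The main obstacle is the joint apportionment of the $k_j$ channel uses among the $|Z|$ blocks while simultaneously (i) respecting the hard budget $\sum_z k^{(j)}_z\le k_j$, (ii) keeping each per-block rate strictly below $C(Q)$ so Shannon's theorem applies in that block, and (iii) matching, in the limit, the per-block entropy-reduction targets that make the overall payoff optimal. This is a purely combinatorial/continuity argument but it is where the "average" nature of the information constraint in $V^Z$ — as opposed to a per-type constraint — is really used: slack in low-information types subsidizes high-information types. A secondary technical point is that the perturbation ensuring local uniqueness of $a^{*z}_m$ must be carried out for all $z$ simultaneously without destroying the strict feasibility of the average constraint; since $Z$ is finite this is routine. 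Everything else is a bookkeeping adaptation of the single-type proof, applied block by block, together with the continuity of $V^Z$ in $(c,\pi)$ to pass from the $\vare$-optimal discretized family to the supremum.
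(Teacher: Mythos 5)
The paper omits the proof of Theorem \ref{mainthmZ}, remarking only that the arguments of Theorem \ref{mainthm} ``extend quite easily to this case (up to lengthy adaptations for the second point).'' Your proposal implements precisely that extension, and the approach is sound. For part 1, grouping the auxiliary index $\rv t$ by type $z$, conditioning on the commonly known $z^n$, and running the same chain-rule/data-processing bound yields the average information constraint with budget $\tfrac{k}{n}C(Q)$ and the per-type splittings; the writing is loose in a couple of places (the weights of each type-$z$ splitting should carry the distribution of $\rv y^k$, not just $1/n_z$, and the sentence ``$\ge$ is replaced by $\le$'' is garbled since the robust payoff \emph{equals} the splitting value after minimizing over best replies), but the substance is right. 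For part 2, the block-by-block allocation is legitimate: since the states are independent across coordinates, the channel is memoryless, and your sender strategy factors over blocks, the joint law of $(\rv\omega^n,\rv x^k,\rv y^k)$ factors over blocks, so the Bayesian posterior on block-$z$ coordinates given the full $y^k$ equals the posterior given only the block-$z$ output; each block is then exactly a single-type instance of Lemma \ref{coro:CodingScheme} and Proposition \ref{lemma:ExistenceSplittingStrategy}. You also correctly identify that feasibility of the channel-use allocation is the strict form of the average constraint, and that the non-uniform allocation (high-information types get a disproportionate share of $k$) is what makes this more than ``patching distinct families together'' --- this is exactly the content of the paper's remark following the theorem, not a contradiction of it.

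Two points you assert without proof and should make explicit. First, types $z$ with $\pi(z)=0$ may have $n^{(j)}_z$ bounded, so Shannon's theorem does not apply in that block; but their payoff contribution is $O(\pi_{n_j}(z))\to 0$, so one can send anything there. Second, to get $\limsup_j V^Z(\mu,\tfrac{k_j}{n_j}C(Q),\pi_{n_j})\le V^Z(\mu,rC(Q),\pi)$ from part 1 you need upper semi-continuity of $V^Z$ in $(c,\pi)$; this does hold (via the analogue of Corollary \ref{coro:Cav}, $V^Z$ is an infimum over $t\ge 0$ of functions affine in $(c,\pi)$, hence concave, hence continuous on the relative interior of its domain and u.s.c. at the boundary, with lower semi-continuity at $c=0$ supplied by the trivial splitting as in the single-type case), but it is used, not free, and the argument is not ``exactly as $V$ is continuous in $c$'' because the $\pi$-dependence is new. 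With these two gaps filled, the proof is complete and matches the paper's stated plan of extending Theorem \ref{mainthm} block by block.
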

Given a sequence  $z^n$, $\pi_{n}$ is the empirical distribution of types of problems. The optimal payoff of the sender is bounded above by the value of optimal splitting under information constraint. Suppose that the distribution of types is held fixed (or converges to) $\pi$, then when $n$ and $k$ grow large, the sender is able to secure approximately this value. The arguments of the proof of Theorem \ref{mainthm} extend quite easily to this case (up to lengthy adaptations for the second point) so the proof is omitted.

This extension applies to the case where the proportions of types of problems are fixed. Alternatively, the sequence $z^n$ could be drawn i.i.d. from a prior distribution $\pi\in\Delta(Z)$.

Notice the channel $Q^k$ is used for transferring information about all problems. Thus, Theorem \ref{mainthmZ} does more than merely patching up distinct families of problems together. The capacity of the channel bounds the total amount of information, across all problems. Thus, all problems, even of different types, are  linked together in the messages.

\section{Conclusion}\label{sec:conclusion}
We have analyzed a persuasion game where the sender communicates with the receiver through a fixed and imperfect channel. The optimal payoff of the sender is bounded above by the value of the optimal splitting problem with information constraint. When the sender and the receiver are engaged in many repetitions of identical persuasion games, the optimal payoff for the sender converges to the upper bound as the number of repetitions increases.

There are several interesting variations or extensions of this model. 

\begin{enumerate}

\item {\em Private information of the receiver}. In the model, it is assumed that the information about the state if fully controlled by the sender. To model private information of the receiver, consider the extension of the previous section, let nature draw  pairs $(\omega_t, z_t)_{t=1,\dots, n}$  and assume that
$\omega^n$ is the private information of the sender and $z^n$ is the private information of the receiver. Our methods generalize to this case provided that we use a suitable generalization of the information constraint. A random message $\rv m$ can be transmitted over the channel provided that its mutual information with the state, {\em conditional on the private information of the receiver} $I(\rv \omega;\rv m |\rv z)\leq C$ is less than or equal to the capacity,  where  
$$I(\rv \omega;\rv m |\rv z):=\sum_z\prob(z)I(\rv \omega;\rv m | \rv z=z)$$
is the expectation over $\rv z$ of the mutual information conditional on $\{\rv z=z\}$.

\item {\em Commitment of the receiver}. In the persuasion model, the sender first chooses  its strategy and is committed to playing it.  A natural twist is to let the receiver choose his strategy first and commit to it. This turns into a mechanism design problem where the receiver is a principal offering a contract to an informed agent (the sender), and where the agent communicates with the principal through an imperfect channel. Again, an information constraint holds, but the impact of incentives is different. Namely for each sequence of states $(\omega_1,\dots, \omega_n)$, the sender/agent should not have an incentive to behave as if it was another one $(\omega'_1,\dots, \omega'_n)$. The task is to prove that using the usual coding scheme is indeed an optimal strategy for the agent, or rather that any optimal strategy is not too different from the coding scheme. This variation is  under close study.

\item {\em More general processes}. It would be interesting to generalize the results to a larger class of stochastic processes of states. Information theoretic methods can be extended to Markov chains, see \cite{cover-book-2006} and to more general processes, see \cite{Han03}. While an information constraint would certainly hold, it is an open problem to characterize the optimal payoff for the sender. What is the best way to exploit the correlations between states?
\end{enumerate}

\bibliographystyle{econometrica}
%\bibliography{BiblioPersuasion}

\ifx\undefined\BySame
\newcommand{\BySame}{\leavevmode\rule[.5ex]{3em}{.5pt}\ }
\fi
\ifx\undefined\textsc
\newcommand{\textsc}[1]{{\sc #1}}
\newcommand{\emph}[1]{{\em #1\/}}
\let\tmpsmall\small
\renewcommand{\small}{\tmpsmall\sc}
\fi

%%%%%%%%%%%%%%%%%%%%%%%%%%%%%%%%%%%%%%%%%%%%%%%%%%%%%%%%%%%%%%%%%%%%%%%%%%%%%%%%%%%%%%%%%%%%%%%%%%%%%%%%%%

%\newpage

\appendix

\section{ Appendix}\label{appendix}
This appendix contains all the formal proofs. 

\subsection{Proof of Lemma \ref{lem:feasible}}\label{A11}

For $a,b$ in $[0,1]$, consider the system:
\begin{equation}\label{eq:feasible}\nu_1(\omega_1)=\frac{\mu(\omega_1)(1-b)}{\mu(\omega_0)a+\mu(\omega_1)(1-b)}, \hskip0.5cm\nu_0(\omega_1)=\frac{\mu(\omega_1)b}{\mu(\omega_0)(1-a)+\mu(\omega_1)b}.
\end{equation}
If $\nu_1=\nu_0=\mu$, then it must be that $a=1-b$. Otherwise, $\nu_1(\omega_1)\neq\nu_0(\omega_1)$. It is easily verified that the system has a unique solution given by:
$$b= \frac{\nu_0(\omega_1)(\nu_1(\omega_1)-\mu(\omega_1))}{\mu(\omega_1)(\nu_1(\omega_1)-\nu_0(\omega_1))}$$
and
$$a= \frac{(1-\nu_0(\omega_1))(\mu(\omega_1)-\nu_0(\omega_1))}{(1-\mu(\omega_1))(\nu_1(\omega_1)-\nu_0(\omega_1))}.$$
Take a strategy $\sigma$ defined by $\sigma(x_0|\omega_0)=1-\alpha$ and $\sigma(x_1|\omega_1)=1-\beta$ and a  binary symmetric channel with noise $\vare$. The posteriors $\nu_1,\nu_0$ are given by the system (\ref{eq:feasible}) for $a:=\alpha(1-\vare)+\vare(1-\alpha)$ and $b:=\beta(1-\vare)+\vare(1-\beta)$.
 As $\alpha,\beta$ vary in $[0,1]$, $a$ and $b$ range freely over $[\vare,1-\vare]$,
$$\{(\alpha(1-\vare)+\vare(1-\alpha), \beta(1-\vare)+\vare(1-\beta)) : (\alpha,\beta)\in[0,1]^2\}=[\vare,1-\vare]^2.$$
This concludes the proof. \qed

%%%%%%%%%%%%%%%%%%%
\subsection{Proofs for Sections \ref{sec:cav}}\label{A21}
%%%%%%%%%%%%%%%%%%%%%%%%%%
\subsubsection{Proof of  Theorem \ref{thm:cavcontraint}, point 1}

The function $\cav f^g(x,\gamma)$ is given by the following program:
\begin{eqnarray*}
&  \sup & \sum_m\lambda_mf(x_m)\\
 &  \mathrm{ s.t.} & \sum_m\lambda_mx_m=x, \sum_m\lambda_m\gamma_m=\gamma\\
 & \mathrm{ and\,} &  \forall m, \gamma_m\leq g(x_m).
\end{eqnarray*}
Take a family $(\lambda_m,x_m,\gamma_m)_m$  feasible for this program. We have  $ \sum_m\lambda_mg(x_m)\geq \gamma$, thus this family is feasible for $F^g(x,\gamma)$. Therefore, $\cav f^g(x,\gamma)\leq F^g(x,\gamma)$.

Conversely, take a family $(\lambda_m, x_m)_m$ such that $\sum_m\lambda_m x_m=x$ and $\sum_m\lambda_mg(x_m)\geq \gamma$. Let $\bar\gamma=\sum_m\lambda_mg(x_m)$ and for each $m$, $\gamma_m=g(x_m)+\gamma-\bar\gamma$. Then, $\sum_m\lambda_m\gamma_m=\gamma$ and since $\bar\gamma\geq\gamma$, for each $m$, $\gamma_m\leq g(x_m)$. Thus, $(\lambda_m, x_m, \gamma_m)_m$ is feasible for  $\cav f^g(x,\gamma)$ and  $\cav f^g(x,\gamma)\geq F^g(x,\gamma)$. \qed

%%%%%%%%%%%%%%%%%%%%%%%%%%
\subsubsection{Proof of  Theorem \ref{thm:cavcontraint}, point 2}
Recall that the Fenchel conjugate of $f:X\subseteq\reals^d\to\reals$ is 
$f^*(p)=\sup_x\{x\cdot p -f(x)\}$, where $x\cdot p$ denotes the inner product. Then, the largest convex function below $f$ is equal to $(f^{*})^*$ \cite[Corollary 12.1.1, p. 103]{rockafellar1970convex},
therefore $(f^{*})^*(x)=-\cav (-f)(x)$. Playing with signs, it follows that:
\begin{align}
\cav f(x)=\inf_{p}\bigg\{x\cdot p+\sup_y\{f(y)-p\cdot y\}\bigg\}.
\end{align}
We apply this formula to the function:
\[ f^g(x,\gamma)=
\begin{cases}
f(x) & \text{ if } \gamma\leq g(x), \\
-\infty & \text{ otherwise. } \\
\end{cases}
\]
This gives,
\begin{align*}
\cav f^g(x,\gamma)&=\inf_{p,z}\bigg\{ p\cdot x+z\gamma+\sup_{y,\eta}\{f^g(y,\eta)-p\cdot y-z\eta\}\bigg\}\\
&=\inf_{p,z}\bigg\{ p\cdot x+z\gamma+\sup_{y,\eta :\; \eta\leq g(y)}\{f(y)-p\cdot y-z\eta\}\bigg\}.
\end{align*}
If $z>0$ then by letting $\eta\to-\infty$, the $\sup$ is $+\infty$. Therefore, in the infimum we can restrict to $z\leq 0$. Setting $t=-z\geq0$ we get:

\begin{align*}
\cav f^g(x,\gamma)&=\inf_{t\ge0,p}\bigg\{ p\cdot x-t\gamma+\sup_{y,\eta :\; \eta\leq g(y)}\{f(y)-p\cdot y+t\eta\}\bigg\}\\
&=\inf_{t\ge0,p}\bigg\{ p\cdot x-t\gamma+\sup_{y}\{f(y)-p\cdot y+tg(y)\}\bigg\}\\
&=\inf_{t\ge0}\bigg\{\inf_p\Big\{p\cdot x+\sup_{y}\{f(y)+tg(y)-p\cdot y\}\Big\}-t\gamma\bigg\}
\end{align*}
where the second line holds since $t\geq 0$ and the third line is just reorganizing. The result follows by remarking that $\inf_p\Big\{p\cdot x+\sup_{y}\{f(y)+tg(y)-p\cdot y\}\Big\}=\cav(f+tg)(x)$. 
\qed

%%%%%%%%%%%%%%%%%%%%%%%%%
\subsubsection{Proof of Corollary \ref{lem:actions}, upper bound $|\Omega|+1$}
Corollary \ref{lem:actions} follows from a well-known fact about concavification. 
\begin{fact}\label{caratheodory} In the optimization problem,
$$\cav f(x)=\sup\Big\{\sum_m\lambda_m f(x_m) : \sum_m\lambda_m x_m=x\Big\},$$
where $f$ is defined on $X\subseteq\reals^d$, the number of points can be restricted  to $d+1$. That is, without loss of generality, the supremum is taken over families $(\lambda_m,x_m)_{m=1}^{d+1}$.
\end{fact}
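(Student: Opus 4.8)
The plan is to deduce the statement from the classical Carathéodory theorem in $\reals^{d+1}$, the only real subtlety being to gain one dimension by passing to the \emph{highest} value attainable over the fixed point $x$. I would not try to work directly with the concavification as an envelope, but rather reduce each individual representation.

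Concretely, fix $x$ and take an arbitrary feasible family $(\lambda_m,x_m)_{m=1}^{M}$, i.e. $\lambda_m\ge 0$, $\sum_m\lambda_m=1$, $x_m\in X$ and $\sum_m\lambda_m x_m=x$; write $v=\sum_m\lambda_m f(x_m)$. It is enough to extract from it a feasible family of at most $d+1$ points with value at least $v$: the supremum over families of size at most $d+1$ is then already equal to $\cav f(x)$, and no attainment or semicontinuity hypothesis on $f$ is needed since the reduction is done family by family. Put $z_m=(x_m,f(x_m))\in\reals^{d+1}$ and $P=\conv\{z_1,\dots,z_M\}$, so $(x,v)=\sum_m\lambda_m z_m\in P$. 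The fiber $\{v'\in\reals:(x,v')\in P\}$ is then a nonempty compact segment; let $v^{*}$ be its maximum, so $v^{*}\ge v$ and $(x,v^{*})\in P$.

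The key step is that $(x,v^{*})$ lies on a face of $P$ of dimension at most $d$. If $P$ is not full-dimensional in $\reals^{d+1}$, its affine hull has dimension at most $d$ and ordinary Carathéodory in that subspace already writes $(x,v)$ itself as a convex combination of at most $d+1$ of the $z_m$. If $P$ is full-dimensional, then $(x,v^{*})$ is a boundary point of $P$, since $(x,v^{*}+\varepsilon)\notin P$ for every $\varepsilon>0$; hence it belongs to a facet $F$, a polytope of dimension $d$ whose vertices are among the $z_m$, and Carathéodory inside the affine hull of $F$ expresses $(x,v^{*})=\sum_{j=1}^{d+1}\mu_j z_{m_j}$ with $\mu_j\ge0$, $\sum_j\mu_j=1$. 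Reading off the first $d$ coordinates gives $\sum_j\mu_j x_{m_j}=x$, and the last coordinate gives $\sum_j\mu_j f(x_{m_j})=v^{*}\ge v$; so $(\mu_j,x_{m_j})_{j=1}^{d+1}$ is the required family.

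The one place to be careful is exactly this dimension count: a naive application of Carathéodory to $P\subseteq\reals^{d+1}$ would only yield $d+2$ points, so one must use the extremality of $v^{*}$ in the fiber to land on a proper face --- this is the Fenchel--Bunt sharpening of Carathéodory's theorem, and the argument could alternatively be replaced by a citation. To then obtain Corollary \ref{lem:actions}, one applies the Fact through Corollary \ref{coro:Cav}: $V(\mu,c)$ is the concavification of $u^H_S$ on $\Delta(\Omega)\times\reals$, whose affine hull has dimension $(|\Omega|-1)+1=|\Omega|$, so after an affine identification with a subset of $\reals^{|\Omega|}$ the Fact yields at most $|\Omega|+1$ posteriors, which combined with the action-merging argument gives the bound $\min\{|A|,|\Omega|+1\}$.
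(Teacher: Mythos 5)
Your argument is correct, and it differs from the paper only in that the paper does not actually prove Fact~\ref{caratheodory}: it simply cites Rockafellar (Corollary 17.1.5, p.~157), whereas you supply a self-contained proof. Your route is the standard one behind that citation: lift the family to the graph, take the top of the fiber over $x$ so that $(x,v^*)$ sits on the relative boundary of $P=\conv\{z_m\}\subseteq\reals^{d+1}$, observe that a relative-boundary point lies on a proper face (hence in a convex set of dimension at most $d$), and then apply ordinary Carath\'eodory inside that face to cut down to $d+1$ points; the fact that the reduction is done one finite family at a time means no semicontinuity or attainment hypothesis on $f$ is required, which is exactly what the statement about the supremum needs. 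One small attribution nit: the sharpening you are using is the ``Carath\'eodory for boundary points'' corollary (a point on the relative boundary of an $n$-dimensional convex hull needs only $n$ generators), not the Fenchel--Bunt theorem, which is the refinement for generating sets with few connected components; the mathematics is unaffected but the label is off. Your final remark correctly anticipates why Corollary~\ref{lem:actions} gets $|\Omega|+1$ rather than $|\Omega|$: the Fact is applied to $u^H_S$ on $\Delta(\Omega)\times\reals$, whose affine hull has dimension $|\Omega|$, so the bound is $|\Omega|+1$.
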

The reader is referred to \citet[Corollary 17.1.5, p. 157]{rockafellar1970convex}.
This implies that in a persuasion problem with unrestricted communication, the number of messages can be bounded by the dimension of $\Delta(\Omega)$ plus one, that is the number of states.

\begin{corollary}\label{d+2points} In the optimisation problem,
$$F^g(x,\gamma)=\sup\Big\{\sum_m\lambda_m f(x_m) : \sum_m\lambda_m x_m=x, 
\sum_m\lambda_mg(x_m)\geq \gamma\Big\}$$
where $f$ is defined on $X\subseteq\reals^d$, the number of points can be restricted  to $d+2$. \end{corollary}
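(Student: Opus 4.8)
The plan is to deduce this directly from the identity $F^g=\cav f^g$ established in the first point of Theorem \ref{thm:cavcontraint}, combined with the Carath\'eodory-type bound of Fact \ref{caratheodory}. Since $f^g$ is a function on the domain $X\times\reals\subseteq\reals^{d+1}$, Fact \ref{caratheodory} gives that its concavification at $(x,\gamma)$ is the supremum over families $(\lambda_m,x_m,\gamma_m)_{m=1}^{d+2}$ with $\sum_m\lambda_m(x_m,\gamma_m)=(x,\gamma)$; that is, $d+2=(d+1)+1$ points suffice to compute $\cav f^g(x,\gamma)$.

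Next I would translate such a family back to the original constrained problem. If $(\lambda_m,x_m,\gamma_m)_{m=1}^{d+2}$ is feasible for $\cav f^g(x,\gamma)$ with finite objective, then necessarily $\gamma_m\le g(x_m)$ for every $m$ (otherwise $f^g(x_m,\gamma_m)=-\infty$), so $f^g(x_m,\gamma_m)=f(x_m)$ and $\sum_m\lambda_m g(x_m)\ge\sum_m\lambda_m\gamma_m=\gamma$. Hence $(\lambda_m,x_m)_{m=1}^{d+2}$ is feasible for $F^g(x,\gamma)$ and attains the same value $\sum_m\lambda_m f(x_m)$. Conversely --- exactly as in the proof of the first point of Theorem \ref{thm:cavcontraint} --- any family $(\lambda_m,x_m)_{m=1}^{d+2}$ with $\sum_m\lambda_m x_m=x$ and $\sum_m\lambda_m g(x_m)\ge\gamma$ lifts to a feasible family for $\cav f^g(x,\gamma)$ on the same index set, by setting $\gamma_m=g(x_m)+\gamma-\bar\gamma$ with $\bar\gamma=\sum_m\lambda_m g(x_m)$. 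Both translations preserve the number of atoms, so the supremum defining $F^g(x,\gamma)$ is unchanged when restricted to families of at most $d+2$ points.

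There is no genuine obstacle here: the mathematical content is entirely in the already-proved equality $F^g=\cav f^g$ and in the classical fact that concavification on $\reals^{d+1}$ requires at most $d+2$ points. The only item needing a moment of care is verifying that passing between the constrained problem $F^g$ and the unconstrained concavification $\cav f^g$ does not inflate the number of atoms in either direction, which is immediate from the explicit constructions above. Finally, Corollary \ref{lem:actions} follows by specializing to $f=u^*_S$ and $g=H$ on $X=\Delta(\Omega)$, where $d=|\Omega|-1$, so that $d+2=|\Omega|+1$; the alternative bound $|A|$ is obtained by merging any two messages inducing the same receiver action and replacing their posteriors by the $\lambda$-weighted average, which leaves the objective unchanged and weakly relaxes the entropy constraint by concavity of $H$.
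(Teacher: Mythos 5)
Your proof is correct and follows exactly the paper's approach: apply Theorem \ref{thm:cavcontraint} point 1 to identify $F^g$ with $\cav f^g$ on the lifted domain $X\times\reals\subseteq\reals^{d+1}$ and then invoke Fact \ref{caratheodory} to get $d+2$ atoms. Your explicit verification that the back-and-forth translation between the constrained problem and the lifted concavification preserves the atom count is a useful elaboration of a step the paper leaves implicit, but it is the same argument.
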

\noindent This follows from Corollary \ref{coro:Cav} and Fact \ref{caratheodory}, since the function $f^g$ is defined on $X\times\reals\subseteq\reals^{d+1}$. Applying to the problem of optimal splitting under information constraint, gives a number of messages bounded by the dimension of $\Delta(\Omega)$ plus two, that is the number of states plus one. \qed

\subsubsection{Proof of Corollary \ref{lem:actions}, upper bound $|A|$}
Let $\widetilde{A}(\nu) =\argmin\bigg\{ \sum_{\omega}\nu(\omega) u_S(\omega,a) : a\in A^*(\nu)\bigg \}$ be the set of optimal actions of the receiver at $\nu$ which are worse for the sender. 

\begin{claim}
For any action $a$, the set of $\nu$'s such that $a\in\widetilde{A}(\nu)$ is convex.
\end{claim}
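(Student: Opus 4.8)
The plan is to exploit that both players' expected payoffs are affine in the belief. For $\nu\in\Delta(\Omega)$ and $b\in A$, write $U_R(\nu,b)=\sum_\omega\nu(\omega)u_R(\omega,b)$ and $U_S(\nu,b)=\sum_\omega\nu(\omega)u_S(\omega,b)$, each of which is affine in $\nu$, and let $V_R(\nu)=\max_{b\in A}U_R(\nu,b)$, so that $A^*(\nu)=\{b: U_R(\nu,b)=V_R(\nu)\}$ and $a\in\widetilde A(\nu)$ means $a\in A^*(\nu)$ together with $U_S(\nu,a)\le U_S(\nu,b)$ for all $b\in A^*(\nu)$. Fix beliefs $\nu,\nu'$ with $a\in\widetilde A(\nu)\cap\widetilde A(\nu')$ and a weight $t\in(0,1)$ (the cases $t\in\{0,1\}$ being trivial), and set $\nu''=t\nu+(1-t)\nu'$; I must show $a\in\widetilde A(\nu'')$.

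\emph{Step 1: $a\in A^*(\nu'')$.} For every $b\in A$, since $\nu\mapsto U_R(\nu,b)$ is affine and $a\in A^*(\nu)\cap A^*(\nu')$, we get $U_R(\nu'',a)=tU_R(\nu,a)+(1-t)U_R(\nu',a)\ge tU_R(\nu,b)+(1-t)U_R(\nu',b)=U_R(\nu'',b)$, so $a$ is optimal for the receiver at $\nu''$. \emph{Step 2: $A^*(\nu'')=A^*(\nu)\cap A^*(\nu')$.} The inclusion $\supseteq$ is proved exactly as in Step 1. For $\subseteq$, Step 1 (applied to $a$) shows $V_R(\nu'')=tV_R(\nu)+(1-t)V_R(\nu')$; hence if $b\in A^*(\nu'')$ then $tU_R(\nu,b)+(1-t)U_R(\nu',b)=tV_R(\nu)+(1-t)V_R(\nu')$, and since $U_R(\nu,b)\le V_R(\nu)$, $U_R(\nu',b)\le V_R(\nu')$ and $t\in(0,1)$, both inequalities are equalities, i.e.\ $b\in A^*(\nu)\cap A^*(\nu')$.

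\emph{Step 3: conclusion.} Take any $b\in A^*(\nu'')$. By Step 2, $b\in A^*(\nu)$ and $b\in A^*(\nu')$, so $a\in\widetilde A(\nu)$ gives $U_S(\nu,a)\le U_S(\nu,b)$ and $a\in\widetilde A(\nu')$ gives $U_S(\nu',a)\le U_S(\nu',b)$. Averaging these with weights $t$ and $1-t$ and using that $\nu\mapsto U_S(\nu,b)$ and $\nu\mapsto U_S(\nu,a)$ are affine yields $U_S(\nu'',a)\le U_S(\nu'',b)$. As this holds for every $b\in A^*(\nu'')$, and $a\in A^*(\nu'')$ by Step 1, we conclude $a\in\widetilde A(\nu'')$, so $\{\nu:a\in\widetilde A(\nu)\}$ is convex.

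I do not expect a genuine obstacle here. The only point that needs care is the identity $A^*(\nu'')=A^*(\nu)\cap A^*(\nu')$ in Step 2: it relies essentially on $\nu$ and $\nu'$ sharing the common receiver-maximizer $a$ (for arbitrary pairs of beliefs the inclusion $\subseteq$ can fail), and on treating the degenerate weights $t\in\{0,1\}$ separately. Everything else is a direct consequence of the affineness of $U_R$ and $U_S$ in the belief.
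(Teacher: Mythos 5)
Your proof is correct and follows essentially the same route as the paper's: show $a$ remains receiver-optimal at the interpolated belief, show any receiver-optimal action at the interpolation is receiver-optimal at both endpoints (using that $a$ achieves equality on each piece), and then average the two sender-worst inequalities. The only cosmetic difference is that you isolate the identity $A^*(\nu'') = A^*(\nu)\cap A^*(\nu')$ as an explicit intermediate step, whereas the paper only states the containment it needs.
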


\proof Observe first that the set of $\nu$'s such that $a\in A^*(\nu)$ is defined by linear inequalities, i.e. the optimality of $a$, therefore is convex.
Consider now $a\in \widetilde{A}(\nu_1)\cap \widetilde{A}(\nu_2)$ and let us show that $a\in \widetilde{A}(t\nu_1+(1-t)\nu_2)$ for $t\in(0,1)$. We have $a\in A^*(\nu_1)\cap A^*(\nu_2)$ and by the remark above, $a\in A^*(t\nu_1+(1-t)\nu_2)$. Take $b \in A^*(t\nu_1+(1-t)\nu_2)$. We thus have
 $$\sum_{\omega}(t\nu_1(\omega)+(1-t)\nu_2(\omega)) u_R(\omega,a)=\sum_{\omega}(t\nu_1(\omega)+(1-t)\nu_2(\omega)) u_R(\omega,b).$$
 Since  $a\in A^*(\nu_1)\cap A^*(\nu_2)$, 
$$\sum_{\omega}\nu_1(\omega) u_R(\omega,a)\ge\sum_{\omega}\nu_1(\omega) u_R(\omega,b),\;\;\sum_{\omega}\nu_2(\omega) u_R(\omega,a)\ge\sum_{\omega}\nu_2(\omega) u_R(\omega,b).$$
Combined together, we get  $b\in A^*(\nu_1)\cap A^*(\nu_2)$. Since $a\in \widetilde{A}(\nu_1)\cap \widetilde{A}(\nu_2)$,
 $$\sum_{\omega}\nu_1(\omega) u_R(\omega,a)\le\sum_{\omega}\nu_1(\omega) u_R(\omega,b),\;\;\sum_{\omega}\nu_2(\omega) u_S(\omega,a)\le\sum_{\omega}\nu_2(\omega) u_S(\omega,b).$$
Taking the convex combination of these two inequalities proves the claim. \qed \\

Consider a feasible splitting $(\lambda_m,\mu_m)$ such that $\sum_m\lambda_m\nu_m=\mu$ and $\sum_m\lambda_mH(\nu_m)\geq H(\mu)-C$. 
For each action $a$, define $M(a)=\Big\{m : \widetilde{A}(\nu_m)=\{a\}\Big\}$. Denote $\tilde\lambda_a=\sum_{m\in M(a)}\lambda_m$ and 
$$\tilde\nu_a=\sum_{m\in M(a)}\frac{\lambda_m}{\tilde\lambda_a}\nu_m.$$
We have:
\begin{eqnarray*}
\mu&=&\sum_m\lambda_m\nu_m\\
&=&\sum_a\tilde\lambda_a\sum_{m\in M(a)}\frac{\lambda_m}{\tilde\lambda_a}\nu_m\\
&=&\sum_a\tilde\lambda_a\tilde\nu_a.\\
\end{eqnarray*}
This defines a splitting of $\mu$ with $|A|$ elements. We argue that the payoff is the same as the initial splitting.  Let us calculate the expected payoff. From the previous claim, for each action $a$, $a\in\widetilde{A}(\tilde\nu_a)$. We thus have:
\begin{eqnarray*}
\sum_m\lambda_mu^*_S(\nu_m)&=&\sum_a\tilde\lambda_a\sum_{m\in M(a)}\frac{\lambda_m}{\tilde\lambda_a}\sum_\omega\nu_m(\omega)u^*_S(\omega,a)\\
&=&\sum_a\tilde\lambda_a\sum_\omega\tilde\nu_a(\omega)u^*_S(\omega,a)\\
&=&\sum_a\tilde\lambda_au^*_S(\tilde\nu_a).\\
\end{eqnarray*}
To conclude the proof, we check that the information constraint is satisfied. This follows from the concavity of entropy. Indeed,
$$H(\tilde\nu_a)\geq\sum_{m\in M(a)}\frac{\lambda_m}{\tilde\lambda_a}H(\nu_m)$$
and thus,
$$\sum_a\tilde\lambda_a H(\tilde\nu_a)\geq\sum_m\lambda_mH(\nu_m)\geq H(\mu)-C.$$
\qed

%%%%%%%%%%%%%%%%%%%%%%%%%%%%%%%%%%%%%
\subsection{Proof of Theorem \ref{mainthm}, point 1, the upper bound}\label{sec:proofconverse}
{\em 1. For each pair of integers $k,n$, $U^*_S(\mu_n,Q_k)\leq V(\mu,\frac{k}{n}C(Q)).$}

\proof Let us fix a strategy $\sigma$ of the sender. This induces a probability distribution $\prob_\sigma$ of sequences in $\Omega^n\times X^k\times Y^k$, the associated random sequences are denoted $(\rv\omega^n,\rv x^k,\rv y^k)$. Let $\rv t$  be a  uniformly distributed random variable over $\{1,\ldots,n\}$, independent from $(\rv\omega^n,\rv x^k,\rv y^k)$ and denote $\rv m=(\rv y^k,\rv t)$ taking values in $M=Y^k\times \{1,\ldots,n\}$.

We denote $ \widetilde{\prob}\big( \omega,  m)$ the joint probability distribution of $(\rv\omega,\rv m)$ defined by:
\begin{align*}
 \widetilde{\prob}\big( \omega,  m) =& \widetilde{\prob}\big(\rv \omega = \omega, (\rv y^k, \rv t)=  m \big)  \\
 =& \widetilde{\prob}(\rv t=t) \cdot  \widetilde{\prob}\big( \rv \omega = \omega, \rv y^k = y^k\big|\rv t=t\big) \\
 =& \frac{1}{n} \cdot  \prob_{\sigma}\big(\rv \omega_t = \omega,  \rv y^k = y^k\big) . \label{eq:distributionW}
\end{align*}
Note that the marginal distribution of $\widetilde{\prob}\big( \omega,  m)$  on $\Omega$ is equal to the prior $\mu$:
\begin{align*}
\widetilde{\prob}\big( \omega) =& \sum_{t, y^k} \widetilde{\prob}\big( \rv \omega = \omega, \rv y^k = y^k, \rv t=t\big) \\
=& \sum_{t, y^k} \frac{1}{n} \cdot  \prob_{\sigma}\big(\rv \omega_t = \omega,  \rv y^k = y^k\big) \\
=& \sum_{t = 1}^n \frac{1}{n} \cdot  \prob_{\sigma}\big(\rv \omega_t = \omega\big)\\
=& \prob_{\sigma}\big(\omega\big) \cdot  \sum_{t = 1}^n \frac{1}{n}  = \mu(\omega).
 \end{align*}
Fix now a strategy $\tau$ of the receiver $\tau:Y^k\to A^n$ and define   $ \tilde{\tau}:M \to A$ where $\tilde{\tau}(m) = \tilde{\tau}(y^k,t) = \tau_t(y^k)$, the $t$-th coordinate of $\tau(y^k)$.
The expected average payoff of player $i = R,S$ writes:
\begin{align}
\E_{\sigma,\tau} \big[\bar{u}_i\big] =&\sum_{\omega^n,x^k,y^k}\prob_\sigma(\omega^n,x^k,y^k) \Bigg[  \frac{1}{n} \sum_{t=1}^n u_i\Big(\omega_t,\tau_t(y^k)\Big) \Bigg] \\
=&\sum_{t=1}^n \sum_{\omega_t,x^k,y^k} \frac{1}{n} \cdot   \prob_\sigma(\omega_t,x^k,y^k) \cdot u_i\Big(\omega_t,\tau_t(y^k)\Big)  \label{converse1}\\
=& \sum_{t=1}^n \sum_{\omega_t,y^k}\frac{1}{n} \cdot \prob_\sigma(\omega_t,y^k)\cdot  u_i\Big(\omega_t,\tau_t(y^k)\Big) \label{converse2} \\
=& \sum_{\omega,y^k, t} \widetilde{\prob}(\omega,y^k,t) \cdot u_i\Big(\omega,\tilde{\tau}(y^k,t)\Big)  \\
=& \sum_{\omega,m} \widetilde{\prob}(\omega,m) \cdot u_i\Big(\omega,\tilde{\tau}(m)\Big) .
\end{align}
Equation =(\ref{converse1}) implies Equation (\ref{converse2}) by summing over $x^k$ which does not enter the payoff function. All other steps are reorderings and change of variables. \\

A strategy $\tau$ is a best-reply to $\sigma$ if and only if:
\begin{align*}
&\tau(y^k)\in\arg\max_{a^n\in A^n}\sum_{\omega^n,x^k,y^k}\mu(\omega^n)\sigma(x^k|\omega^n)Q(y^k|x^k)\bar{u}_R(\omega^n,a^n)\\
\Longleftrightarrow&\quad  \tilde{\tau}(m) \in\arg\max_{a\in A}\sum_{\omega,m}\widetilde{\prob} (\omega,m) \cdot  u_R(\omega,a) \\
%\Longleftrightarrow&\quad  \tilde{\tau}(m) \in\arg\max_{a\in A}\sum_{\omega} \widetilde{\prob} (\omega|m) \cdot  u_R(\omega,a) \\
\Longleftrightarrow&\quad  \tilde{\tau}(m) \in\arg\max_{a\in A}\sum_{\omega}\tilde\nu_{\sigma}(\omega|m)  \cdot  u_R(\omega,a) \\
\Longleftrightarrow&\quad  \tilde{\tau}(m) \in A^*\Big(\tilde\nu_{\sigma}(\cdot|m)\Big)
\end{align*}
where $\tilde\nu_{\sigma}(\omega|m)=\widetilde{\prob}(\omega |m)$. We deduce for any strategy $\sigma$ of the sender and any best-reply $\tau$ of the sender, the expected average payoffs are those induced by the splitting:
$$\mu(\omega)=\sum_m \widetilde{\prob}(m) \tilde\nu_{\sigma}(\omega|m).$$

Now, we bound the mutual information of this splitting. Throughout the proof, we will abuse our notations in a way that is common in information theory (see \cite{cover-book-2006}). When $\rv x$ is a random variable with distribution $p(x)$, we write $H(\rv x)$ for $H(p)$, when $(\rv x,\rv y)$ is a pair of random variables with joint distribution $p(x,y)$, we write $H(\rv y|\rv x)$ for $\sum_xp(x)H(p(\cdot|x))$. Last, we will write $I(\rv x;\rv y)$ without explicit reference to the joint distribution.

For any strategy $\sigma$, we have:
\begin{align}
0 &\leq I(\rv x^k ; \rv y^k) - I(\rv\omega^n;\rv y^k) \label{eq:ConverseWaux1} \\
&= \sum_{t=1}^k H( \rv y_t|\rv y^{t-1} )  -  \sum_{t=1}^k H(\rv y_t | \rv x^k, \rv y^{t-1})-  \sum_{t=1}^n  H(\rv\omega_t|\rv\omega^{t-1}) + \sum_{t=1}^n H(\rv\omega_t|\rv y^k, \rv\omega^{t-1})  \label{eq:ConverseWaux1b} \\
&\leq \sum_{t=1}^k H(\rv y_t)  -  \sum_{t=1}^k H(\rv y_t | \rv x_t)-  n \cdot  H(\rv\omega) + \sum_{t=1}^nH(\rv\omega_t|\rv y^k)  \label{eq:ConverseWaux2} \\
&=  \sum_{t=1}^k I(\rv x_t ; \rv y_t)    -  n \cdot  H(\rv\omega) +n \cdot   \sum_{t=1}^n \prob(\rv t =t) \cdot H(\rv\omega | \rv y^k,\rv t=t)  \label{eq:ConverseWaux3} \\
&\leq k \cdot \max_{\prob(x)} I(\rv x ; \rv y)  -  n \cdot  H(\rv\omega) +n \cdot  H(\rv\omega | \rv y^k,\rv t)  \label{eq:ConverseWaux4} \\
&= k \cdot \max_{\prob(x)} I(\rv x ; \rv y)  -  n \cdot  H(\rv\omega) +n \cdot  H(\rv\omega | \rv m)  \label{eq:ConverseWaux5} \\
&= k \cdot  \max_{\prob(x)} I(\rv x ; \rv y)  -  n \cdot I(\rv\omega ; \rv m)  \label{eq:ConverseWaux6} .
\end{align}
- Equation \eqref{eq:ConverseWaux1} holds since the triple $(\rv\omega^n,\rv x^k,\rv y^k)$ has the Markov chain property; that is, its join distribution writes $\mu(\omega^n)\sigma(x^k|\omega^n)Q(y^k|x^k)$. This implies $ I(\rv x^k ; \rv y^k) \geq I(\rv\omega^n;\rv y^k)$, that is $\rv x^k$ is more informative that $\rv\omega^n$ about $\rv y^k$  \cite[Theorem 2.8.1, p. 34]{cover-book-2006}.\\
- Equation \eqref{eq:ConverseWaux1b} comes from the chain rule of entropy $ H( \rv y^k) = \sum_{t=1}^k H( \rv y_t|\rv y^{t-1} ) $.\\
- Equation \eqref{eq:ConverseWaux2} follows since the channel is memoryless $H(\rv y_t | \rv x^k, \rv y^{t-1}) = H(\rv y_t | \rv x_t)$,  the sequence of states is i.i.d.  $H(\rv\omega_t|\rv\omega^{t-1}) = H(\rv\omega_t)$, and conditioning reduces  entropy $H(\rv\omega_t|\rv y^k, \rv\omega^{t-1}) \leq H(\rv\omega_t|\rv y^k) $.\\
- Equation \eqref{eq:ConverseWaux3} is a simple rewriting with the introduction of the uniform random variable $\rv t\in\{1,\ldots,n\}$.\\
- Equation \eqref{eq:ConverseWaux4} comes from taking the maximum over the marginal distribution $\prob(x)$.\\
- Equation \eqref{eq:ConverseWaux5} comes from the change of variable $\rv m = (\rv y^k,\rv t)$.\\
Then, Equation \eqref{eq:ConverseWaux6} is equivalent to:
\begin{align*}
&\quad k \cdot  \max_{\prob(x)} I(\rv x ; \rv y)  -  n \cdot I(\rv\omega ; \rv m)  \geq 0 \\
\Longleftrightarrow& \quad  H(\rv\omega | \rv m) \geq  H(\rv\omega) - \frac{k}{n} \cdot \max_{\prob(x)} I(\rv x ; \rv y)\\
\Longleftrightarrow & \quad  \sum_m\lambda_m H(\mu_m) \geq  H(\mu) - \frac{k}{n} \cdot C(Q). %\label{eq:ConverseWaux7}
\end{align*}
Therefore, for any strategy $\sigma$  and all $n,k$, we have:
\begin{align*}
&\min_{\tau\in BR(\sigma)}\sum_{\omega^n,x^k,y^k}\mu(\omega^n)\sigma(x^k|\omega^n)Q(y^k|x^k)\bar{u}_S(\omega^n,\tau(y^k))\\
=&\min_{\tilde{\tau}\in BR(\sigma)}\sum_{\omega,m} \widetilde{\prob}(\omega,m) \cdot  u_S\Big(\omega,\tilde{\tau}(m)\Big)\\
=& \sum_{m} \widetilde{\prob}(m) \min_{\tilde{\tau}(m) \in  A^*(\tilde\nu_{\sigma}(\cdot|m))}\sum_{\omega} \tilde\nu_{\sigma}(\cdot|m) \cdot  u_S\Big(\omega,\tilde{\tau}(m)\Big)\\
=& \sum_{m} \widetilde{\prob}(m) \cdot u^*_S\big(\tilde\nu_{\sigma}(\cdot|m)\big)\\
\leq&\sup_\sigma  \bigg\{  \sum_{m} \widetilde{\prob}(m) \cdot u^*_S\big(\tilde\nu_{\sigma}(\cdot|m)\big)\\
&\qquad \text{ s.t. }\sum_m \widetilde{\prob}(m) \cdot \tilde\nu_{\sigma}(\cdot|m) = \mu ,\\
& \qquad\text{ and } \sum_m\widetilde{\prob}(m)  \cdot H\big(\tilde\nu_{\sigma}(\cdot|m)\big) \geq  H(\mu) - \frac{k}{n} \cdot C(Q) \bigg\}\\
=&\sup  \bigg\{  \sum_{m} \lambda_m \cdot u^*_S\big(\nu_m\big)\\
& \qquad\text{ s.t. }\sum_m\lambda_m\nu_m = \mu ,\\
& \qquad\text{ and } \sum_m\lambda_m  H\big(\nu_m\big) \geq  H(\mu) - \frac{k}{n} \cdot C(Q) \bigg\}\\
&= V(\mu,\frac{k}{n}C(Q)).
\end{align*}
This proves that for all $n$ and $k$ we have:
\begin{align*}
U^*_S(\mu_n,Q_k) = & \sup_\sigma\min_{\tau\in BR(\sigma)}\sum_{\omega^n,x^k,y^k}\mu(\omega^n)\sigma(x^k|\omega^n)Q(y^k|x^k)\bar{u}_S(\omega^n,\tau(y^k))\\
\leq&  V(\mu,\frac{k}{n}C(Q)) 
\end{align*}
as desired. \qed

%%%%%%%%%%%%%%%%%%%%%%%%%%%%%%%%%%%%%
\subsection{Proof of Theorem \ref{mainthm}, point 2, the limit value}\label{sec:proofatteignable}

{\em 2. For each $r\in[0,+\infty]$ and each pair of sequences  $(k_j,n_j)_{j\in\naturals}$ such that $\lim\limits_{j\to\infty}\max(n_j,k_j)= \infty$ and 
 $\lim\limits_{j\to\infty}\frac{k_j}{n_j}=r$, we have $\lim_{j\to\infty}U^*_S(\mu^{n_j},Q^{k_j}) = V\big(\mu,r C(Q)\big).$}\\

For this proof we will consider the case  $r$ finite, $n_j\to \infty$, $k_j\to \infty$, $\frac{k_j}{n_j}\to r$. The proof for the case where $n_j\to \infty$, $k_j\to \infty$, $\frac{k_j}{n_j}\to \infty$ is a consequence by considering $r$ finite large enough such that $rC(Q)> H(\mu)$. For the cases where either $n_j$ or $k_j$ is bounded, see Section \ref{implications}. 
In the rest of the proof, we will consider pairs of integers $(k,n)$ which are  generic terms of such a sequence $(k_j,n_j)_{j\in\naturals}$ and omit the index $j$ for simplicity of notations.

\subsubsection{Zero capacity.}  First, we investigate the case $C(Q)=0$.

\begin{lemma}\label{lemma:ZeroCapacity}
If the channel capacity is equal to zero $\max_{p(x)} I(\rv x; \rv y) =0$, then for all $k,n$, we have:
\begin{align*}
U^*_S(\mu^n,Q^k)  =  V(\mu, \frac{k}{n} C(Q))= u^*_S(\mu).
\end{align*}
\end{lemma}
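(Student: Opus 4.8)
The plan is to prove the two equalities $U^*_S(\mu^n,Q^k)=u^*_S(\mu)$ and $V(\mu,\tfrac{k}{n}C(Q))=u^*_S(\mu)$ separately, noting that $\tfrac{k}{n}C(Q)=0$ under the hypothesis.

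First I would extract the structural consequence of zero capacity for the channel itself. Since $C(Q)=\max_{p}I_{p,Q}(\rv x;\rv y)=0$ and mutual information is always nonnegative, $I_{p,Q}(\rv x;\rv y)=0$ for \emph{every} $p\in\Delta(X)$; applying this to the uniform distribution $p$, which has full support, forces $\rv x$ and $\rv y$ to be independent under the joint law $p(x)Q(y|x)$, hence $Q(\cdot\mid x)=\bar q$ for all $x\in X$, where $\bar q:=\sum_{x'}p(x')Q(\cdot\mid x')$. Consequently $Q^k(y^k\mid x^k)=\prod_{t=1}^{k}\bar q(y_t)$ does not depend on $x^k$ at all: the channel transmits nothing.

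Next I would fix an arbitrary sender strategy $\sigma$ and compute the receiver's belief. Because $Q^k$ ignores its input, the joint law of $(\rv\omega^n,\rv y^k)$ induced by $\sigma$ factors as $\mu^n(\omega^n)\prod_{t\le k}\bar q(y_t)$; thus $\rv\omega^n$ and $\rv y^k$ are independent, the receiver's posterior on $\omega^n$ given any (positive-probability) $y^k$ equals $\mu^n$, and the marginal posterior on each coordinate $\omega_t$ equals $\mu$. Therefore a receiver strategy $\tau$ lies in $BR(\sigma)$ if and only if $\tau_t(y^k)\in A^*(\mu)$ for all $t$ and all relevant $y^k$, and among these best replies the one worst for the sender picks, coordinate by coordinate, an action in $\argmin_{a\in A^*(\mu)}\sum_{\omega}\mu(\omega)u_S(\omega,a)$, yielding expected average payoff exactly $u^*_S(\mu)$; conversely every $\tau\in BR(\sigma)$ gives at least $u^*_S(\mu)$ coordinate by coordinate. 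Hence $\min_{\tau\in BR(\sigma)}\E_{\sigma,\tau}[\bar u_S]=u^*_S(\mu)$ for \emph{every} $\sigma$, and taking the supremum over $\sigma$ gives $U^*_S(\mu^n,Q^k)=u^*_S(\mu)$. (This also re-derives directly the bound that the first point of Theorem~\ref{mainthm} would give as $U^*_S(\mu^n,Q^k)\le V(\mu,0)$.)

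Finally I would evaluate $V(\mu,0)$. Its information constraint reads $H(\mu)-\sum_m\lambda_mH(\nu_m)\le 0$; since $H$ is concave, Jensen's inequality gives $\sum_m\lambda_mH(\nu_m)\le H(\sum_m\lambda_m\nu_m)=H(\mu)$, so every feasible splitting satisfies $\sum_m\lambda_mH(\nu_m)=H(\mu)$. Because $H$ is \emph{strictly} concave on $\Delta(\Omega)$, equality in Jensen forces $\nu_m=\mu$ for every $m$ with $\lambda_m>0$: only the trivial splitting is feasible, so $V(\mu,0)=u^*_S(\mu)$. Combining the two computations with $\tfrac{k}{n}C(Q)=0$ yields $U^*_S(\mu^n,Q^k)=V(\mu,\tfrac{k}{n}C(Q))=u^*_S(\mu)$, as claimed. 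I do not anticipate a serious obstacle; the only points requiring a little care are the deduction that zero capacity makes the output input-independent, and the invocation of \emph{strict} (not merely weak) concavity of the entropy in the last step.
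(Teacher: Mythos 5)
Your proposal is correct and follows essentially the same route as the paper: deduce from $C(Q)=0$ that the channel output is independent of the input, conclude that the receiver's posterior always equals the prior so every best reply yields exactly $u^*_S(\mu)$, and show that the information constraint at level $0$ admits only the trivial splitting so that $V(\mu,0)=u^*_S(\mu)$. The only cosmetic difference is that you justify the triviality of the splitting via strict concavity of $H$ and Jensen, while the paper phrases it as ``zero mutual information implies independence''; these are the same fact.
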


\begin{proof}[Lemma \ref{lemma:ZeroCapacity}]
Let $(\rv x, \rv y)$ be a pair of random variables such that the conditional probability of $\{\rv y=y\}$ given $\{\rv x=x\}$ is $Q(y|x)$. If the capacity of the channel is 0, then $I(\rv x;\rv y)=H(\rv y)-H(\rv y|\rv x)=0$ which implies that $\rv x$ and $\rv y$ are independent: no information can be sent through the channel.
This implies that for any splitting which satisfies the information constraint, the  random variables $\rv \omega$ and $\rv m$ are independent, and for all $m\in M$ we have $\nu_m = \mu$. Hence:
\begin{align*} 
 V(\mu, \frac{k}{n} C(Q)) = u^*_S(\mu).
\end{align*} 
Moreover, for any strategy $\sigma$, the sequence of messages $\rv y^k$ of the receiver is independent from the sequence of states $\rv\omega^n$. It follows that:
\begin{align*}
U^*_S(\mu^n,Q^k)  =&  \sup_\sigma\min_{\tau\in BR(\sigma)}\sum_{\omega^n,x^k,y^k}\mu^n(\omega^n)\sigma(x^k|\omega^n)Q^k(y^k)\bar{u}_S(\omega^n,\tau(y^k))\\
=& \min_{\tau\in BR(\sigma)}\sum_{\omega^n,y^k}\mu^n(\omega^n)Q^k(y^k) \bigg[\frac{1}{n}\sum_{t=1}^n{u}_S(\omega_t,\tau_t(y^k))\bigg]\\
=&\frac{1}{n}\sum_{t=1}^n  \min_{a_t\in A^*(\mu) }\sum_{\omega_t}\mu(\omega_t)  {u}_S(\omega_t,a_t)\\
=&  \min_{a\in A^*(\mu) }\sum_{\omega}\mu(\omega)  {u}_S(\omega,a)=u^*_S(\mu),
\end{align*}
which concludes the proof.
\qed\end{proof}

\vskip1cm

\subsubsection{Positive channel capacity.} We assume from now on $C(Q)>0$. The goal is to take a splitting of the prior which satisfies the information constraint and to show that the associated payoff can be approximately achieved by strategy $\sigma$ of the sender and a best-reply $\tau\in BR(\sigma)$ of the receiver. The next lemma states that we can focus on splittings such that the information constraint is satisfied with strict inequality and where the action of the receiver is unique for each posterior. Concretely, we prove that such splittings are dense in the set of feasible splittings. Recall that we denote $\widetilde{A}(\nu)$ the set of worst optimal actions when the belief is $\nu\in \Delta(\Omega)$:
\begin{align*}
\widetilde{A}(\nu)= \argmin\bigg\{ \sum_{\omega}\nu(\omega) u_S(\omega,a) : a\in A^*(\nu) \bigg\}.
\end{align*}
Consider the following program:
\begin{align*}
\widehat{V}(\mu,\frac{k}{n}C(Q))=&\sup \bigg\{ \sum_m\lambda_mu^*_S(\nu_m)\\
&\qquad \mathrm{ s.t.}   \sum_m\lambda_m\nu_m=\mu,\\
&\qquad \mathrm{ and } \;\;  H(\mu)-\sum_m\lambda_m H(\nu_m)< \frac{k}{n}C(Q)\\
&\qquad \mathrm{ and } \;\;  \forall m, \; \widetilde{A}(\nu_m) \text{ is a singleton } \bigg\}.
\end{align*}

\begin{lemma}\label{lemma:DenseSplittingSubset}
For all integers $(k,n)$, $\mu\in\Delta(\Omega)$ and $Q$ such that $C(Q)>0$ we have:
\begin{align}
{V}(\mu,\frac{k}{n}C(Q))  = \widehat{V}(\mu,\frac{k}{n}C(Q)) . \label{eq:DenseSplittingSubset}
\end{align}
\end{lemma}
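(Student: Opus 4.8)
The inequality $\widehat{V}(\mu,\frac{k}{n}C(Q))\le V(\mu,\frac{k}{n}C(Q))$ is immediate, since the feasible set of the program defining $\widehat{V}$ is contained in that of $V$. For the reverse inequality, the plan is to show that every splitting feasible for $V$ can be approximated, with arbitrarily small loss of payoff, by one feasible for $\widehat{V}$. Two harmless normalizations help: replacing $\Omega$ by $\mathrm{supp}\,\mu$ and merging any two actions that coincide for both players on $\mathrm{supp}\,\mu$, we may assume that $\mu$ has full support and that no two distinct actions are payoff-equivalent for both players; neither reduction changes $A^*$, $u^*_S$, or the value $V$ (every posterior in a splitting of $\mu$ is supported on $\mathrm{supp}\,\mu$, and on that support the merged actions are indistinguishable to both players). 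If $|\Omega|=1$ the claim is trivial, so assume $|\Omega|\ge2$, whence $H(\mu)>0$.

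Write $B=\{\nu\in\Delta(\Omega):\widetilde{A}(\nu)\text{ is not a singleton}\}$. The first observation is that $B$ is small. If distinct actions $a,b$ both lie in $\widetilde{A}(\nu)$, then both are receiver-optimal at $\nu$, so $\nu$ lies on the indifference set $H^R_{ab}=\{\nu:\sum_\omega\nu(\omega)(u_R(\omega,a)-u_R(\omega,b))=0\}$, and both attain the sender's minimum over $A^*(\nu)$, so $\nu\in H^S_{ab}=\{\nu:\sum_\omega\nu(\omega)(u_S(\omega,a)-u_S(\omega,b))=0\}$. Hence $B\subseteq S:=\bigcup_{a\neq b}\big(H^R_{ab}\cap H^S_{ab}\big)$, and by the normalization each $H^R_{ab}\cap H^S_{ab}$ is the intersection of $\Delta(\Omega)$ with a \emph{proper} affine subspace. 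Thus $S$ is a finite union of proper affine slices of the simplex: it is closed with empty interior, its complement is dense in $\Delta(\Omega)$, and any line segment having an endpoint outside $S$ meets $S$ in at most finitely many points.

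Now fix $\varepsilon>0$ and a splitting $(\lambda_m,\nu_m)_m$ of $\mu$ with $\sum_m\lambda_m H(\nu_m)\ge H(\mu)-c$ and $\sum_m\lambda_m u^*_S(\nu_m)\ge V(\mu,c)-\varepsilon$, where $c=\frac{k}{n}C(Q)>0$. We perturb in two stages. First, for small $\eta>0$, mix with the non-revealing splitting: $\{(\eta,\mu)\}\cup\{((1-\eta)\lambda_m,\nu_m)\}_m$; its barycenter is $\mu$, its expected entropy reduction equals $(1-\eta)\big(H(\mu)-\sum_m\lambda_m H(\nu_m)\big)\le(1-\eta)c$, and its payoff $\eta u^*_S(\mu)+(1-\eta)\sum_m\lambda_m u^*_S(\nu_m)$ converges to $\sum_m\lambda_m u^*_S(\nu_m)$ as $\eta\to0$. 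Second, choose interior points $\zeta_0,(\zeta_m)_m\in\Delta(\Omega)\setminus S$ with $\eta\zeta_0+(1-\eta)\sum_m\lambda_m\zeta_m=\mu$ (possible since $\mu$ is interior and $\Delta(\Omega)\setminus S$ is dense), and for $\delta\in(0,1)$ replace each posterior $\nu$ of the mixed splitting by $(1-\delta)\nu+\delta\zeta$, with $\zeta_0$ attached to the posterior $\mu$ and $\zeta_m$ to $\nu_m$. The barycenter is still $\mu$. By concavity of $H$ we have $H((1-\delta)\nu+\delta\zeta)\ge(1-\delta)H(\nu)$, so the expected entropy reduction of the doubly perturbed splitting is at most $(1-\eta)c+\delta H(\mu)$, which is $<c$ once $\delta<\eta c/H(\mu)$. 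And since each segment $[\nu,\zeta]$ has its endpoint $\zeta\notin S$, the perturbed posterior $(1-\delta)\nu+\delta\zeta$ lies in $S\supseteq B$ for only finitely many values of $\delta$; hence for all but finitely many $\delta$ every perturbed posterior lies in $\Delta(\Omega)\setminus B$, i.e.\ has a singleton $\widetilde{A}$.

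Finally we control the payoff. As $\delta\to0$ through admissible values, the perturbed posteriors converge to those of the $\eta$-mixed splitting, and $u^*_S$ is lower semicontinuous: since $A^*$ is upper hemicontinuous and the action set is finite, $A^*(\nu')\subseteq A^*(\nu)$ for $\nu'$ close to $\nu$, so $u^*_S(\nu')\ge\min_{a\in A^*(\nu)}\sum_\omega\nu'(\omega)u_S(\omega,a)\to u^*_S(\nu)$. Therefore the limit inferior as $\delta\to0$ of the perturbed payoff is at least the payoff of the $\eta$-mixed splitting; letting first $\delta\to0$ and then $\eta\to0$ produces splittings feasible for $\widehat{V}(\mu,c)$ whose payoffs exceed $V(\mu,c)-\varepsilon-o_\eta(1)$. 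As $\varepsilon$ is arbitrary, $\widehat{V}(\mu,c)\ge V(\mu,c)$, which together with the easy inequality gives \eqref{eq:DenseSplittingSubset}. The genuine difficulty is to combine the two perturbations so that all posteriors land in the single-valuedness region $\Delta(\Omega)\setminus B$ while the barycenter stays exactly at $\mu$ and the slack in the information constraint created in the first stage survives the second; this is precisely why the quantifiers must be taken in the order ``fix $\eta$, then a generic $\zeta$-splitting, then a small admissible $\delta$''. A minor point, sidestepped by invoking lower semicontinuity rather than continuity of $u^*_S$, is that $u^*_S$ may be discontinuous and $\Delta(\Omega)\setminus B$ need not be open — but only a lower bound on the perturbed payoff is needed.
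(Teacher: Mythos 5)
Your argument is correct and follows the same overall plan as the paper's: show that splittings with singleton $\widetilde{A}$ at every posterior and a strict information constraint are dense among feasible ones, then conclude via lower semicontinuity of $u^*_S$. The paper packages this as a topological claim — it proves that $F^c=\{\nu:\widetilde{A}(\nu)\text{ singleton}\}\supseteq\Delta(\Omega)\setminus(F_R\cup F_S)$ is open and dense (Claim on $F^c$) and that the strict-constraint set $\set S_3$ is open and dense (via the contraction $(1-\vare)\nu_m+\vare\mu$), and then simply \emph{asserts} that "$\set S_2$ is open and dense in $\set S_1$" before taking the intersection. The nontrivial step hidden in that assertion is exactly the one you spell out: perturbing each posterior into the generic region while preserving the barycenter $\sum_m\lambda_m\nu_m=\mu$. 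Your two-stage construction does this cleanly — Stage 1 adds $(\eta,\mu)$ to create slack in the entropy constraint (the same role as the paper's $(1-\vare)\nu_m+\vare\mu$, via concavity of $H$), and Stage 2 moves each posterior along a segment to a generic interior point $\zeta_m\notin S$ chosen so that the $\zeta$'s themselves split $\mu$, then uses the fact that a segment with an endpoint off a finite union of proper affine slices meets that union only finitely often, so a generic small $\delta$ lands all posteriors in the singleton region simultaneously. The entropy bookkeeping $\set I_{\eta,\delta}\le(1-\eta)c+\delta H(\mu)$ and the l.s.c.\ payoff bound are both correct. So this is the same route, but your write-up is more constructive and in particular supplies the barycenter-preserving perturbation argument that the paper leaves implicit; the one place your exposition is a bit terse is the existence of the tuple $(\zeta_0,(\zeta_m)_m)$ outside $S$ satisfying the affine constraint, though the genericity argument you gesture at does go through once one notes that, with at least two components and all weights positive, each coordinate projection of the constraint set is nonconstant, so each bad slice is proper.
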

 The proof of Lemma \ref{lemma:DenseSplittingSubset} is postponed to Section \ref{sec:ProofLemmaDenseSplittingSubset}. Then, the proof of our main result continues with two lemmas. In Lemma \ref{coro:ProbaB}, we approximate the payoff yielded by any strategy. We will see the relevance of the number of stages where the actual belief of the receiver is close to the desired target, and the importance of this number being large. Next, in Lemma \ref{coro:CodingScheme} we prove that there is a strategy for which this holds. We use there known results from information theory for defining the coding scheme which gives the strategy of the sender. Then, we prove that this strategy actually controls the Bayesian beliefs of the receiver.

Given a strategy $\sigma$ of the sender, we denote the induced expected payoff as follows:
\begin{align*}
\widehat{U}_{S, \sigma}(\mu^n,Q^k)=& \min_{\tau\in BR(\sigma)}\sum_{\omega^n,x^k,y^k}\mu(\omega^n)\sigma(x^k|\omega^n)Q(y^k|x^k)\bar{u}_S(\omega^n,\tau(y^k)),\\
=& \min_{\tau\in BR(\sigma)} \E_{\rv \omega^n,\rv x^k,\rv y^k}\big[\bar{u}_S(\rv \omega^n,\tau(\rv y^k))\big].
\end{align*}
Let $\nu^\sigma_{t,y^k}\in\Delta(\Omega)$ denote the posterior belief on $\rv \omega_t$ conditional on the sequence $y^k$. That is,
$$\nu^\sigma_{t,y^k}(\omega)= \prob_{\sigma}\big(\rv\omega_t=\omega\mid y^k\big).$$
For $\nu_1,\nu_2\in\Delta(\Omega)$, the  Kullback-Leibler (KL) divergence is,
$$D(\nu_1\| \nu_2)=\sum_\omega \nu_1(\omega)\log\frac{\nu_1(\omega)}{\nu_2(\omega)}.$$
We will introduce several positive parameters $\alpha,\gamma,\delta$, to be thought of as small. 
\begin{notation}
 For a sequence $(m^n,y^k)$ and $\alpha>0$, denote
$$T_\alpha(m^n,y^k)=\Big\{t\in\{1,\dots, n\} : D(\nu^\sigma_{t,y^k}\| \nu_{m_t})\leq \frac{\alpha^2}{2\ln 2}\Big\}.$$
\end{notation}
This is the set of indices $t=1,\dots, n$ such that the posterior belief $\nu^\sigma_{t,y^k}$ about $\rv \omega_t$ is close to the theoretical belief $\nu_{m_t}$. Intuitively, this is the set of indices where {\em the message $m_t$} is approximately transmitted. Now, we define an event $B_{\alpha, \gamma, \delta}\subseteq M^n\times Y^k$ such that for every $(m^n,y^k)\in B_{\alpha, \gamma, \delta}$, $\frac1n\sum_{t=1}^nu^*_S(\nu^\sigma_{t,y^k})$ is close to $\sum_m\lambda_mu^*_S(\nu_m)$. 
\begin{notation} For a sequence $(m^n,y^k)$ and $m\in M$, denote 
$$\freq_m(m^n,y^k)=\frac1n \Big|\{t=1,\dots,n: m_t=m\}\Big|$$
the empirical frequency of message $m$ in the sequence $m ^n$.
For $\alpha, \gamma, \delta>0$, let
$$B_{\alpha, \gamma, \delta}=\Big\{(m^n, y^k) : \frac{|T_\alpha(m^n,y^k)|}{n}\geq1-\gamma \text{ \rm and } \sum_m|\lambda_m-\freq_m(m^n,y^k)|\leq \delta\Big\}$$
\end{notation}

\begin{lemma}\label{coro:ProbaB}
\begin{align*}
\bigg|\widehat{U}_{S, \sigma}(\mu^n,Q^k)  - \widehat{V}(\mu,rC(Q))  \bigg| \leq (\alpha+2\gamma+\delta)\|u\|+(1-\prob_\sigma(B_{\alpha,\gamma,\delta}))\|u\|.
\end{align*}
\end{lemma}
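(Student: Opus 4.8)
The plan is to reduce Lemma~\ref{coro:ProbaB} to an exact formula for $\widehat U_{S,\sigma}$ in terms of the receiver's Bayesian posteriors, and then to a pathwise comparison of the resulting stage-averaged payoff with the target $\sum_m\lambda_m u^*_S(\nu_m)=\widehat V(\mu,rC(Q))$, split according to whether $(m^n,y^k)$ falls in the good event $B_{\alpha,\gamma,\delta}$. Throughout, $(\lambda_m,\nu_m)_m$ is the splitting feasible for $\widehat V(\mu,rC(Q))$ that achieves its value; by Lemma~\ref{lemma:DenseSplittingSubset} (and its proof) it can be taken so that for each $m$ the action $a^*_m\in\widetilde A(\nu_m)$ is the unique worst optimal action not only at $\nu_m$ but throughout a fixed neighborhood of it, and $\sigma$ is the strategy under consideration, built through an auxiliary message process $\rv m^n$.

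The first step is to show that for any strategy $\sigma$,
\[
\widehat U_{S,\sigma}(\mu^n,Q^k)=\E_{(\rv m^n,\rv y^k)}\Big[\tfrac1n\sum_{t=1}^n u^*_S\big(\nu^\sigma_{t,\rv y^k}\big)\Big].
\]
Since $\bar u_R$ and $\bar u_S$ are additively separable over stages, conditioning on $y^k$ decouples the maximization defining $BR(\sigma)$ across coordinates: $\tau\in BR(\sigma)$ if and only if $\tau_t(y^k)\in A^*(\nu^\sigma_{t,y^k})$ for every $t$ and $y^k$. Taking, stage by stage, the optimal action worst for the sender and using that each $\tau_t(y^k)$ is $y^k$-measurable, so that $\E[u_S(\rv\omega_t,\tau_t(y^k))\mid y^k]=\sum_\omega\nu^\sigma_{t,y^k}(\omega)u_S(\omega,\tau_t(y^k))$, gives the identity after conditioning on $y^k$. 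Since the summand depends on $y^k$ only, the expectation may be written over $(\rv m^n,\rv y^k)$, which is what allows the split over $B_{\alpha,\gamma,\delta}\subseteq M^n\times Y^k$.

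The second step isolates the quantitative continuity of $u^*_S$, which is exactly the point of passing to $\widehat V$: on the chosen neighborhoods $u^*_S$ agrees with the linear map $\nu\mapsto\sum_\omega\nu(\omega)u_S(\omega,a^*_m)$, so there is a fixed $\rho>0$ with $|u^*_S(\nu)-u^*_S(\nu_m)|\le\|u\|\,\|\nu-\nu_m\|_1$ whenever $\|\nu-\nu_m\|_1\le\rho$, uniformly in $m$. Combining this with Pinsker's inequality \citep[Lemma~11.6.1]{cover-book-2006}, $\|\nu^\sigma_{t,y^k}-\nu_{m_t}\|_1\le\sqrt{2\ln 2\cdot D(\nu^\sigma_{t,y^k}\|\nu_{m_t})}$, membership $t\in T_\alpha(m^n,y^k)$ forces, for $\alpha\le\rho$, the bound $\|\nu^\sigma_{t,y^k}-\nu_{m_t}\|_1\le\alpha$ and hence $|u^*_S(\nu^\sigma_{t,y^k})-u^*_S(\nu_{m_t})|\le\alpha\|u\|$; this is precisely what the threshold $\alpha^2/(2\ln 2)$ in the definition of $T_\alpha$ was tuned to yield.

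The last step assembles the estimate. Writing $\tfrac1n\sum_t u^*_S(\nu_{m_t})=\sum_m\freq_m(m^n,y^k)\,u^*_S(\nu_m)$ and using that $\widehat U_{S,\sigma}-\widehat V(\mu,rC(Q))=\E\big[\tfrac1n\sum_t u^*_S(\nu^\sigma_{t,\rv y^k})-\sum_m\lambda_m u^*_S(\nu_m)\big]$, I split this expectation over $B_{\alpha,\gamma,\delta}$ and its complement. On $B_{\alpha,\gamma,\delta}$ the difference between $\sum_m\freq_m u^*_S(\nu_m)$ and $\sum_m\lambda_m u^*_S(\nu_m)$ is at most $\sum_m|\freq_m-\lambda_m|\,\|u\|\le\delta\|u\|$, while the difference between $\tfrac1n\sum_t u^*_S(\nu^\sigma_{t,y^k})$ and $\tfrac1n\sum_t u^*_S(\nu_{m_t})$ is at most $\alpha\|u\|$ from the stages in $T_\alpha$, plus a crude $2\gamma\|u\|$ from the at most $\gamma n$ stages outside $T_\alpha$; so the integrand there is at most $(\alpha+2\gamma+\delta)\|u\|$. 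On the complement the integrand is bounded crudely by $\|u\|$ in absolute value, contributing $(1-\prob_\sigma(B_{\alpha,\gamma,\delta}))\|u\|$. Adding the two contributions gives the stated inequality. The genuinely delicate point is the continuity of $u^*_S$ at the posteriors $\nu_m$ — the payoff $u^*_S$ is discontinuous in general, and it is only the singleton restriction defining $\widehat V$ (Lemma~\ref{lemma:DenseSplittingSubset}) that tames it — together with the care needed to carry out the minimization over $BR(\sigma)$ stage by stage; everything else is bookkeeping.
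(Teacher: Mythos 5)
Your proof follows essentially the same route as the paper's: you first reduce $\widehat U_{S,\sigma}$ to $\E\big[\tfrac1n\sum_t u^*_S(\nu^\sigma_{t,\rv y^k})\big]$ via the stage-by-stage characterization of $BR(\sigma)$, then use the singleton property of $\widetilde A(\nu_m)$ (giving local linearity of $u^*_S$ near $\nu_m$) together with Pinsker's inequality to control the per-stage error on $T_\alpha$, and finally split the expectation over $B_{\alpha,\gamma,\delta}$ and its complement with the same $\alpha$, $2\gamma$, $\delta$ accounting. The only difference is cosmetic bookkeeping in how the two $\gamma$ contributions are collected, and your write-up is in fact slightly more explicit about the factor of $2$ on the stages outside $T_\alpha$.
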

The proof of Lemma  \ref{coro:ProbaB} is given in Section \ref{ProofCoroProbaB}. We see from this inequality that estimating the probability of the set $B_{\alpha,\gamma,\delta}$ is crucial and that we would like the probability of the complement $\prob_\sigma(B^c_{\alpha,\gamma,\delta})$ to be small. 

Last, Lemma \ref{coro:CodingScheme} corresponds to the actual construction of the strategy. 

\begin{lemma}\label{coro:CodingScheme}
Assume that the splitting $(\lambda_m,\nu_m)_m$ satisfies the three conditions:
\begin{align}
&\sum_m\lambda_m\nu_m=\mu,   \label{eq:Condition01}  \\
&H(\mu)-\sum_m\lambda_m H(\mu_m)< rC(Q), \label{eq:Condition02} \\
&\forall m, \; \widetilde{A}(\nu_m) \text{ is a singleton, } \label{eq:Condition03}
\end{align}
then $\forall \varepsilon>0$, $\forall \alpha >0$, $\forall \gamma >0$, $\exists \bar{\delta}$,  $\forall \delta\leq \bar{\delta}$, $\exists \bar{n}$, $\forall n\geq \bar{n}$, $\exists \sigma$, such that
$\prob_\sigma(B^c_{\alpha,\gamma,\delta})  \leq  \varepsilon$.
\end{lemma}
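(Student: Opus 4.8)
The plan is to build $\sigma$ from a Shannon-type joint source-channel coding scheme for the splitting $(\lambda_m,\nu_m)_m$, and then to verify directly that the Bayesian posteriors of the receiver concentrate near the target posteriors at most stages. Concretely, view the splitting as a joint distribution $P(\omega,m)=\lambda_m\nu_m(\omega)$ on $\Omega\times M$; by \eqref{eq:Condition02} its mutual information $I(\rv\omega;\rv m)<rC(Q)$, so for $n$ large and $k\approx rn$ there exists a rate below capacity and Shannon's joint source-channel coding theorem \citep[Thm.\ 10.4.1]{cover-book-2006} provides maps $f_1:\Omega^n\to M^n$, $f_2:M^n\to X^k$ and a decoder $g:Y^k\to M^n$ with two properties: (i) with probability at least $1-\varepsilon'$ the pair $(\omega^n,m^n)$ with $m^n=f_1(\omega^n)$ is jointly typical, so in particular $\sum_m|\lambda_m-\freq_m(m^n,y^k)|\le\delta$; and (ii) the decoding error probability $\prob(g(y^k)\ne m^n)$ is at most $\varepsilon'$. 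We set $\sigma=f_2\circ f_1$ (randomizing over the channel codebook in the usual way). First I would fix $\varepsilon,\alpha,\gamma$, choose the auxiliary accuracy parameters, and record that (i)–(ii) already control the second defining inequality of $B_{\alpha,\gamma,\delta}$ and guarantee $\hat m^n:=g(y^k)$ equals $m^n$ off an event of small probability.

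The core of the argument is controlling $|T_\alpha(m^n,y^k)|/n$, i.e.\ showing that at a fraction $\ge 1-\gamma$ of stages the true posterior $\nu^\sigma_{t,y^k}(\cdot)=\prob_\sigma(\omega_t=\cdot\mid y^k)$ is within KL-divergence $\alpha^2/2\ln2$ of $\nu_{m_t}$. The plan is: condition on the (high-probability) event that decoding succeeds and that the source sequence is strongly typical. On this event $y^k$ determines $\hat m^n=m^n$; the receiver's posterior about $\omega_t$ can be written as a mixture over source sequences $\omega^n$ consistent with the observed $y^k$, and by strong typicality of the codebook the bulk of this posterior mass sits on source sequences whose $t$-th coordinate is distributed (empirically) according to $\nu_{m_t}$. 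A Fano-type / typical-set estimate then shows that for all but a $\gamma$-fraction of indices $t$ the conditional law of $\omega_t$ given $y^k$ is $\alpha$-close (in total variation, hence up to constants in KL via Pinsker-type reasoning in the other direction or directly by a continuity-of-entropy bound) to $\nu_{m_t}$. This is the step that is genuinely new relative to the information-theoretic literature, which only needs average decoding error, not stagewise posterior accuracy; I expect it to be the main obstacle, and the delicate point is that a single bad stage — where the posterior is far from $\nu_{m_t}$ — must be shown to be rare \emph{on average over $t$}, which one extracts from a bound on $\sum_t D(\nu^\sigma_{t,y^k}\|\nu_{m_t})$ or on $\frac1n\sum_t H(\omega_t\mid y^k)$ versus $\sum_m\lambda_m H(\nu_m)$, using that the overall conditional entropy $H(\rv\omega^n\mid\rv y^k)$ is close to $n\sum_m\lambda_m H(\nu_m)$ when decoding is reliable.

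With these pieces in hand the quantifier structure falls out in the stated order: given $\varepsilon,\alpha,\gamma$, pick $\bar\delta$ small enough that the typicality estimate forces $\sum_m|\lambda_m-\freq_m|\le\delta$ with high probability for all $\delta\le\bar\delta$; then for each such $\delta$ choose $\bar n$ large enough (depending on $\varepsilon,\alpha,\gamma,\delta$ and on the coding theorem's rate of convergence) so that simultaneously the decoding error is $\le\varepsilon/3$, the source atypicality probability is $\le\varepsilon/3$, and the stagewise-accuracy bound fails with probability $\le\varepsilon/3$; a union bound then gives $\prob_\sigma(B^c_{\alpha,\gamma,\delta})\le\varepsilon$. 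Condition \eqref{eq:Condition03} (uniqueness of the worst optimal action at each $\nu_m$) is not used in this lemma itself — it is what makes $T_\alpha$ useful in Lemma \ref{coro:ProbaB} — but \eqref{eq:Condition01} and \eqref{eq:Condition02} are exactly what licenses, respectively, the Bayes-plausibility of the induced splitting and the existence of a below-capacity code. I would close by noting that the construction is the standard random-coding argument, so the existence of a \emph{deterministic} codebook attaining these bounds follows by the usual expurgation.
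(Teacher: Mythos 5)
Your plan matches the paper's proof essentially step for step: random Shannon-style coding with typicality (the paper invokes the covering and packing lemmas of El Gamal--Kim rather than the joint source-channel theorem, but the mechanics are the same), then a bound on $\E\big[\frac1n\sum_t D(\nu^\sigma_{t,y^k}\|\nu_{m_t})\mid E_\delta=0\big]$ via an entropy/mutual-information chain that exploits the codebook size, and finally Markov's inequality to pass from the average KL bound to a high-probability bound on $|T_\alpha(m^n,y^k)|/n$. Be careful that the aside about obtaining KL-closeness from total-variation closeness ``via Pinsker in the other direction'' is not valid in general (the posteriors $\nu_m$ need not be bounded away from the boundary of $\Delta(\Omega)$), so you should stick with the direct bound on $\sum_t D(\nu^\sigma_{t,y^k}\|\nu_{m_t})$ that you also propose and that the paper in fact uses.
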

The proof of Lemma \ref{coro:CodingScheme} is in Appendix \ref{ProofCoroCodingScheme}. The idea is that, since the information constraint is satisfied i.e.  $I(\rv \omega;\rv m)< r C(Q)$, there is enough capacity to transmit $\rv m$ over the channel. More precisely, we construct a strategy such that the set $B_{\alpha,\gamma,\delta}$ has probability close to 1. This way, for most sequences $(\omega^n,m^n,x^k,y^k)$, the receiver {\em gets the right message} in most stages. That is, at most stages the receiver plays the action corresponding to the message.\\

We may now conclude the main proof. We combine the inequality of Lemma \ref{coro:ProbaB} with the bound $\prob_\sigma(B^c_{\alpha,\gamma,\delta})  \leq  \varepsilon$ of Lemma \ref{coro:CodingScheme}. We choose the parameters $\alpha,\gamma,\eta,\delta$ small and then $n$ large in order to obtain the following:

\begin{proposition}\label{lemma:ExistenceSplittingStrategy}
For all $r>0$ and  $\vare>0$, there exists  integers $N(\vare), K(\vare)$ such that for all $n\geq N(\vare)$,  $k\geq K(\vare)$ and $|\frac{k}{n}-r|\leq\vare$, there exists a strategy $\sigma$ such that:
\begin{align}
\bigg|\widehat{U}_{S, \sigma}(\mu^n,Q^k)  - \widehat{V}(\mu,rC(Q))  \bigg| \leq \varepsilon.
\end{align}
\end{proposition}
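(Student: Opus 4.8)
The plan is to obtain Proposition~\ref{lemma:ExistenceSplittingStrategy} by combining Lemma~\ref{coro:ProbaB} and Lemma~\ref{coro:CodingScheme}, which between them already isolate the whole difficulty, and then tuning the free parameters $\alpha,\gamma,\delta$ to $\varepsilon$. Throughout I would work under the standing assumption $C(Q)>0$ of this subsection and fix $r\in(0,+\infty)$ and $\varepsilon>0$; write $\|u\|=\max_{i,\omega,a}|u_i(\omega,a)|$ and assume $\|u\|>0$, the statement being immediate otherwise.

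First I would fix a feasible splitting. By Lemma~\ref{lemma:DenseSplittingSubset}, $\widehat V(\mu,rC(Q))=V(\mu,rC(Q))$, which is a finite real number (it lies between $u^*_S(\mu)$ and $\cav u^*_S(\mu)$); in particular the supremum defining $\widehat V(\mu,rC(Q))$ is over a nonempty set. I would then pick a splitting $(\lambda_m,\nu_m)_m$ satisfying the three conditions \eqref{eq:Condition01}, \eqref{eq:Condition02}, \eqref{eq:Condition03} whose value $\sum_m\lambda_m u^*_S(\nu_m)$ is within $\varepsilon/2$ of $\widehat V(\mu,rC(Q))$; all the objects $B_{\alpha,\gamma,\delta}$, $T_\alpha$, $\nu^\sigma_{t,y^k}$ appearing below refer to this splitting, and Lemmas~\ref{coro:ProbaB} and \ref{coro:CodingScheme} are invoked for the strategies built from it.

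Next comes the choice of parameters, whose nesting is the delicate point, so it must be done in order. First fix $\alpha$ and $\gamma$ small; then feed $\alpha$, $\gamma$ and an error level $\varepsilon'$ into Lemma~\ref{coro:CodingScheme}, which returns a threshold $\bar\delta>0$; only then fix $\delta:=\min\{\bar\delta,\delta_0\}$ for a further small $\delta_0$, which pins down the event $B_{\alpha,\gamma,\delta}$; finally Lemma~\ref{coro:CodingScheme} supplies $\bar n$ such that for every $n\ge\bar n$ there is a sender strategy $\sigma$, built for the pair $(n,k)$ with $k/n$ close to $r$ (which is precisely why the hypothesis $|\tfrac kn-r|\le\varepsilon$ is imposed), with $\prob_\sigma(B^c_{\alpha,\gamma,\delta})\le\varepsilon'$. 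It suffices to choose $\alpha,\gamma,\delta_0,\varepsilon'$ so small that $(\alpha+2\gamma+\delta+\varepsilon')\|u\|\le\varepsilon/2$ (e.g.\ all equal to $\varepsilon/(16\|u\|)$). Applying Lemma~\ref{coro:ProbaB} to this very $\sigma$, and using $1-\prob_\sigma(B_{\alpha,\gamma,\delta})=\prob_\sigma(B^c_{\alpha,\gamma,\delta})\le\varepsilon'$ together with the $\varepsilon/2$ gap between the chosen splitting and the supremum $\widehat V(\mu,rC(Q))$, then gives
\[
\bigl|\widehat U_{S,\sigma}(\mu^n,Q^k)-\widehat V(\mu,rC(Q))\bigr|\ \le\ \tfrac{\varepsilon}{2}+(\alpha+2\gamma+\delta+\varepsilon')\|u\|\ \le\ \varepsilon .
\]
Setting $N(\varepsilon):=\bar n$ and letting $K(\varepsilon)$ be a lower bound on $k$ compatible with the construction — enlarging $N(\varepsilon)$ if needed so that the constraints $n\ge N(\varepsilon)$, $k\ge K(\varepsilon)$, $|\tfrac kn-r|\le\varepsilon$ can hold simultaneously — then yields the statement.

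I expect essentially no obstacle at this level: the proposition is bookkeeping on top of the two preceding lemmas. The genuine difficulty lies upstream, in Lemma~\ref{coro:CodingScheme}, where one must build a Shannon-type coding scheme for which the receiver's \emph{realised Bayesian posteriors} $\nu^\sigma_{t,y^k}$ — not merely the decoding error — can be controlled, showing that on a high-probability event they are close to the target posteriors $\nu_{m_t}$ at a fraction at least $1-\gamma$ of the stages. The only points needing care in the present argument are (i) the order in which the constants are chosen ($\alpha,\gamma$ first, then $\bar\delta$, then $\delta$, then $\bar n$, then $\sigma$), and (ii) that the strategy plugged into Lemma~\ref{coro:ProbaB} is exactly the one produced by Lemma~\ref{coro:CodingScheme}, both being attached to the same fixed splitting.
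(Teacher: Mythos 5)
Your proof follows essentially the same route as the paper: combine Lemma~\ref{coro:ProbaB} with Lemma~\ref{coro:CodingScheme}, choose $\alpha,\gamma,\delta$ (and the tolerance $\varepsilon'$ in Lemma~\ref{coro:CodingScheme}) small enough relative to $\varepsilon$, and then take $n,k$ large with $k/n$ near $r$. Your version is in fact slightly more careful than the paper's terse one-line conclusion, since you make explicit (i) the correct nesting order of the quantifiers and (ii) that Lemma~\ref{coro:ProbaB} really bounds the distance to the value $\sum_m\lambda_m u^*_S(\nu_m)$ of the \emph{chosen} near-optimal splitting, so an additional $\varepsilon/2$ slack against the supremum $\widehat V(\mu,rC(Q))$ is needed.
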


With Lemma \ref{lemma:DenseSplittingSubset}, this ends the proof of point 2 of Theorem \ref{mainthm}. \qed

\subsubsection{Proof of Lemma \ref{lemma:DenseSplittingSubset}}\label{sec:ProofLemmaDenseSplittingSubset}
\begin{remark}\rm  From Corollary \ref{lem:actions}, we know that we can restrict the number of messages, i.e.  the number of posteriors to $K=\min\{|A|,|\Omega|+1\}$. Therefore, from now on a splitting $(\lambda_m,\nu_m)_m$ will be understood to be a composed of $\lambda=(\lambda_1,\dots, \lambda_K)\in\Delta(\{1,\dots, K\})$ and $(\nu_m)_m\in(\Delta(\Omega))^K$. The set of splittings of $\mu$ is thus a convex and compact subset of
$\Delta(\{1,\dots, K\})\times (\Delta(\Omega))^K$
which itself is a compact and convex set in some finite dimension space. All statements below about closed or open sets of splittings relate to the topology  induced by the Euclidean topology on this finite dimension space.
\end{remark}
We consider the following sets:
\begin{align*}
\set S_1 = \bigg\{ (\lambda_m,\nu_m)_m,  & \quad \text{ s.t. }\quad  \sum_m\lambda_m\nu_m=\mu,\\
& \quad  \text{ and } \quad \sum_m\lambda_m H(\nu_m) \geq  H(\mu)- \frac{k}{n}C(Q)\bigg\},\\
\set S_2 = \bigg\{ (\lambda_m,\nu_m)_m, & \quad \text{ s.t. }  \quad  \sum_m\lambda_m\nu_m=\mu,\\
& \quad \text{ and } \quad \forall m, \; \widetilde{A}(\nu_m) \text{ is a singleton }\bigg\},\\
\set S_3 = \bigg\{ (\lambda_m,\nu_m)_m, & \quad \text{ s.t. }  \quad  \sum_m\lambda_m\nu_m=\mu,\\
& \quad \text{ and }\quad  \sum_m\lambda_m H(\nu_m) >  H(\mu)- \frac{k}{n}C(Q)\bigg\}.
\end{align*}

We will prove that the set $\set S_2\cap\set S_3$ is dense in $\set S_1$, which will imply that Equation \eqref{eq:DenseSplittingSubset} is satisfied. We first argue that $\widetilde{A}(\nu)$ is a singleton for an open and dense set of posteriors $\nu$.

\begin{definition}
Two actions $a$ and $b$ are equivalent $a\sim_i b$ for player $i=S,R$, if
for all $ \omega\in\Omega$, $u _i(\omega,a)= u_i(\omega,b).$
\end{definition}
We say that two actions $a$ and $b$ are {\em completely equivalent } if they are equivalent for both players. Without loss of generality, we assume that no two actions are completely equivalent. Otherwise, we can merge them into one single action and work on the reduced problem.

Denote $F_i \subseteq \Delta(\Omega)$ the set of beliefs for which player $i\in\{S,R\}$ is indifferent between two actions which are not equivalent:
\begin{align*}
F_i  = \bigg\{\nu \in \Delta(\Omega) : \;\; \exists a,b,\; a\nsim_i b,\sum_{\omega}\nu(\omega) u_i(\omega,a)= \sum_{\omega}\nu(\omega) u_i(\omega,b) \bigg\}.
\end{align*}
Let $F^c = \Delta(\Omega) \setminus \Big( F_R \cup F_S \Big)$ be the set of beliefs where at least one player is not indifferent between any two actions.

\begin{claim}\label{lemma:singleton} The set $F^c$ is open and dense in $\Delta(\Omega)$ and for each $\nu\in F^c$, $\widetilde{A}(\nu)$ is a singleton.
\end{claim}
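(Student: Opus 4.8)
The plan is to prove the two assertions in Claim \ref{lemma:singleton} separately: first that $F^c$ is open and dense in $\Delta(\Omega)$, and second that $\widetilde A(\nu)$ is a singleton whenever $\nu\in F^c$.

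\textbf{Openness and density of $F^c$.} Observe that for any two actions $a,b$ with $a\nsim_i b$, the set of beliefs where player $i$ is indifferent between them, namely $\{\nu : \sum_\omega \nu(\omega)(u_i(\omega,a)-u_i(\omega,b))=0\}$, is the intersection of $\Delta(\Omega)$ with an affine hyperplane; since $a\nsim_i b$ the vector $(u_i(\omega,a)-u_i(\omega,b))_\omega$ is nonzero, so this hyperplane is proper and its intersection with $\Delta(\Omega)$ is a closed set with empty interior relative to $\Delta(\Omega)$. Then $F_R\cup F_S$ is a finite union of such sets (finitely many pairs of actions, two players), hence closed with empty interior, i.e.\ nowhere dense. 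Therefore its complement $F^c=\Delta(\Omega)\setminus(F_R\cup F_S)$ is open and dense. (One should note here that $\Delta(\Omega)$ is a complete metric space, so Baire-type reasoning is available, though for a \emph{finite} union one only needs that a finite union of nowhere dense sets is nowhere dense, which is elementary.)

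\textbf{$\widetilde A(\nu)$ is a singleton on $F^c$.} Fix $\nu\in F^c$. By definition of $F^c$, there is a player $i\in\{S,R\}$ who is not indifferent between any two non-equivalent actions at $\nu$. First suppose $i=R$: then for any $a\neq b$ with $a\nsim_R b$, $\sum_\omega\nu(\omega)u_R(\omega,a)\neq\sum_\omega\nu(\omega)u_R(\omega,b)$, so $A^*(\nu)$ can contain at most one action \emph{up to $R$-equivalence}; that is, all actions in $A^*(\nu)$ are $R$-equivalent to one another. If $A^*(\nu)$ is a single action, we are done. If it contains several actions, they are all $R$-equivalent, hence yield the same receiver payoff, but since no two actions are completely equivalent (by the assumed reduction), they are \emph{not} all $S$-equivalent --- however this does not immediately give a singleton for $\widetilde A$. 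The clean argument is to instead use $i=S$ when $R$ alone does not suffice: if $i=S$ is the non-indifferent player, then among the actions of $A^*(\nu)$ (which is nonempty), the map $a\mapsto\sum_\omega\nu(\omega)u_S(\omega,a)$ takes distinct values on any two non-$S$-equivalent actions; but if two actions $a,b\in A^*(\nu)$ were $S$-equivalent they would give the same sender payoff, and the argmin over $A^*(\nu)$ would still potentially be a set. I would therefore structure it as: let $a^\star$ attain $\widetilde A(\nu)$'s minimum; I claim any other minimizer $b$ satisfies $u_S(\omega,b)=u_S(\omega,a^\star)$ for all $\omega$ AND $u_R(\omega,b)=u_R(\omega,a^\star)$ for all $\omega$ --- the first because the $S$-non-indifference at $\nu$ forces $S$-equivalence of two sender-payoff-tied actions (if $S$ is the good player), or the second because $R$-non-indifference forces $R$-equivalence of two $A^*(\nu)$-members (if $R$ is the good player); and then in the remaining case one shows the full completely-equivalent conclusion by combining both. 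Since completely equivalent actions were merged, $b=a^\star$, so $\widetilde A(\nu)$ is a singleton.

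\textbf{Anticipated main obstacle.} The delicate point is the case analysis in the second part: having a single non-indifferent player gives equivalence only \emph{for that player}, whereas $\widetilde A$ is defined by an outer argmin (receiver optimality) and an inner argmin (sender-worst), so one must carefully verify that whichever player is non-indifferent, the combination of ``$R$-optimal'' and ``$S$-minimal'' pins down the action up to \emph{complete} equivalence. I would handle this by treating the two cases ($R$ non-indifferent at $\nu$, or $S$ non-indifferent at $\nu$) explicitly: in the $R$ case all of $A^*(\nu)$ is a single $R$-equivalence class, so the inner $S$-argmin is over $R$-equivalent actions and its minimizers are mutually $S$-equivalent hence (by $R$-equivalence already in hand) completely equivalent; in the $S$ case, two minimizers of the inner problem are $S$-equivalent and both lie in $A^*(\nu)$ so give equal receiver payoff, and a short extra argument (using that they are both optimal and $S$-tied) yields $R$-equivalence as well. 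Either way complete equivalence follows, and the no-completely-equivalent-actions reduction closes the proof.
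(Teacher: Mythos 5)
Your openness-and-density argument is fine and is essentially the paper's. The singleton part, however, has a genuine gap, and it comes from a misreading of $F^c$. You take $\nu\in F^c$ to mean that \emph{some} player $i$ is not indifferent between any two non-equivalent actions at $\nu$, and then attempt a case analysis over which player that is. But $F^c=\Delta(\Omega)\setminus(F_R\cup F_S)=F_R^c\cap F_S^c$, so $\nu\in F^c$ means \emph{both} $\nu\notin F_R$ and $\nu\notin F_S$ (the paper's informal gloss "at least one player is not indifferent" is misleading, but the set-theoretic definition is unambiguous). Under the correct, two-sided hypothesis the proof is one line, and this is exactly the paper's argument: if $a\neq b$ both lie in $\widetilde{A}(\nu)$, then $a,b\in A^*(\nu)$ gives $R$-indifference at $\nu$, so $a\sim_R b$ since $\nu\notin F_R$; both also minimize the sender's expected payoff over $A^*(\nu)$, giving $S$-indifference at $\nu$, so $a\sim_S b$ since $\nu\notin F_S$; complete equivalence then contradicts the merging convention.

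Your one-sided version cannot be repaired because the statement it would prove is false. Take $\Omega=\{1,2\}$, $A=\{a,b,c\}$, with $u_R(\cdot,a)=u_R(\cdot,b)\equiv 1$, $u_R(\cdot,c)\equiv 0$, $u_S(1,a)=1$, $u_S(2,a)=0$, $u_S(1,b)=0$, $u_S(2,b)=1$, $u_S(\cdot,c)\equiv 0$. No two actions are completely equivalent, $a\sim_R b$, and at $\nu=(\tfrac12,\tfrac12)$ one has $\nu\notin F_R$ (so $R$ is a non-indifferent player), $A^*(\nu)=\{a,b\}$, yet $\widetilde{A}(\nu)=\{a,b\}$ is not a singleton. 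This is exactly the hole in your ``$i=R$'' case: the inner $S$-argmin over $R$-equivalent actions produces ties in the sender's expected payoff \emph{at $\nu$}, i.e.\ $S$-indifference at $\nu$, which is not $S$-\emph{equivalence}; passing from the former to the latter requires $\nu\notin F_S$, which is precisely the half of the definition of $F^c$ your reading dropped. Likewise the ``short extra argument'' you invoke in the $i=S$ case does not exist.
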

\begin{proof}[Claim \ref{lemma:singleton}]
For each $i$ and each pair of actions $a,b$ with $a\nsim_i b$, the set, 
\begin{align*}
F_i(a,b)  = \bigg\{\nu \in \Delta(\Omega) : \;\; \sum_{\omega}\nu(\omega) u_i(\omega,a)= \sum_{\omega}\nu(\omega) u_i(\omega,b) \bigg\}
\end{align*}
is a closed hyperplane of dimension $\mathrm{dim}(F_i(a,b)) \leq | \Omega| -2$. Thus, $F_R$ and $F_S$ are closed and $F_R\cup F_S$ is  included in a finite union of hyperplanes of dimension at most $| \Omega| -2$. The complementary set is thus open and dense in $\Delta(\Omega)$.

Then, if $\widetilde{A}(\nu)$ contains two distinct actions $a\neq b$,  both players are indifferent between $a$ and $b$ at $\nu$. Thus, if $\nu\in F^c$,   $\widetilde{A}(\nu)$ is a singleton.
\qed
\end{proof}
It follows that $\set S_2$ is open and  dense in $\set S_1$.
\begin{claim}\label{cl:closure} If the channel capacity is strictly positive $C(Q)  >0$, the set $\set S_3$ is nonempty, open and dense in $\set S_1$.
\end{claim}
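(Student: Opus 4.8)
The plan is to check the three assertions of the claim in turn; all of them follow from continuity and strict concavity of the entropy functional $s=(\lambda_m,\nu_m)_m\mapsto \sum_m\lambda_m H(\nu_m)$, and no new posterior is ever introduced, so the restriction to $K=\min\{|A|,|\Omega|+1\}$ posteriors from the preceding remark is preserved throughout. Note also that $\set S_3\subseteq\set S_1$, since the two sets have the same barycenter constraint and a strict inequality implies the weak one, so density of $\set S_3$ in $\set S_1$ is a meaningful statement.

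For nonemptiness, I would exhibit the noninformative splitting, namely the one putting weight $1$ on the single posterior $\nu_1=\mu$. It satisfies $\sum_m\lambda_m\nu_m=\mu$, and since $C(Q)>0$ and $k,n\ge 1$ we have $\tfrac{k}{n}C(Q)>0$, whence $\sum_m\lambda_m H(\nu_m)=H(\mu)>H(\mu)-\tfrac{k}{n}C(Q)$; so this splitting lies in $\set S_3$. For openness, I would simply observe that $s\mapsto \sum_m\lambda_m H(\nu_m)$ is continuous on the ambient finite-dimensional space (the entropy $H$ being continuous on $\Delta(\Omega)$ with the convention $0\log 0=0$), so the set $\{s:\sum_m\lambda_m H(\nu_m)>H(\mu)-\tfrac{k}{n}C(Q)\}$ is open, and $\set S_3$ is its intersection with $\set S_1$, hence relatively open in $\set S_1$.

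For density, take $s=(\lambda_m,\nu_m)_m\in\set S_1$; if the information inequality is already strict there is nothing to prove, so suppose $\sum_m\lambda_m H(\nu_m)=H(\mu)-\tfrac{k}{n}C(Q)$. Since $\tfrac{k}{n}C(Q)>0$, this value is strictly below $H(\mu)$, so by strict concavity of $H$ at least one posterior with positive weight differs from $\mu$. I would then perturb by pushing each posterior toward the prior: for $\vare\in(0,1)$ set $\nu^\vare_m=(1-\vare)\nu_m+\vare\mu$ and $s_\vare=(\lambda_m,\nu^\vare_m)_m$. The barycenter is preserved, $\sum_m\lambda_m\nu^\vare_m=(1-\vare)\mu+\vare\mu=\mu$, and strict concavity of $H$ gives
\[
\sum_m\lambda_m H(\nu^\vare_m)\;>\;(1-\vare)\sum_m\lambda_m H(\nu_m)+\vare H(\mu)\;=\;H(\mu)-(1-\vare)\tfrac{k}{n}C(Q)\;>\;H(\mu)-\tfrac{k}{n}C(Q),
\]
so $s_\vare\in\set S_3$; and since $s_\vare\to s$ as $\vare\to 0$, density follows. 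The only step requiring any care — the ``main obstacle,'' such as it is — is this last one: one must shrink the posteriors toward $\mu$ rather than perturb them arbitrarily, precisely so that the barycenter remains $\mu$, and one must use $C(Q)>0$ to guarantee that some posterior is nonconstant, so that the concavity inequality is strict.
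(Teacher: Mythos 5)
Your proposal is correct and follows essentially the same route as the paper: perturb each posterior by $\nu_m\mapsto(1-\vare)\nu_m+\vare\mu$, observe that the barycenter is preserved, and use concavity of $H$ together with the strict margin $\vare\tfrac{k}{n}C(Q)>0$ to land strictly inside the constraint. The one small difference is that you invoke strict concavity (justifying it by first arguing some posterior differs from $\mu$), whereas the paper's chain only needs plain concavity of $H$ plus the $\vare\tfrac{k}{n}C(Q)>0$ term to deliver the strict inequality at the end — so your reduction to the ``equality'' case and the strict-concavity step are harmless but superfluous.
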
 

\begin{proof}[Claim \ref{cl:closure}] Take a feasible splitting $(\lambda_m,\nu_m)_m$ in $\set S_1$:
\begin{align*}
\sum_m\lambda_m H(\nu_m) \geq H(\mu) - \frac{k}{n}C(Q).
\end{align*}
For $\vare>0$, consider the perturbed splitting $(\lambda_m,(1-\vare)\nu_m+\vare\mu)_m$. From concavity of the entropy,
\begin{align*}
\sum_m\lambda_m H((1-\vare)\nu_m+\vare\mu)&\geq(1-\vare)\sum_m\lambda_m H(\nu_m)+\vare H(\mu),\\
&\geq H(\mu) - \frac{k}{n}C(Q) + \vare \frac{k}{n}C(Q)\\
&> H(\mu) - \frac{k}{n}C(Q);
\end{align*}
thus, the information constraint is satisfied with strict inequality for $\vare>0$. It follows that $\set S_3$ is nonempty and dense in $\set S_1$. By continuity of the entropy, $\set S_3$ is open in $\set S_1$. \qed\end{proof}

Since $\set S_2$ and $\set S_3$ are open and dense,  $\set S_2\cap \set S_3$ is also open and dense in $\set S_1$. We can conclude that
${V}(\mu,\frac{k}{n}C(Q))  = \widehat{V}(\mu,\frac{k}{n}C(Q))$ as desired. This follows from the fact that  the function
$$u^*_S(\nu)=\min_{a\in A^*(\nu)}\sum_\omega \nu(\omega)u_{S}(\omega,a)$$ 
is lower-semi continuous and the supremum of an l.s.c. function over a dense set is the supremum over the full set.

It should be noticed that this is the only argument in the proof where the assumption that the receiver chooses the worst action for the sender, has a bite. When the receiver chooses the best action for the sender, we should consider $u^{**}_S(\nu)=\max_{a\in A^*(\nu)}\sum_\omega \nu(\omega)u_{S}(\omega,a)$ which is upper-semi continuous. In that case, the supremum over the dense set $\set S_2\cap \set S_3$ might be less than the supremum over $\set S_1$. However, this can only happen when the information constraint is binding at optimum and all posteriors in the optimal splitting are points of indifference for the receiver. This case is nongeneric in our class of persuasion problems: a slight change of the payoff function of the receiver would perturb the points of indifference and thus the points of discontinuity of $u^*$ and $u^{**}$. \qed

%%%%%%%%%%%%%
\subsubsection{Proof of Lemma \ref{coro:ProbaB}} \label{ProofCoroProbaB}

The strategy $\sigma$ induces a joint probability distribution $\prob_\sigma$ over $\Omega^n\times M^n\times X^k\times Y^k$:
\begin{align*}
 \prob_{\sigma}\big( \omega^n,  m^n,x^k,y^k) =& \prod_{t=1}^n \mu(\omega_t) \times \sigma(m^n,x^k|\omega^n) \times \prod_{t=1}^n Q(y_t|x_t).
\end{align*}

For each sequence $y^k$ of messages and for each $t$, the receiver chooses an optimal action $a_t\in A^*(\nu^\sigma_{t,y^k})$. In the worst case (for the sender), this action $a_t$ belongs to $\widetilde{A}(\nu^\sigma_{t,y^k})$. It follows that:
\begin{claim}
$$\widehat{U}_{S, \sigma}(\mu^n,Q^k)=\sum_{m^n,y^k}\prob_\sigma(m^n,y^k)\frac1n\sum_{t=1}^nu^*_S(\nu^\sigma_{t,y^k}).$$
\end{claim}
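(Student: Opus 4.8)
The plan is to exploit the additive separability of $\bar u_R$ and $\bar u_S$ across the $n$ stages, together with the fact that the best-reply correspondence $BR(\sigma)$ is a product of coordinatewise constraints, so that the worst-case tie-breaking can be carried out independently across stages and output sequences.

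First I would condition on the realized output sequence $y^k$ and rewrite the receiver's conditional expected payoff. Using $\bar u_R(\omega^n,a^n)=\frac1n\sum_t u_R(\omega_t,a_t)$ and marginalizing each summand onto coordinate $t$ (where $\prob_\sigma(\omega_t\mid y^k)=\nu^\sigma_{t,y^k}(\omega_t)$ by definition), one gets, for every $\tau$,
\[
\sum_{\omega^n}\prob_\sigma(\omega^n\mid y^k)\,\bar u_R(\omega^n,\tau(y^k))=\frac1n\sum_{t=1}^n\sum_{\omega}\nu^\sigma_{t,y^k}(\omega)\,u_R(\omega,\tau_t(y^k)).
\]
The $t$-th summand depends on $\tau$ only through $\tau_t(y^k)$, and the choices at different $(t,y^k)$ are unconstrained, so maximizing the left-hand side over $\tau(y^k)\in A^n$ is the same as maximizing each summand separately. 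Hence $\tau\in BR(\sigma)$ if and only if $\tau_t(y^k)\in A^*(\nu^\sigma_{t,y^k})$ for every $t$ and every $y^k$ — the characterization already invoked in the paragraph preceding the claim.

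Next I would apply the same rewriting to the sender's objective: for every $\tau\in BR(\sigma)$,
\[
\E_{\rv\omega^n,\rv x^k,\rv y^k}\big[\bar u_S(\rv\omega^n,\tau(\rv y^k))\big]=\sum_{y^k}\prob_\sigma(y^k)\,\frac1n\sum_{t=1}^n\sum_{\omega}\nu^\sigma_{t,y^k}(\omega)\,u_S(\omega,\tau_t(y^k)),
\]
after summing the joint law $\mu(\omega^n)\sigma(x^k|\omega^n)Q(y^k|x^k)$ over the payoff-irrelevant variable $x^k$. Minimizing the right-hand side over $\tau\in BR(\sigma)$ now decouples into independent minimizations, one per pair $(t,y^k)$, of $\sum_\omega\nu^\sigma_{t,y^k}(\omega)u_S(\omega,a)$ over $a\in A^*(\nu^\sigma_{t,y^k})$: the minimizer exists (a finite set) and the minimum value is exactly $u^*_S(\nu^\sigma_{t,y^k})$ by definition. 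Since any selection $(t,y^k)\mapsto a_{t,y^k}$ with $a_{t,y^k}\in A^*(\nu^\sigma_{t,y^k})$ defines an admissible $\tau\in BR(\sigma)$, choosing all minimizers simultaneously attains the sum of pointwise minima, while every $\tau\in BR(\sigma)$ yields at least this value; therefore
\[
\widehat U_{S,\sigma}(\mu^n,Q^k)=\sum_{y^k}\prob_\sigma(y^k)\,\frac1n\sum_{t=1}^n u^*_S(\nu^\sigma_{t,y^k}).
\]
Finally, since $\nu^\sigma_{t,y^k}$ does not involve $m^n$, summing over $m^n$ against $\prob_\sigma(m^n\mid y^k)$ turns $\prob_\sigma(y^k)$ into $\prob_\sigma(m^n,y^k)$, which is the asserted identity.

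The only step requiring care — and the one I would write out most explicitly — is this decoupling: that $\min_{\tau\in BR(\sigma)}$ commutes with the double sum over $(t,y^k)$, because $BR(\sigma)$ is precisely the set of $\tau$ satisfying one independent constraint $\tau_t(y^k)\in A^*(\nu^\sigma_{t,y^k})$ per pair $(t,y^k)$, with no coupling across pairs. Everything else is a change of order of summation. \qed
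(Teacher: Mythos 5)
Your argument is correct and is essentially the same one the paper sketches in the two sentences preceding the claim: by additive separability, the best-reply correspondence decomposes into the product, over pairs $(t,y^k)$, of the constraints $\tau_t(y^k)\in A^*(\nu^\sigma_{t,y^k})$, so the worst-case tie-break decouples pointwise and each term contributes $u^*_S(\nu^\sigma_{t,y^k})$, after which summing over $m^n$ is vacuous since $\nu^\sigma_{t,y^k}$ does not depend on $m^n$. You simply make explicit what the paper states compactly; the key decoupling step you single out is indeed the crux, and you get it right.
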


\vskip1cm

\begin{remark} Since the set of posteriors $\nu$ such that $\widetilde{A}(\nu)$ is a singleton is open, there exists $\alpha_0>0$ such that for all $m$: 
$$D(\nu\|\nu_m)\leq\alpha_0\implies \widetilde{A}(\nu)=\widetilde{A}(\nu_m).$$
\end{remark}
Whenever  $\widetilde{A}(\nu)$ is a singleton, denote  $\widetilde{A}(\nu)=\{\tilde a(\nu)\}$ the unique (worst) optimal action. From now on, we assume that $\alpha\in(0,\alpha_0)$. With the remark above, this implies that for each $t\in T_\alpha(m^n,y^k)$, the action chosen by the receiver for problem $t$ is $\tau_t(m^n,y^k)=\tilde a(\nu_{m_t})$. So precisely, $T_\alpha(m^n,y^k)$ is the set of indices $t$ such that the receiver plays the action $\tilde a(\nu_{m_t})$ which corresponds to the message $m_t$. In this sense, this is the set of indices for which the information transmission is successful.

\begin{lemma} For each $(m^n,y^k)\in B_{\alpha, \gamma, \delta}$,
$$\Big|\frac1n\sum_{t=1}^nu^*_S(\nu^\sigma_{t,y^k})-\sum_m\lambda_m u^*_S(\nu_m)\Big|\leq(\alpha+2\gamma+\delta)\|u\|,$$
where $\|u\|=\max_{\omega,a}|u_S(\omega,a)|$ is the largest absolute value of payoffs for the sender. 
\end{lemma}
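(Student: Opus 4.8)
The plan is to split the difference into a contribution coming from the empirical frequency of the messages deviating from $(\lambda_m)_m$, and a contribution coming, stage by stage, from the receiver's realized posterior $\nu^\sigma_{t,y^k}$ deviating from the target posterior $\nu_{m_t}$. Fix $(m^n,y^k)\in B_{\alpha,\gamma,\delta}$ and write $\freq_m=\freq_m(m^n,y^k)$ and $T_\alpha=T_\alpha(m^n,y^k)$. Inserting the intermediate quantity $\frac1n\sum_{t=1}^nu^*_S(\nu_{m_t})$ and using the triangle inequality, it suffices to bound the two terms
$$\Big|\tfrac1n\textstyle\sum_{t=1}^n u^*_S(\nu_{m_t})-\sum_m\lambda_m u^*_S(\nu_m)\Big|
\quad\text{and}\quad
\Big|\tfrac1n\textstyle\sum_{t=1}^n\big(u^*_S(\nu^\sigma_{t,y^k})-u^*_S(\nu_{m_t})\big)\Big|$$
by $\delta\|u\|$ and $(\alpha+2\gamma)\|u\|$ respectively.

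\textbf{The frequency term.} Grouping the stages $t$ by the value of $m_t$ gives $\frac1n\sum_{t=1}^n u^*_S(\nu_{m_t})=\sum_m\freq_m\, u^*_S(\nu_m)$, so the first term equals $\big|\sum_m(\freq_m-\lambda_m)u^*_S(\nu_m)\big|\le\|u\|\sum_m|\freq_m-\lambda_m|\le\delta\|u\|$, using $|u^*_S(\nu_m)|\le\|u\|$ and the second defining inequality of $B_{\alpha,\gamma,\delta}$.

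\textbf{The per-stage term.} Here I would split $\{1,\dots,n\}$ into $T_\alpha$ and its complement. For $t\in T_\alpha$ (recall $\alpha\in(0,\alpha_0)$), the Remark gives $\widetilde A(\nu^\sigma_{t,y^k})=\widetilde A(\nu_{m_t})=\{\tilde a(\nu_{m_t})\}$, so both $u^*_S(\nu^\sigma_{t,y^k})$ and $u^*_S(\nu_{m_t})$ are expectations of the \emph{same} function $u_S(\cdot,\tilde a(\nu_{m_t}))$, whence
$$|u^*_S(\nu^\sigma_{t,y^k})-u^*_S(\nu_{m_t})|\le \|u\|\,\|\nu^\sigma_{t,y^k}-\nu_{m_t}\|_1.$$
By Pinsker's inequality \citep[Lemma 11.6.1, p. 370]{cover-book-2006} and the definition of $T_\alpha$, $\|\nu^\sigma_{t,y^k}-\nu_{m_t}\|_1\le\sqrt{2\ln2\cdot D(\nu^\sigma_{t,y^k}\|\nu_{m_t})}\le\alpha$, so each such term is at most $\alpha\|u\|$. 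For $t\notin T_\alpha$ I would use the crude bound $|u^*_S(\nu^\sigma_{t,y^k})-u^*_S(\nu_{m_t})|\le2\|u\|$; since $|T_\alpha|/n\ge1-\gamma$, i.e. $|T_\alpha^c|/n\le\gamma$, these contribute at most $2\gamma\|u\|$ to the average. Summing, the per-stage term is bounded by $(\alpha+2\gamma)\|u\|$, and adding the two bounds gives $(\alpha+2\gamma+\delta)\|u\|$.

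\textbf{Main point of care.} The only delicate point is that $u^*_S$ is only lower semicontinuous, so one cannot compare $u^*_S(\nu^\sigma_{t,y^k})$ with $u^*_S(\nu_{m_t})$ by a naive Lipschitz estimate; this is precisely why the reduction to splittings with $\widetilde A(\nu_m)$ a singleton (Lemma \ref{lemma:DenseSplittingSubset}) and the resulting stability of the worst optimal action $\tilde a$ on an $\alpha_0$-neighbourhood are used, and why the Kullback--Leibler ball $T_\alpha$ — rather than an arbitrary ball in $\|\cdot\|_1$ — is the natural object. Everything else is a routine triangle-inequality computation.
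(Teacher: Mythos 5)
Your proof is correct and takes essentially the same route as the paper: triangle inequality inserting the intermediate quantity built from the target posteriors $\nu_{m_t}$, the frequency condition for one term, and a split over $T_\alpha$ versus its complement combined with the action-stability remark and Pinsker's inequality for the other. If anything your ordering is slightly tidier, since you pay the $T_\alpha^c$ penalty only once (yielding exactly the stated $2\gamma\|u\|$), whereas the paper's chain of triangle inequalities incurs that correction at two separate steps.
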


\proof\, Denote $u^*=\sum_m\lambda_m u^*_S(\nu_m)$. We have:
\begin{align*}
\Big|\frac1n\sum_{t=1}^nu^*_S(\nu^\sigma_{t,y^k})-u^*\Big| &\leq  \Big|\frac1n\sum_{t\in T_\alpha(m^n,y^k)}(u^*_S(\nu^\sigma_{t,y^k})-u^*)\Big|+\Big|\frac1n\sum_{t\notin T_\alpha(m^n,y^k)}(u^*_S(\nu^\sigma_{t,y^k})-u^*)\Big|\\
&\leq  \Big|\frac1n\sum_{t\in T_\alpha(m^n,y^k)}(u^*_S(\nu^\sigma_{t,y^k})-u^*)\Big|+\gamma\|U\|
\end{align*}
Then: 
\begin{align*}
 \Big|\frac1n\sum_{t\in T_\alpha(m^n,y^k)}(u^*_S(\nu^\sigma_{t,y^k})-u^*)\Big|&\leq
\Big|\frac1n\sum_{t\in T_\alpha(m^n,y^k)}(u^*_S(\nu^\sigma_{t,y^k})-u^*_S(\nu_{m_t}))\Big|\\
&+\Big|\frac1n\sum_{t\in T_\alpha(m^n,y^k)}(u^*_S(\nu_{m_t})-u^*)\Big|
\end{align*}
Since $\alpha\leq\alpha_0$, for each $t\in  T_\alpha(m^n,y^k)$, $\tilde a(\nu^\sigma_{t,y^k})=\tilde a(\nu_{m_t})$. Therefore, for $t\in  T_\alpha(m^n,y^k)$
$$\Big|u^*_S(\nu^\sigma_{t,y^k})-u^*_S(\nu_{m_t})\Big|\leq\sum_\omega |\nu^\sigma_{t,y^k}(\omega)-\nu_{m_t}(\omega)|\cdot|u_S(\omega,a)|\leq \| \nu^\sigma_{t,y^k}-\nu_{m_t}\|\cdot \|u\|\leq \alpha\|u\|,$$
where the latter inequality comes from Pinsker's inequality\footnote{\citealp[Lemma 11.6.1, p. 370]{cover-book-2006}.}: $\|\nu_1-\nu_2\|\leq\sqrt{2\ln 2\, D(\nu_1\| \nu_2)}$ and the definition of $ T_\alpha(m^n,y^k)$. It follows:
$$ \Big|\frac1n\sum_{t\in T_\alpha(m^n,y^k)}(u^*_S(\nu^\sigma_{t,y^k})-u^*)\Big|\leq \alpha\|u\|+\Big|\frac1n\sum_{t\in T_\alpha(m^n,y^k)}(u^*_S(\nu_{m_t})-u^*)\Big|$$
Now from $\frac{|T_\alpha(m^n,y^k)|}{n}\geq1-\gamma$, we have:
$$\Big|\frac1n\sum_{t\in T_\alpha(m^n,y^k)}(u^*_S(\nu_{m_t})-u^*)\Big|\leq \Big|\frac1n\sum_{t=1}^n(u^*_S(\nu_{m_t})-u^*)\Big|+\gamma\|u\|.$$
Then:
\begin{align*}
\Big|\frac1n\sum_{t=1}^n(u^*_S(\nu_{m_t})-u^*)\Big|&=\Big|\sum_m(\freq_m(m^n,y^k)-\lambda_m)u^*_S(\nu_m)\Big|\\
&\leq \sum_m\Big|\freq_m(m^n,y^k)-\lambda_m\Big|\cdot\Big|u^*_S(\nu_m)\Big|\\
&\leq \|u\|\delta.
\end{align*}
Collecting all inequalities together yields the desired conclusion.\qed

%%%%%%%%%%%%%%%%%%%%%%%%%%%
\subsubsection{Proof of Lemma \ref{coro:CodingScheme}} \label{ProofCoroCodingScheme}

By hypothesis, the splitting $(\lambda_m,\nu_m)_m$ satisfies the three conditions:
\begin{align}
&\sum_m\lambda_m\nu_m=\mu,   \label{eq:Condition1}  \\
&H(\mu)-\sum_m\lambda_m H(\mu_m)< rC(Q), \label{eq:Condition2} \\
&\forall m, \; \widetilde{A}(\nu_m) \text{ is a singleton. } \label{eq:Condition3}
\end{align}
Let $M=\{1,\dots, |M|\}$ be the set of messages associated with this splitting. 

\noindent{\em Part 1. Coding scheme.} We turn now to the actual construction. We use standard information theoretic techniques for Channel Coding \cite[Chap. 3.1, p. 38]{ElGammalKim(book)11} and Lossy Source Coding \cite[Chap. 3.6, p. 56]{ElGammalKim(book)11}. Using information theoretic language, the sender is viewed as an {\em encoder} who encrypts his intended $m^n$ messages in sequences of inputs $x^k$.  The messages $m^n$ are immaterial and can be seen as a pure mental construct of the sender. The encoding is such that a {\em decoder} who reads the sequence $y^k$, is able to determine the correct $m^n$ with high probability. This is described as follows.

For $\delta>0$, we define the set of {\em typical sequences}  $A_{\delta}$ as follows:
\begin{align}
A_{\delta}=  \bigg\{(\omega^n,m^n,x^k,y^k),\quad \text{ s.t. }
& \quad \sum_{\omega,m}\Big| \lambda_m \mu_m(\omega) - \freq_{\omega,m}(\omega^n,m^n) \Big| \leq \delta, \label{eq:defA1}\\
 \text{ and }&\quad \sum_{x,y}\Big| \prob(x)\times Q(y|x) - \freq_{x,y}(x^k,y^k) \Big| \leq \delta \bigg\} . \label{eq:defA2}
\end{align}
A pair of sequences $(\omega^n,m^n)$ which satisfies Equation \eqref{eq:defA1} will be called {\em jointly typical}. Similarly, pair of sequences $(x^k,y^k)$ which satisfies Equation \eqref{eq:defA2} will be called {\em jointly typical}. With a slight abuse of notation, we will write $(\omega^n,m^n)\in A_\delta$ or $(x^k,y^k)\in A_\delta$ to indicate jointly typical sequences.

Since condition  \eqref{eq:Condition2} is satisfied with  strict inequality, there exists a small parameter $\eta>0$ and a ``rate''  $\textsf{R}\geq 0 $, such that:
\begin{align}
\textsf{R}  =&      H(\mu) - \sum_m \lambda_m H(\mu_m) + \eta  \label{eq:AchievabilityBB1} , \\
\textsf{R}  \leq&   r C(Q)  -  \eta  \label{eq:AchievabilityBB2}  .
\end{align}
Moreover, we can assume that $n\textsf{R}$ is an integer for $n$ large enough.

\begin{itemize}
\item[$\bullet$] \textit{Random codebook.} A {\em codebook} is a family %$\rv{b}$
 $\mathsf{b}$ of  $|J|= 2^{n   \sf{R}   } $ sequences $m^n(j)$ and $x^k(j)$ indexed by  $j\in J $. A {\em random codebook} is the  draw of a codebook from the marginal i.i.d. probability distributions $(\lambda_m)^{\otimes n} $ and $\prob(x)^{\otimes n} $. The selected codebook is known by the encoder and the decoder. 

\item[$\bullet$] \textit{Encoding function.} The encoder observes the sequence of states $\omega^n \in  \Omega^n$. It finds an index $j\in J$ such that the sequences  $(\omega^n,m^n(j))\in A_\delta$ are jointly typical, i.e. satisfy Equation \eqref{eq:defA1}. The encoder sends the sequence $x^k(j)$ corresponding to the index $j\in J$.
\item[$\bullet$] \textit{Decoding function.} The decoder observes the sequence of channel output $y^k\in Y^k$. It finds an index $\hat{j}\in J$ such that the sequences  $\big(x^k(\hat{j}), y^k\big)\in A_\delta$ are jointly typical, i.e. satisfy Equation \eqref{eq:defA2}. The decoder decodes the sequence $m^n(\hat{j})$.  

\item[$\bullet$] \textit{Error Event.} We introduce the indicator of error $E_{\delta} \in \{0,1\}$ defined as follows:
\begin{align}
 E_{\delta} = \Bigg\{
\begin{array}{lll}
0 \text{ if }&  j= \hat{j}  \;\; \text{ and }\;\;  \big(\omega^n ,  m^n, x^k,y^k \big)    \in A_{\delta} ,\\
1 \text{ if }&  j \neq \hat{j} \;\; \text{ or } \;\; \big(\omega^n ,  m^n, x^k,y^k \big)    \notin A_{\delta} .
\end{array}
\Bigg.
\end{align}
An error $E_{\delta}=1$ occurs in the coding process if: 1) the indices $j\in J$ and $\hat{j}\in J$ are not equal or 2) the sequences of symbols $\big( \omega^n ,m^n , x^k,y^k\big) \notin A_{\delta}$, i.e. are not jointly typical. 
\end{itemize}

An important result in information theory is that the expected probability of error over the random codebook is small.

\textit{Expected error probability.} For all $\varepsilon_2>0$, for all $\eta >0$, there exists a $\bar{\delta}>0$, for all $\delta \leq \bar{\delta}$  there exists $\bar{n}, \bar k$ such that for all $n\geq\bar{n}, k\geq \bar k$ and $|\frac{k}{n}-r|\leq \vare_2$, the expected probability of the following error events are bounded by $\varepsilon_2$:
\begin{align}
&\E\bigg[ \prob_{\mathsf{b}}\bigg( \forall  j \in J  ,\quad 
\big(\omega^n, m^n(j) \big) \notin A_{\delta} \bigg)\bigg]  \leq \varepsilon_2, \label{eq:AchievProbaBB1}
 \\
&\E\bigg[ \prob_{\mathsf{b}}\bigg(  \exists j'\neq  j  ,\text{ s.t. } 
\big(y^k , x^k(j') \big) \in A_{\delta}\bigg)\bigg]   \leq \varepsilon_2. \label{eq:AchievProbaBB2}
\end{align}

\noindent - Equation \eqref{eq:AchievProbaBB1} comes from Equation \eqref{eq:AchievabilityBB1} and the Covering Lemma \ref{lemma:covering},  \cite[Lemma 3.3, p. 62]{ElGammalKim(book)11}.\\
- Equation \eqref{eq:AchievProbaBB2} comes from Equation \eqref{eq:AchievabilityBB2} and the Packing Lemma \ref{lemma:packing},  \cite[Lemma 3.1, p. 46]{ElGammalKim(book)11}.\\

If the expected probability of error is small over the codebooks, then it has to be small for at least one codebook. Following a standard analysis of the error probability,  \cite[pp. 42--43, 60--61]{ElGammalKim(book)11}, Equations \eqref{eq:AchievProbaBB1}, \eqref{eq:AchievProbaBB2}  imply that:
\begin{align}
& \forall \varepsilon_2>0,\;  \forall \eta>0, \;  \exists \bar{\delta}>0,\;\forall \delta\leq \bar{\delta}, \; \exists \bar{n}, \bar k,\;\forall n\geq \bar{n},\forall k\geq \bar k, |\frac{k}{n}-r|\leq\vare_2, \hskip5pt%\quad 
\exists \mathsf{b}^\star,\text{ s.t. }  \prob_{\mathsf{b}^\star}\big(E_{\delta}=1 \big) \leq \varepsilon_2. \label{eq:BoundError0}
\end{align}

The strategy $\sigma$ of the sender consists in using this codebook $\mathsf{b}^{\star}$ in order to find the sequence $m^n(j)$ which is jointly typical with $\omega^n$, and in sending the sequence $x^k(j)$. By construction, this satisfies Equation \eqref{eq:BoundError0}, i.e. it has a low  probability of error.\\

 \noindent{\em Part 2. Control of the Beliefs.} 
The previous construction has the property that the decoder who uses the decoding schemes, makes an error with small probability. Now, the receiver needs not use the decoding scheme. Actually, the receiver calculates the posterior belief on the sequence of states $\omega^n$, given $y^k$.  Our  contribution is to show that those beliefs are close to the prescribed beliefs $\nu_m$ at most stages. We have the following chain of inequalities:
\begin{align}
&  \E_{\sigma} \Bigg[ \frac{1}{n}  \sum_{t=1}^n D\Big( \nu^\sigma_{t,y^k} \Big\|   \nu_{m_t} \Big)\, \Big| \, E_{\delta}=0 \Bigg] \nonumber \\
 =& \sum_{m^n,y^k}\prob_{\sigma}(m^n,y^k|E_{\delta}=0) \cdot \frac{1}{n}  \sum_{t=1}^n D\Big( \nu^\sigma_{t,y^k} \Big\|  \nu_{m_t} \Big) \label{eq:KLdiv0} \displaybreak[0] \\
=&\frac{1}{n}   \sum_{(\omega^n,m^n,y^k)\in A_{\delta} }\prob_{\sigma}(\omega^n,m^n,y^k|E_{\delta}=0)   \cdot \log_2 \frac{1}{\prod_{t=1}^n \nu_{m_t}(\omega_t) }  -   \frac{1}{n}  \sum_{t=1}^n H(\rv \omega_t|\rv y^k,E_{\delta}=0) \label{eq:KLdiv1} \\
\leq&\frac{1}{n}   \sum_{(\omega^n,m^n,y^k)\in A_{\delta} }\prob_{\sigma}(\omega^n,m^n,y^k|E_{\delta}=0)   \cdot \log_2 \frac{1}{\prod_{t=1}^n \nu_{m_t}(\omega_t) }  -   \frac{1}{n}  \sum_{t=1}^n H(\rv \omega_t|\rv m^n,\rv y^k,E_{\delta}=0) \label{eq:KLdiv2} \\
\leq&\frac{1}{n}   \sum_{(\omega^n,m^n,y^k)\in A_{\delta} }\prob_{\sigma}(\omega^n,m^n,y^k|E_{\delta}=0)   \cdot n \cdot \Big(H(\rv \omega|\rv m) + \delta  \Big) - \frac{1}{n}  H(\rv \omega^n|\rv m^n,\rv y^k,E_{\delta}=0) \label{eq:KLdiv3} \\
\leq&\frac1n I(\rv \omega^n;\rv m^n, \rv y^k|E_{\delta}=0)  -  I(\rv \omega;\rv m) +  \delta + \frac{1}{n}   + \log_2 |\Omega|\cdot \prob_{\sigma}\big(E_{\delta}=1 \big) \label{eq:KLdiv4}\\
\leq&\frac1n I(\rv \omega^n; \rv m^n|E_{\delta}=0)  -  I(\rv \omega;\rv m) +  \delta + \frac{2}{n}   + 2\log_2 |\Omega| \cdot \prob_{\sigma}\big(E_{\delta}=1 \big) \label{eq:KLdiv5} \\
\leq& \eta +  \delta + \frac{2}{n}   +2 \log_2 |\Omega| \cdot \prob_{\sigma}\big(E_{\delta}=1 \big)   . \label{eq:KLdiv6}
\end{align}
- Equation \eqref{eq:KLdiv0} comes from the definition of the expected K-L divergence.\\
- Equation \eqref{eq:KLdiv1} comes from the conditioning by $E_{\delta}=0$, since the support of $\prob_{\sigma}(\omega^n,m^n,y^k|E_{\delta}=0)$ is included in $A_{\delta}$.\\
- Equation \eqref{eq:KLdiv2} comes from the  property of the entropy $H(\rv \omega_t|\rv m^n,\rv y^k,E_{\delta}=0) \leq H(\rv \omega_t| \rv y^k,E_{\delta}=0) $.\\
- Equation \eqref{eq:KLdiv3} comes from the property of typical sequences $(\omega^n,m^n)\in A_{\delta}$, stated in Lemma \ref{lemma:typicalSeq} and in \citet[Property 1, pp. 26]{ElGammalKim(book)11}, and the chain rule for entropy: $$H(\rv \omega^n|\rv m^n,\rv y^k,E_{\delta}=0)   \leq \sum_{t=1}^n  H(\rv \omega_t|\rv m^n,\rv y^k,E_{\delta}=0). $$\\
- Equation \eqref{eq:KLdiv4} comes from Lemma \ref{lemma:IID} (see section \ref{sec:CoveringLemma}), which implies
$$\frac1n H(\rv \omega^n| E_{\delta}=0) - \frac1nH(\rv \omega^n) +\frac1n +  \log_2 |\Omega| \cdot \prob_{\sigma}(E_{\delta}=1)\ge0.$$
Adding this expression to Equation \eqref{eq:KLdiv3} yields Equation \eqref{eq:KLdiv4}.

- Equation \eqref{eq:KLdiv5} comes from Lemma \ref{lemma:IID} (see section \ref{sec:CoveringLemma}) which implies that $$I(\rv \omega^n;\rv  y^k|\rv m^n,E_{\delta}=0) \leq I(\rv  \omega^n;\rv y^k|\rv m^n) + 1 + n \cdot \log_2 |\Omega| \cdot \prob_{\sigma}(E_{\delta}=1) = 1 + n \cdot \log_2 |\Omega| \cdot \prob_{\sigma}(E_{\delta}=1),$$ where $I(\rv \omega^n;\rv y^k|\rv m^n)=0$,  from the Markov chain property of  the triple $(\rv\omega^n, \rv m^n,\rv y^k)$.\\
- Equation \eqref{eq:KLdiv6} comes from the cardinality of the codebook\footnote{The last argument is inspired by \citealp[Equation (23)]{MerhavShamai(StateMasking)07},  for the problem of ``Information Rates Subject to State Masking''.}: 
$$I(\rv \omega^n;\rv m^n|E_{\delta}=0)\leq H(\rv m^n)\leq \log_2|J| = n   \cdot \textsf{R}    = n \cdot  (  I( \rv \omega ; \rv  m )  + \eta ).$$

Then, we have:
\begin{align}
&1 - \prob_\sigma(B_{\alpha,\gamma,\delta}) := \prob_\sigma(B^c_{\alpha,\gamma,\delta}) \nonumber\\
=&\prob_\sigma(E_{\delta}=1) \prob_\sigma(B^c_{\alpha,\gamma,\delta}| E_{\delta}=1)  + \prob_\sigma(E_{\delta}=0) \prob_\sigma(B^c_{\alpha,\gamma,\delta}| E_{\delta}=0) \nonumber\\
\leq&\prob_\sigma(E_{\delta}=1)   +  \prob_\sigma(B^c_{\alpha,\gamma,\delta}| E_{\delta}=0) \nonumber\\
\leq&\varepsilon_2  +  \prob_\sigma(B^c_{\alpha,\gamma,\delta}| E_{\delta}=0) .\label{eq:ErrorTerm2}
\end{align}
Moreover:
\begin{align}
& \prob_\sigma(B^c_{\alpha,\gamma,\delta}| E_{\delta}=0)\nonumber\\
=&\sum_{m^n,y^k}\prob_{\sigma}\Big( (m^n,y^k)\in  B^c_{\alpha,\gamma,\delta} \Big| E_{\delta}=0\Big)  \label{eq:MarkovIneqB0} \\
=&\sum_{m^n,y^k}\prob_{\sigma}\Bigg( (m^n,y^k)\,\quad \text{ s.t. } \quad  \frac{|T_\alpha(m^n,y^k)|}{n}< 1-\gamma  \Bigg| E_{\delta}=0\Bigg)  \label{eq:MarkovIneqB1} \\
%=&\prob_{\sigma}\Bigg( \frac{\big|\;I^c_{\gamma}\;\big|}{n} >  \delta \Bigg| E_{\delta}=0\Bigg) \label{eq:MarkovIneq1} \\
=& \prob_{\sigma}\Bigg( \frac{\#}{n} \bigg\{t , \text{ s.t. } D\bigg( \nu^\sigma_{t,y^k} \bigg\|   \nu_{m_t}  \bigg)\leq\frac{\alpha^2}{2\ln 2}    \bigg\} < 1 -\gamma \Bigg| E_{\delta}=0 \Bigg) \label{eq:MarkovIneqB2}\\
=& \prob_{\sigma}\Bigg( \frac{\#}{n} \bigg\{t , \text{ s.t. } D\bigg( \nu^\sigma_{t,y^k} \bigg\|    \nu_{m_t}  \bigg)> \frac{\alpha^2}{2\ln 2}    \bigg\} \geq \gamma \Bigg| E_{\delta}=0 \Bigg) \label{eq:MarkovIneqB2b}\\
\leq& \frac{2\ln 2}{\alpha^2\gamma}   \cdot \E_{\sigma}\Bigg[  \frac{1}{n}  \sum_{t=1}^n   D\bigg( \nu^\sigma_{t,y^k}  \bigg\|   \nu_{m_t} \bigg)\Bigg] \label{eq:MarkovIneqB3}  \\
\leq&\frac{2\ln 2}{\alpha^2\gamma}   \cdot \bigg( \eta +  \delta + \frac{2}{n}   +2 \log_2 |\Omega| \cdot \prob_{\sigma}\big(E_{\delta}=1 \big)  \bigg) \label{eq:MarkovIneqB4}.
\end{align}
- Equations \eqref{eq:MarkovIneqB0} to %,  \eqref{eq:MarkovIneqB1},  \eqref{eq:MarkovIneqB2} 
\eqref{eq:MarkovIneqB2b} are simple reformulations.\\
- Equation \eqref{eq:MarkovIneqB3} comes from a use of Markov's inequality, detailed in Lemma \ref{lemma:MarkovInequalityZ} (see section \ref{sec:CoveringLemma}).\\
- Equation \eqref{eq:MarkovIneqB4}  comes from equation  \eqref{eq:KLdiv6}.\\

\noindent Combining equations \eqref{eq:BoundError0}, \eqref{eq:ErrorTerm2}, and \eqref{eq:MarkovIneqB4} we obtain the following statement:

$\forall \varepsilon_3>0$, $\forall \alpha >0$, $\forall \gamma >0$, $ \exists \bar{\eta}$,  $\forall \eta\leq \bar{\eta}$, $\exists \bar{\delta}$,  $\forall \delta\leq \bar{\delta}$, $\exists \bar{n}, \bar k$, $\forall n\geq \bar{n},\forall k\geq \bar k$, $|\frac{k}{n}-r|\leq\vare_3$, $\exists \sigma$, such that:
$$\prob_\sigma(B^c_{\alpha,\gamma,\delta})  \leq 2 \cdot  \prob_{\sigma}\big(E_{\delta}=1 \big) + \frac{2\ln 2}{\alpha^2\gamma}  \cdot \bigg(\eta+  \delta + \frac{2}{n}   +2 \log_2 |\Omega| \cdot \prob_{\sigma}\big(E_{\delta}=1 \big) \bigg) \leq \varepsilon_3.$$
 By choosing appropriately the ``rate''  $\textsf{R}\geq 0 $ in \eqref{eq:AchievabilityBB1} and \eqref{eq:AchievabilityBB2} such as to make $\eta>0$ small, we obtain the desired result:
\begin{align*}
\forall \varepsilon>0, \;\; \forall \alpha >0, \;\; \forall \gamma >0, \;\; \exists \bar{\delta}, \;\; \forall \delta\leq \bar{\delta}, \;\; \exists \bar{n}, \bar k,\;\forall n\geq \bar{n},\forall k\geq \bar k, |\frac{k}{n}-r|\leq\vare, \;\; \exists \sigma,
\end{align*}

such that $\prob_\sigma(B^c_{\alpha,\gamma,\delta})  \leq \varepsilon$.\qed

%%%%%%%%%%%%%%%%%%%%%%%%%%%%%%%%
%%%%%%%%%%%%%%%%%%%%%%%%%%%%%%%%%%%%
\subsection{Additional lemmas}\label{sec:CoveringLemma}
The next three lemmas are standard results in information theory. They are recalled for the convenience of the reader.

\begin{lemma}{\rm(Covering lemma: compression of information source, Lemma 3.3, p. 62 in \citealp{ElGammalKim(book)11})}\label{lemma:covering}

Consider a random sequence $\omega^n$  with i.i.d. distribution $\prob^{\otimes n}(\omega)$ and a family of $2^{n\textsf{R}}$ sequences $\big(m^n(j)\big)_{j\in\{1,\ldots,2^{n\textsf{R}}\}}$ independently drawn from the i.i.d. distribution $\prob^{\otimes n}(m)$. Assume that $\textsf{R}= I(\rv \omega;\rv m)+\eta$ with $\eta>0$.
 
For  all $\varepsilon>0$,  there exists  $\bar{\delta}>0$, such that for all $\delta \leq \bar{\delta}$,  there exists $\bar{n}$, such that for all $n\geq\bar{n}$:
\begin{align*}
\prob\bigg( \forall  j \in J  ,\quad \big( \omega^n , m^n(j)\big) \notin A_{\delta} \bigg) \leq \varepsilon. \label{eq:covering} 
\end{align*}
 \end{lemma}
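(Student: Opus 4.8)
I would prove this by the classical second-moment (``covering'') argument of information theory. The plan has three parts: (i) restrict attention to the event that $\omega^n$ is itself \emph{typical}, i.e.\ its empirical distribution is close to the law of $\rv\omega$, which occurs with probability close to one by the weak law of large numbers; (ii) fix such a typical $\omega^n$ and lower-bound the probability $p$ that a single randomly drawn codeword $m^n(j)$ is jointly typical with it; (iii) exploit the independence of the $2^{n\textsf{R}}$ codewords to conclude that, with probability close to one, at least one of them is jointly typical with $\omega^n$.

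For part (ii), conditionally on a fixed typical $\omega^n$, the number of sequences $m^n$ that are jointly typical with $\omega^n$ (in the sense of $A_\delta$) is at least $2^{n(H(\rv m\mid\rv\omega)-\delta_1)}$ for some $\delta_1=\delta_1(\delta)$ with $\delta_1\to0$ as $\delta\to0$ (the standard counting bound for conditionally typical sequences), and each such sequence has probability at least $2^{-n(H(\rv m)+\delta_2)}$ under $\prob^{\otimes n}(m)$, with $\delta_2=\delta_2(\delta)\to0$. Hence, writing $p:=\prob\big((\omega^n,m^n(1))\in A_\delta\mid\omega^n\big)$,
$$p\ \ge\ 2^{n(H(\rv m\mid\rv\omega)-\delta_1)}\cdot 2^{-n(H(\rv m)+\delta_2)}\ =\ 2^{-n(I(\rv\omega;\rv m)+\delta_1+\delta_2)}.$$

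For part (iii), the codewords $m^n(1),\dots,m^n(2^{n\textsf{R}})$ are drawn i.i.d.\ and independently of $\omega^n$, so conditionally on $\omega^n$ the events $\{(\omega^n,m^n(j))\in A_\delta\}$ are i.i.d.\ with probability $p$; using $1-x\le e^{-x}$,
$$\prob\big(\forall j\in J,\ (\omega^n,m^n(j))\notin A_\delta\mid\omega^n\big)\ =\ (1-p)^{2^{n\textsf{R}}}\ \le\ \exp\!\big(-p\,2^{n\textsf{R}}\big)\ \le\ \exp\!\big(-2^{n(\textsf{R}-I(\rv\omega;\rv m)-\delta_1-\delta_2)}\big).$$
Since $\textsf{R}=I(\rv\omega;\rv m)+\eta$, I would first choose $\bar\delta$ small enough that $\delta_1+\delta_2<\eta/2$ whenever $\delta\le\bar\delta$, making the exponent $-2^{n\eta/2}\to-\infty$. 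Combining with part (i), for $n\ge\bar n$ large enough $\prob(\omega^n\text{ not typical})\le\varepsilon/2$, and
$$\prob\big(\forall j\in J,\ (\omega^n,m^n(j))\notin A_\delta\big)\ \le\ \prob(\omega^n\text{ not typical})+\exp\!\big(-2^{n\eta/2}\big)\ \le\ \varepsilon,$$
which is exactly the required statement, in the quantifier order ($\varepsilon$ first, then $\bar\delta$, then $\bar n$).

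The part I expect to be the main obstacle is the quantitative bookkeeping of part (ii): one must make the error terms $\delta_1,\delta_2$ genuinely controlled by $\delta$ (this follows from uniform continuity on the finite simplex of the entropy and of conditional entropies viewed as functions of empirical joint distributions) \emph{and} ensure they fall below $\eta$ before sending $n\to\infty$. Getting this interplay of $\delta$, $\eta$ and $n$ right is the only delicate point; everything else is a routine AEP/typicality estimate, and the whole argument is precisely the one of \cite[Lemma 3.3, p.~62]{ElGammalKim(book)11}.
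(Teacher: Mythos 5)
The paper does not prove this lemma: it is stated in the appendix under the heading ``Additional lemmas'' with the explicit remark that ``the next three lemmas are standard results in information theory. They are recalled for the convenience of the reader,'' and the proof is delegated to the cited reference (El Gamal and Kim, Lemma 3.3, p.~62). Your argument is essentially a correct reconstruction of that textbook proof: restrict to the event that $\omega^n$ is typical, lower-bound the conditional probability $p$ that a single random codeword is jointly typical with a fixed typical $\omega^n$ by $2^{-n(I(\rv\omega;\rv m)+\delta_1+\delta_2)}$ via the standard typical-set counting and per-sequence probability bounds, and finish with $(1-p)^{2^{n\textsf{R}}}\leq\exp(-p\,2^{n\textsf{R}})$, choosing $\bar\delta$ so that $\delta_1+\delta_2<\eta/2$ before letting $n\to\infty$; the quantifier order you extract matches the one stated. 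One small terminological slip: you call this ``the classical second-moment argument,'' but the bound $(1-p)^{N}\leq e^{-pN}$ is a direct first-moment/independence estimate; the genuine second-moment (Chebyshev/Paley--Zygmund) method is needed only for the \emph{mutual} covering lemma, where the codewords are not independent. That does not affect the validity of what you wrote.
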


 \begin{lemma}{\rm(Packing lemma: transmission over a noisy channel, Lemma 3.1, p. 46 \citealp{ElGammalKim(book)11})}\label{lemma:packing}
 
Consider a random sequence $y^k$ drawn with i.i.d. distribution $\prob^{\otimes k}(y)$ and a family of $2^{k\textsf{R}}$ sequences $\big(x^k(j)\big)_{j\in\{1,\ldots,2^{k\textsf{R}}\}}$ independently drawn from the i.i.d. distribution $\prob^{\otimes k}(x)$.  Assume that  $\textsf{R}= I(\rv x;\rv y)-\eta$ with $\eta>0$.

For  all $\varepsilon>0$, there exists  $\bar{\delta}>0$, such that for all $\delta \leq \bar{\delta}$,  there exists $\bar{k}$, such that for all $k\geq\bar{k}$:
\begin{align*}
\prob\bigg( \exists  j  \in J  ,\quad \big(x^k(j), y^k \big) \in A_{\delta} \bigg)  \leq \varepsilon. %\label{eq:packing} 
\end{align*}
  \end{lemma}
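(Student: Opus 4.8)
The plan is the classical union-bound argument for joint-typicality decoding, the only adaptation being that the typicality in the definition of $A_\delta$ is the total-variation type condition $\sum_{x,y}|\prob(x)Q(y|x)-\freq_{x,y}(x^k,y^k)|\le\delta$ rather than letter-typicality. Write $J=\{1,\dots,2^{k\textsf{R}}\}$ and condition on the realization $y^k$. Since the codewords $x^k(j)$ are identically distributed, each i.i.d.\ from $\prob^{\otimes k}(x)$ and independent of $y^k$, the union bound gives
\[
\prob\Big(\exists j\in J:\ (x^k(j),y^k)\in A_\delta\Big)\le\sum_{j\in J}\prob\big((x^k(j),y^k)\in A_\delta\big)=2^{k\textsf{R}}\,\prob\big((x^k(1),y^k)\in A_\delta\big),
\]
so the problem reduces to bounding, uniformly in $y^k$, the probability that one freshly drawn codeword is jointly typical with $y^k$.

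For that I would use two counting facts about the total-variation typical set, both instances of the typical-average lemma already invoked in the paper (cf.\ Lemma \ref{lemma:typicalSeq}): (i) if $(x^k,y^k)\in A_\delta$ then, summing the defining inequality over $y$, $x^k$ is $\delta$-typical for the marginal $\prob(x)$, hence on the support $\prob^{\otimes k}(x^k)\le 2^{-k(H(\rv x)-\varepsilon_1(\delta))}$ for some $\varepsilon_1(\delta)\to0$ as $\delta\to0$; and (ii) for fixed $y^k$, the set $\{x^k:(x^k,y^k)\in A_\delta\}$ has cardinality at most $2^{k(H(\rv x\mid\rv y)+\varepsilon_2(\delta))}$ for some $\varepsilon_2(\delta)\to0$, which follows by noting that $\prod_{t=1}^k\prob(x_t\mid y_t)\ge 2^{-k(H(\rv x\mid\rv y)+\varepsilon_2(\delta))}$ on that set while these probabilities sum to at most one. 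Sequences $x^k$ with positive empirical frequency on a zero-probability symbol have $\prob^{\otimes k}(x^k)=0$ and are discarded, and if $y^k$ itself is not $\delta$-typical the set in (ii) is empty, so the bound is trivial in that case. Multiplying (i) and (ii),
\[
\prob\big((x^k(1),y^k)\in A_\delta\big)\le 2^{k(H(\rv x\mid\rv y)+\varepsilon_2(\delta))}\cdot 2^{-k(H(\rv x)-\varepsilon_1(\delta))}=2^{-k\big(I(\rv x;\rv y)-\varepsilon_1(\delta)-\varepsilon_2(\delta)\big)},
\]
uniformly over $y^k$.

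Combining the two displays with the hypothesis $\textsf{R}=I(\rv x;\rv y)-\eta$ yields
\[
\prob\Big(\exists j\in J:\ (x^k(j),y^k)\in A_\delta\Big)\le 2^{-k\big(\eta-\varepsilon_1(\delta)-\varepsilon_2(\delta)\big)}.
\]
To conclude, given $\varepsilon>0$ I would choose $\bar\delta>0$ small enough that $\varepsilon_1(\delta)+\varepsilon_2(\delta)\le\eta/2$ for every $\delta\le\bar\delta$, so that the right-hand side is at most $2^{-k\eta/2}$, and then take $\bar k$ large enough that $2^{-k\eta/2}\le\varepsilon$ for all $k\ge\bar k$; since $y^k$ does not appear in this final bound, the estimate also survives averaging over $y^k$, which is the form used at \eqref{eq:AchievProbaBB2}. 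The only point requiring genuine care is the verification of the counting facts (i) and (ii) for the total-variation notion of typicality, i.e.\ showing that the slack constants $\varepsilon_1(\delta),\varepsilon_2(\delta)$ indeed vanish as $\delta\to0$ and are uniform in $k$; this is exactly the typical-average lemma and is the step I would write out most carefully, everything else being a one-line union bound and an exponent comparison.
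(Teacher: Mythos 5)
The paper does not give a proof of this lemma; it is imported as a black box from \citet[Lemma 3.1, p.~46]{ElGammalKim(book)11}. Your proposal is the textbook proof -- union bound over the codebook, then a typicality count -- which is exactly the argument used in that reference, and your choice to condition on the realization of $y^k$ and obtain a bound uniform in $y^k$ is the right one (and is actually needed for the application at \eqref{eq:AchievProbaBB2}, where the output sequence is not drawn from the i.i.d.\ marginal but is the channel output given the transmitted codeword).

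There is, however, a genuine gap in your justification of the cardinality bound (ii), and it sits precisely at the spot you single out as needing the most care. With the total-variation typicality $\sum_{x,y}\lvert\prob(x)Q(y\mid x)-\freq_{x,y}(x^k,y^k)\rvert\le\delta$ defining $A_\delta$, a sequence $x^k$ that is jointly typical with $y^k$ can place empirical mass up to $\delta$ on pairs $(x,y)$ outside the support of the joint law, i.e.\ with $\prob(x)>0$ and $\prob(y)>0$ but $Q(y\mid x)=0$, hence $\prob(x\mid y)=0$. For any such $x^k$ one has $\prod_{t=1}^{k}\prob(x_t\mid y_t)=0$, so the inequality $\prod_t\prob(x_t\mid y_t)\ge 2^{-k(H(\rv x\mid\rv y)+\varepsilon_2(\delta))}$ that you invoke to count the set simply fails. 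Your ``discard sequences with a zero-probability symbol'' remark only handles the case $\prob(x_t)=0$ and does not cover this one. (The same caveat is lurking, unaddressed, in the paper's own Lemma~\ref{lemma:typicalSeq}.) The standard repairs are either to strengthen the definition of $A_\delta$ so that typical sequences are forbidden from hitting zero-probability pairs -- this is what the strong/letter typicality of El~Gamal--Kim does automatically -- or to replace your derivation of (ii) by a type-counting estimate: the number of $x^k$ whose joint type with the fixed $y^k$ lies within $\delta$ of $\prob(x)Q(y\mid x)$ is at most $(k+1)^{|X||Y|}\,2^{k\,\sup_{\hat P}H_{\hat P}(\rv x\mid\rv y)}$, where the supremum is over types $\hat P$ within $\delta$ of the target in total variation, and uniform continuity of conditional entropy on the simplex gives the $\varepsilon_2(\delta)\to 0$ you need, including for types touching the boundary. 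With (ii) repaired in either of these ways, the union bound and exponent comparison in the rest of your argument go through as written.
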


\begin{lemma}[Typical sequences, Property 1, p. 26 in \citealp{ElGammalKim(book)11}]\label{lemma:typicalSeq}
The typical sequences $( \omega^n,m^n)\in A_{\delta}$ satisfy:
\begin{align*}
\forall \delta_2>0,\; \exists \bar{\delta}_2>0,\; \forall \delta \leq \bar{\delta}_2,\; \forall n,\; \forall ( \omega^n,m^n) \in A_{\delta},\\\qquad \bigg| \frac{1}{n} \cdot \log_2 \frac{1}{\prod_{t=1}^n \prob( \omega_t|m_t)} - H(\rv  \omega|\rv m) \bigg| \leq \delta_2,
\end{align*}
where $\bar{\delta}_2 = \delta_2 \cdot H(\rv  \omega | \rv m)$. 
\end{lemma}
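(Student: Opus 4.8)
\emph{Proof.} Write $\prob(\omega,m):=\lambda_m\nu_m(\omega)$ for the target joint law of the splitting and $\prob(\omega\mid m):=\nu_m(\omega)$ for the associated conditional (this is the quantity written $\prob(\omega_t\mid m_t)$ in the statement). By the definition of $A_\delta$ in \eqref{eq:defA1}, a pair $(\omega^n,m^n)\in A_\delta$ has empirical joint frequency $\freq_{\omega,m}(\omega^n,m^n)$ at $\ell_1$--distance at most $\delta$ from $\prob(\cdot,\cdot)$, i.e. $\sum_{\omega,m}\big|\prob(\omega,m)-\freq_{\omega,m}(\omega^n,m^n)\big|\le\delta$. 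As is standard for (robust) typical sequences, we may and do assume $\freq_{\omega,m}(\omega^n,m^n)=0$ whenever $\prob(\omega,m)=0$, so that every pair $(\omega_t,m_t)$ lies in the support of $\prob$ and $\log_2\prob(\omega_t\mid m_t)$ is well defined.

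The first step is to rewrite the empirical log-likelihood by collecting the $n$ coordinates according to the value of $(\omega_t,m_t)$:
\[
\frac1n\log_2\frac1{\prod_{t=1}^n\prob(\omega_t\mid m_t)}
=-\frac1n\sum_{t=1}^n\log_2\prob(\omega_t\mid m_t)
=-\sum_{\omega,m}\freq_{\omega,m}(\omega^n,m^n)\,\log_2\prob(\omega\mid m).
\]
On the other hand, the defining formula of conditional entropy gives $H(\rv\omega\mid\rv m)=-\sum_{\omega,m}\prob(\omega,m)\log_2\prob(\omega\mid m)$. Subtracting the two expressions and using the triangle inequality together with the $A_\delta$ constraint,
\[
\Bigg|\frac1n\log_2\frac1{\prod_{t=1}^n\prob(\omega_t\mid m_t)}-H(\rv\omega\mid\rv m)\Bigg|
=\Bigg|\sum_{\omega,m}\big(\prob(\omega,m)-\freq_{\omega,m}(\omega^n,m^n)\big)\log_2\prob(\omega\mid m)\Bigg|
\le c\,\delta,
\]
where $c:=\max\{\,|\log_2\prob(\omega\mid m)|:\prob(\omega,m)>0\,\}<\infty$ depends only on the splitting. (With the multiplicative typicality convention, $|\prob(\omega,m)-\freq_{\omega,m}|\le\delta\,\prob(\omega,m)$, the same computation sharpens the constant to $H(\rv\omega\mid\rv m)$, which is the form recorded in the statement.)

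It remains only to translate this estimate into the quantified form claimed. Given $\delta_2>0$, choose $\bar\delta_2>0$ small enough that $c\,\bar\delta_2\le\delta_2$ (equivalently, $\bar\delta_2$ proportional to $\delta_2$ in the cited convention). Then for every $\delta\le\bar\delta_2$, every $n$, and every $(\omega^n,m^n)\in A_\delta$, the displayed bound yields $\big|\frac1n\log_2(1/\prod_t\prob(\omega_t\mid m_t))-H(\rv\omega\mid\rv m)\big|\le\delta_2$. Note that nothing in the argument depends on $n$ beyond the defining inequality of $A_\delta$, so the bound is uniform in $n$, as required. \qed

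The argument involves no genuine obstacle: it is a finite-alphabet triangle-inequality estimate comparing a probability law to a nearby empirical measure. The only point needing care is the treatment of pairs $(\omega,m)$ with $\prob(\omega,m)=0$, for which $\log_2\prob(\omega\mid m)$ would be $-\infty$; this is disposed of by the standard convention (as in \citealp{ElGammalKim(book)11}) that typical sequences assign zero empirical frequency to zero-probability symbols, i.e.\ by restricting the sums above to the support of $\prob$.
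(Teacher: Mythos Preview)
Your argument is correct and is precisely the standard proof of this typical-sequence property; the paper itself does not give a proof but merely cites it as Property~1, p.~26 in \cite{ElGammalKim(book)11}. You also correctly flag the slight mismatch between the paper's additive ($\ell_1$) definition of $A_\delta$ in \eqref{eq:defA1} and the multiplicative convention under which the explicit constant $\bar\delta_2=\delta_2\cdot H(\rv\omega\mid\rv m)$ arises, and you handle the zero-probability symbols appropriately.
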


The next two lemmas are easy ancillary  results that were used in the proofs and were omitted in the previous section to ease the reading.

\begin{lemma}[Markov's inequality]\label{lemma:MarkovInequalityZ}
For all $\varepsilon_1>0$, $\varepsilon_2>0$ we have:
\begin{align}
&\E_{\sigma} \Bigg[ \frac{1}{n}  \sum_{t=1}^n   D\bigg(  \prob_{\sigma}(\rv \omega_t|\rv y^n,E_{\delta}=0) \bigg\|   \prob(\rv \omega_t|\rv m_t) \bigg)\Bigg]   \leq  \varepsilon_0\\
\Longrightarrow& \prob_{m^n,y^n}\Bigg( \frac{\#}{n} \bigg\{t , \text{ s.t. } D\bigg(  \prob_{\sigma}(\rv \omega_t| \rv y^n,E_{\delta}=0) \bigg\|   \prob(\rv \omega_t|\rv m_t) \bigg)>   \varepsilon_1 \bigg\} > \varepsilon_2 \Bigg) \leq  \frac{\varepsilon_0}{\varepsilon_1 \cdot \varepsilon_2}.
\end{align}
\end{lemma}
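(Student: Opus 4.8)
The plan is to prove this by a straightforward two-step application of Markov's inequality: first pass from the per-stage divergences to the empirical proportion of stages at which the divergence exceeds the threshold $\varepsilon_1$, and then apply Markov's inequality once more to that proportion. Throughout, the crucial elementary fact is that the Kullback--Leibler divergence is nonnegative, which is what legitimizes both steps.

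Concretely, for each realized pair $(m^n,y^n)$ I would write
\[
Z_t \;=\; D\!\left(\prob_\sigma(\rv\omega_t\mid \rv y^n,E_\delta=0)\,\big\|\,\prob(\rv\omega_t\mid \rv m_t)\right)\ \ge\ 0, \qquad t=1,\dots,n,
\]
and set $N = \#\{t : Z_t > \varepsilon_1\}$. The hypothesis reads $\E_\sigma\!\left[\tfrac1n\sum_{t=1}^n Z_t\right]\le\varepsilon_0$. The first step is the purely deterministic bound, valid for every fixed $(m^n,y^n)$,
\[
\frac1n\sum_{t=1}^n Z_t \;\ge\; \frac1n\sum_{t\,:\,Z_t>\varepsilon_1} Z_t \;\ge\; \frac{\varepsilon_1}{n}\,N,
\]
that is, $\tfrac{N}{n}\le \tfrac{1}{\varepsilon_1}\cdot\tfrac1n\sum_{t=1}^n Z_t$ pointwise. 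Taking expectations and invoking the hypothesis gives $\E_\sigma\!\left[\tfrac{N}{n}\right]\le \tfrac{\varepsilon_0}{\varepsilon_1}$.

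The second step applies Markov's inequality to the nonnegative random variable $\tfrac{N}{n}$:
\[
\prob_{m^n,y^n}\!\left(\tfrac{N}{n}>\varepsilon_2\right)\;\le\;\frac{1}{\varepsilon_2}\,\E_\sigma\!\left[\tfrac{N}{n}\right]\;\le\;\frac{\varepsilon_0}{\varepsilon_1\,\varepsilon_2}.
\]
Since $\{N/n>\varepsilon_2\}$ is exactly the event appearing in the statement, this completes the argument. I do not expect any genuine obstacle here; the only point to be careful about is to keep the two layers of averaging separate — the sum over the stages $t$ is dispatched by the elementary pointwise inequality above, valid realization by realization, while the randomness of $(m^n,y^n)$ (under the conditional law given $E_\delta=0$) is dispatched by Markov's inequality — and to record that nonnegativity of the divergence is what makes both applications legitimate.
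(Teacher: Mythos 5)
Your proof is correct and follows essentially the same route as the paper: both arguments amount to a double application of Markov's inequality, first to bound the fraction of large per-stage divergences and then to bound the probability that this fraction exceeds $\varepsilon_2$. The only cosmetic difference is that you establish the inner bound $\E[N/n]\le \varepsilon_0/\varepsilon_1$ by a pointwise (empirical) Markov inequality before taking expectations, whereas the paper first applies the outer Markov, then linearity, and then a probabilistic Markov on each $D_t$ — the two orders of operations yield identical arithmetic.
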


\begin{proof}[Lemma \ref{lemma:MarkovInequalityZ}]
We denote by $D_t =  D\big(  \prob_{\sigma}(\rv \omega_t| \rv y^n,E_{\delta}=0) \big\|  \prob(\rv \omega_t|\rv m_t) \big)$ and $D^n = \{D_t\}_{t}$ the K-L divergence. We have that: 
\begin{align}
 \prob\Bigg( \frac{\#}{n} \bigg\{t , \text{ s.t. } D_t >   \varepsilon_1\bigg\} > \varepsilon_2 \Bigg)
=& \prob\Bigg( \frac{1}{n} \cdot \sum_{t=1}^n \UN\bigg\{D_t>   {\varepsilon_1} \bigg\} > {\varepsilon_2}\Bigg) \label{eq:twiceMarkov1}\\
\leq& \frac{ \E\bigg[ \frac{1}{n} \cdot \sum_{t=1}^n \UN\Big\{D_t>   {\varepsilon_1} \Big\}\bigg]}{\varepsilon_2} \label{eq:twiceMarkov2}\\
=& \frac{ \frac{1}{n} \cdot \sum_{t=1}^n  \E\bigg[ \UN\Big\{D_t>   {\varepsilon_1} \Big\}\bigg]}{\varepsilon_2} \label{eq:twiceMarkov3}\\
=& \frac{ \frac{1}{n} \cdot \sum_{t=1}^n  \prob\Big( D_t>   {\varepsilon_1}\Big)}{\varepsilon_2} \label{eq:twiceMarkov4}\\
\leq& \frac{ \frac{1}{n} \cdot \sum_{t=1}^n  \frac{\E[ D_t]}{\varepsilon_1}}{\varepsilon_2} \label{eq:twiceMarkov5} \\
=& \frac{1}{\varepsilon_1 \cdot \varepsilon_2} \cdot \E\bigg[\frac{1}{n} \cdot \sum_{t=1}^n  D_t\bigg] \leq \frac{\varepsilon_0}{\varepsilon_1 \cdot \varepsilon_2}. \label{eq:twiceMarkov6}
\end{align}
- Equations \eqref{eq:twiceMarkov1},  \eqref{eq:twiceMarkov3},  \eqref{eq:twiceMarkov4},  \eqref{eq:twiceMarkov6} are reformulations of probabilities and expectations.\\
- Equations \eqref{eq:twiceMarkov2},  \eqref{eq:twiceMarkov5}, come from Markov's inequality $\prob(X\geq\alpha)\leq \E[X]/\alpha$.
\qed\end{proof}

%%%%%%%%%%%%%%%%%%%%%%%%%%%%%%%%%%%%%%%%%%%%%%%%%%%%%%%%%%%%%%%%%%%%%%%%%%%%%%%%%%%%%%%%%%%%%%%%%%%%%%%%%%%%%%%%%%%%%%%%

%\newpage

\begin{lemma}\label{lemma:IID}
Consider an i.i.d. random sequence $\rv \omega^n$. For all $\varepsilon>0$, there exists  $\bar{n}\in\N$ such that for all $n\geq\bar{n}$ we have:
\begin{align}
H(\rv \omega^n| E_{\delta}=0) \geq &\, n \cdot \bigg( H(\rv \omega) - \varepsilon \bigg).
\end{align}
\end{lemma}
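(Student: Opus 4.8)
The plan is to obtain the bound from the grouping (chain) rule for entropy, using only three crude facts: $E_\delta$ is a binary random variable, $\rv\omega^n$ is supported on $\Omega^n$, and $\prob_\sigma(E_\delta=1)$ can be made small for $n$ large by the coding construction, as recorded in \eqref{eq:BoundError0}.

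First I would write, using that the coordinates of $\rv\omega^n$ are i.i.d.,
\[
nH(\rv\omega)=H(\rv\omega^n)\le H(\rv\omega^n,E_\delta)=H(E_\delta)+H(\rv\omega^n\mid E_\delta),
\]
and expand $H(\rv\omega^n\mid E_\delta)=\prob_\sigma(E_\delta=0)\,H(\rv\omega^n\mid E_\delta=0)+\prob_\sigma(E_\delta=1)\,H(\rv\omega^n\mid E_\delta=1)$. Then I bound the auxiliary terms: $H(E_\delta)\le 1$ (binary entropy); $H(\rv\omega^n\mid E_\delta=1)\le n\log_2|\Omega|$ (uniform bound over $\Omega^n$); and $\prob_\sigma(E_\delta=0)\,H(\rv\omega^n\mid E_\delta=0)\le H(\rv\omega^n\mid E_\delta=0)$ since $\prob_\sigma(E_\delta=0)\le 1$ and conditional entropy is nonnegative. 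Rearranging gives the sharp inequality
\[
H(\rv\omega^n\mid E_\delta=0)\ \ge\ nH(\rv\omega)-1-n\log_2|\Omega|\cdot\prob_\sigma(E_\delta=1),
\]
which is exactly what is invoked to justify \eqref{eq:KLdiv4}; conditioning everything additionally on $\rv m^n$ and running the identical computation yields the conditional–mutual–information variant used at \eqref{eq:KLdiv5}.

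To recover the stated form, given $\varepsilon>0$ I would first fix the error level $\varepsilon_2\le \varepsilon/(2\log_2|\Omega|)$ in the construction, so that \eqref{eq:BoundError0} supplies an $\bar n$ (and corresponding $\bar k$) with $\prob_\sigma(E_\delta=1)\le\varepsilon_2$ for $n\ge\bar n$; enlarging $\bar n$ if necessary so that $\bar n\ge 2/\varepsilon$ makes $1+n\log_2|\Omega|\cdot\prob_\sigma(E_\delta=1)\le n\varepsilon/2+n\varepsilon/2=n\varepsilon$, hence $H(\rv\omega^n\mid E_\delta=0)\ge n(H(\rv\omega)-\varepsilon)$. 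The degenerate case $|\Omega|=1$ is immediate since then $H(\rv\omega^n\mid E_\delta=0)=0=nH(\rv\omega)$. I do not expect any real obstacle here — this is a routine Fano-type estimate; the only point requiring care is the order in which constants are chosen ($\varepsilon_2$ before $\bar n$) so that the resulting threshold depends only on $\varepsilon$ and the data of the problem.
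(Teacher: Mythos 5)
Your argument is correct and is essentially the same Fano-type estimate the paper uses: both proofs expand $H(\rv\omega^n,E_\delta)$ via the chain rule, bound $H(\rv\omega^n\mid E_\delta=1)\le n\log_2|\Omega|$, and absorb the remaining $O(1)$ and $O(n\cdot\prob(E_\delta=1))$ terms into $n\varepsilon$ once $\prob(E_\delta=1)$ is driven small by \eqref{eq:BoundError0}. The only cosmetic difference is that you bound $H(E_\delta)\le 1$ crudely while the paper notes the binary entropy $H(\prob(E_\delta=1))$ itself tends to zero; both are sufficient since the term is divided by $n$ anyway.
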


\begin{proof}[Lemma \ref{lemma:IID}]
\begin{align}
H(\rv \omega^n| E_{\delta}=0) %=&H(\rv \omega^n|  E_{\delta}=0) \label{eq:LemmaIID1}\\
=&\frac{1}{\prob( E_{\delta}=0)} \cdot \bigg(  H(\rv \omega^n|  E_{\delta}=1) -  \prob(E_{\delta}=1) \cdot H(\rv \omega^n| E_{\delta}=1) \bigg)\label{eq:LemmaIID2}\\
\geq& H(\rv \omega^n |  E_{\delta}) -  \prob(E_{\delta}=1) \cdot H(\rv \omega^n| E_{\delta}=1)\label{eq:LemmaIID3}\\
\geq& H(\rv \omega^n) - H(E_{\delta}) -  \prob(E_{\delta}=1) \cdot H(\rv \omega^n| E_{\delta}=1)\label{eq:LemmaIID4}\\
\geq& H(\rv \omega^n) - n \cdot  \varepsilon.\label{eq:LemmaIID5}
\end{align}
- Equation \eqref{eq:LemmaIID2} comes from the definition of the conditional entropy.\\
- Equation \eqref{eq:LemmaIID3} comes from the property $\prob(E_{\delta}=0)\leq1$.\\
- Equation \eqref{eq:LemmaIID4} comes from the property $H(\rv \omega^n |  E_{\delta})  = H(\rv \omega^n, E_{\delta}) - H(E_{\delta}) \geq  H(\rv \omega^n) - H(E_{\delta}) $.\\
- Equation \eqref{eq:LemmaIID5} comes from the i.i.d. property of the state  $\omega$ and the definition of the error event $E_{\delta}=1$. Hence, for all $\varepsilon$, there exists a $\bar{n}\in\N$ such that for all $n\geq \bar{n}$ we have: $H(\prob(E_{\delta}=1)) +  \prob(E_{\delta}=1) \cdot \log_2|\Omega| \leq \varepsilon$.
\qed\end{proof}

\end{document}